\documentclass[sigconf, nonacm]{acmart}

\setcopyright{none}
\acmConference[LICS]{41st Annual ACM/IEEE Symposium on Logic in Computer Science}{}{Lisbon, Portugal}
\acmDOI{}
\acmISBN{}

\usepackage{mathtools}
\usepackage{mdframed}
\usepackage{graphicx}

\newcommand{\leftmapsto}{\leftarrow\!\shortmid}
\newcommand{\sem}{[-]}
\newcommand{\interp}[1]{\left\llbracket#1\right\rrbracket}

\usepackage{tikz}
\usepackage{tikz-cd}
\usetikzlibrary{graphs,decorations.pathmorphing,decorations.markings}
\usetikzlibrary{calc}
\usepackage{tikzit}
\usepackage{caption}
\usepackage{subcaption}
\usepackage{relsize}

\usepackage{quiver}
\usepackage{thmtools, thm-restate}

\input{quantum.tikzdefs}
\definecolor{green}{HTML}{ccffcc}
\definecolor{red}{HTML}{ff8888}
\definecolor{grey}{RGB}{211,211,211}
\tikzstyle{blue label}=[text=blue]
\tikzstyle{gate}=[shape=rectangle, text height=1.5ex, text depth=0.25ex, yshift=0.5mm, fill=white, draw=black, minimum height=3mm, yshift=-0.5mm, minimum width=3mm, font={\small}, tikzit category=circuit]
\tikzstyle{big gate}=[shape=rectangle, text height=1.5ex, text depth=0.25ex, yshift=0.5mm, fill=white, draw=black, minimum height=10mm, yshift=-0.5mm, minimum width=5mm, font={\small}, tikzit category=circuit]
\tikzstyle{Z dot}=[inner sep=0mm, minimum size=2mm, shape=circle, draw=black, fill=white, tikzit category=zx]
\tikzstyle{Z phase dot}=[minimum size=5mm, font={\footnotesize\boldmath}, shape=rectangle, rounded corners=2mm, inner sep=0.2mm, outer sep=-2mm, scale=0.8, tikzit shape=circle, draw=black, fill=white, tikzit draw=blue, tikzit category=zx]
\tikzstyle{X dot}=[Z dot, shape=circle, draw=black, fill={gray!40!white}, tikzit category=zx]
\tikzstyle{X phase dot}=[Z phase dot, tikzit shape=circle, tikzit draw=blue, fill={gray!40!white}, font={\footnotesize\boldmath}, tikzit category=zx]
\tikzstyle{mult}=[fill=yellow, inner sep=0mm, minimum size=1.75mm, shape=semicircle, draw=black, rotate=-90, tikzit category=zx]
\tikzstyle{multop}=[mult, rotate=180, tikzit category=zx]
\tikzstyle{umult}=[mult, rotate=90, tikzit category=zx]
\tikzstyle{dmult}=[mult, rotate=270, tikzit category=zx]
\tikzstyle{kscalar}=[draw, fill=white, rounded corners=0.9ex, inner sep=0.25em, scale=1, font={\scriptsize}, rectangle round south west=false, rectangle round north west=false, tikzit fill={rgb,255: red,129; green,253; blue,255}, tikzit shape=rectangle]
\tikzstyle{kscalarop}=[draw, fill=white, rounded corners=0.9ex, inner sep=0.25em, scale=1, font={\scriptsize}, rectangle round north east=false, rectangle round south east=false, tikzit fill={rgb,255: red,116; green,172; blue,255}]
\tikzstyle{ukscalar}=[draw, fill=white, rounded corners=0.9ex, inner sep=0.25em, scale=1, font={\scriptsize}, rectangle round south west=false, rectangle round south east=false]
\tikzstyle{dkscalar}=[draw, fill=white, rounded corners=0.9ex, inner sep=0.25em, scale=1, font={\scriptsize}, rectangle round north west=false, rectangle round north east=false]
\tikzstyle{rexpon}=[fill=white, draw=black, shape=regular polygon, regular polygon sides=3, regular polygon rotate=30, scale=0.7, inner sep=1pt, tikzit category=circuit, tikzit shape=rectangle, tikzit fill=red]
\tikzstyle{lexpon}=[fill=white, draw=black, shape=regular polygon, regular polygon sides=3, regular polygon rotate=-30, scale=0.7, inner sep=1pt, tikzit category=circuit, tikzit shape=rectangle, tikzit fill=green]
\tikzstyle{antip}=[fill=black, draw=black, shape=regular polygon, regular polygon sides=3, regular polygon rotate=30, scale=0.5]
\tikzstyle{recip}=[fill=yellow, draw=white, shape=circle, tikzit category=circuit, label={[rotate=45]center:$\ominus$}, inner sep=0pt, minimum width=2.1mm, tikzit fill={rgb,255: red,102; green,204; blue,255}, tikzit draw=black]
\tikzstyle{hadamard}=[fill=yellow, draw=black, shape=rectangle, inner sep=0.6mm, minimum height=1.5mm, minimum width=1.5mm, tikzit category=zx]
\tikzstyle{paulibox}=[fill={rgb,255: red,221; green,221; blue,255}, draw=black, shape=rectangle, inner sep=0.6mm, minimum height=5mm, minimum width=5mm, font={\footnotesize}, text height=1.5ex, text depth=0.25ex, tikzit category=zx]
\tikzstyle{vertex}=[inner sep=0mm, minimum size=1mm, shape=circle, draw=black, fill=black, tikzit category=misc]
\tikzstyle{vertex set}=[inner sep=0mm, minimum size=1mm, shape=circle, draw=black, fill=white, font={\footnotesize\boldmath}, tikzit category=misc]
\tikzstyle{small black dot}=[fill=black, draw=black, shape=circle, inner sep=0pt, minimum width=1.2mm, tikzit category=circuit]
\tikzstyle{cnot ctrl}=[fill=black, draw=black, shape=circle, inner sep=0pt, minimum width=1.2mm, tikzit category=circuit]
\tikzstyle{cnot targ}=[fill=white, draw=white, shape=circle, tikzit category=circuit, label={center:$\oplus$}, inner sep=0pt, minimum width=2.1mm, tikzit fill={rgb,255: red,102; green,204; blue,255}, tikzit draw=black]
\tikzstyle{ket}=[fill=white, draw=black, shape=regular polygon, regular polygon sides=3, regular polygon rotate=-30, scale=0.7, inner sep=1pt, tikzit category=circuit, tikzit shape=rectangle, tikzit fill=green]
\tikzstyle{bra}=[fill=white, draw=black, shape=regular polygon, regular polygon sides=3, regular polygon rotate=30, scale=0.7, inner sep=1pt, tikzit category=circuit, tikzit shape=rectangle, tikzit fill=red]
\tikzstyle{xket}=[fill={gray!40!white}, draw=black, shape=regular polygon, regular polygon sides=3, regular polygon rotate=60, scale=0.7, inner sep=1pt, tikzit category=circuit, tikzit shape=rectangle, tikzit fill=green]
\tikzstyle{xbra}=[fill={gray!40!white}, draw=black, shape=regular polygon, regular polygon sides=3, regular polygon rotate=0, scale=0.7, inner sep=1pt, tikzit category=circuit, tikzit shape=rectangle, tikzit fill=red]
\tikzstyle{scalar}=[shape=rectangle, text height=1.5ex, text depth=0.25ex, yshift=0.5mm, fill=white, draw=black, minimum height=5mm, yshift=-0.5mm, minimum width=5mm, font={\footnotesize}, scale=0.8]
\tikzstyle{clabel}=[fill=white, draw=none, shape=rectangle, tikzit fill={rgb,255: red,56; green,255; blue,242}, font={\footnotesize}, inner sep=1pt, tikzit category=labels]
\tikzstyle{empty diagram}=[draw={gray!40!white}, dashed, shape=rectangle, minimum width=0.6cm, minimum height=0.6cm, tikzit category=misc]
\tikzstyle{lmat}=[draw, signal, fill=white, signal to=west, signal from=east, minimum width=1mm, minimum height=6pt, inner sep=0.75pt, outer sep=-0.1em, font={\scriptsize}]
\tikzstyle{rmat}=[lmat, signal to=east, signal from=west]
\tikzstyle{umat}=[lmat, signal to=north, signal from=south, minimum height=1mm, minimum width=6pt]
\tikzstyle{dmat}=[umat, signal to=south, signal from=north]
\tikzstyle{idealket}=[fill=white, draw=black, shape=regular polygon, regular polygon sides=3, regular polygon rotate=0, scale=0.8, inner sep=1pt, font={\footnotesize}, tikzit category=circuit, tikzit shape=rectangle, tikzit fill=green]
\tikzstyle{didealket}=[fill=white, draw=black, shape=regular polygon, regular polygon sides=3, regular polygon rotate=180, scale=0.8, inner sep=1pt, font={\footnotesize}, tikzit category=circuit, tikzit shape=rectangle, tikzit fill=green]

\tikzstyle{hadamard edge}=[-, dashed, dash pattern=on 2pt off 0.5pt, thick, draw={rgb,255: red,68; green,136; blue,255}]
\tikzstyle{box edge}=[-, dashed, dash pattern=on 2pt off 0.5pt, thick, draw={rgb,255: red,203; green,192; blue,225}]
\tikzstyle{brace edge}=[-, tikzit draw=blue, decorate, decoration={brace,amplitude=1mm,raise=-1mm}]
\tikzstyle{diredge}=[->]
\tikzstyle{double edge}=[-, double, shorten <=-1mm, shorten >=-1mm, double distance=2pt]
\tikzstyle{gray edge}=[-, {gray!60!white}]
\tikzstyle{pointer edge}=[->, very thick, gray]
\tikzstyle{boldedge}=[-, line width=1.6pt, shorten <=-0.17mm, shorten >=-0.17mm]
\tikzstyle{bidir edge}=[<->, very thick, draw={rgb,255: red,191; green,191; blue,191}]
\tikzstyle{blue edge}=[-, blue]

\newtheorem{theorem}{Theorem}[section]

\newtheorem{corollary}[theorem]{Corollary}
\newtheorem{lemma}[theorem]{Lemma}
\newtheorem{prop}[theorem]{Proposition}

\theoremstyle{definition}
\newtheorem{definition}[theorem]{Definition}

\newtheorem{remark}[theorem]{Remark}

\newtheorem*{notation}{Notation}

\DeclareMathOperator\tr{tr}

\DeclareMathOperator\Kl{Kl}
\DeclareMathOperator\Span{Span}
\DeclareMathOperator\Csp{Cosp}
\DeclareMathOperator\nil{nil}
\DeclareMathOperator\Spec{Spec}
\DeclareMathOperator\Mat{Mat}

\newcommand{\bSpan}{\mathbf{Sp}}
\newcommand{\bCsp}{\mathbf{Csp}}
\newcommand{\FrSpk}{_G \Span(\Affk)}
\newcommand{\FrSpq}{_{G_q} \Span(\Affqq)}
\newcommand{\FrCsp}{_F \Csp(\CAlgfgk)}
\newcommand{\FrCspq}{_F \Csp(\CAlgfgq)}
\newcommand{\kbar}{\overline{k}}

\newcommand{\Fq}{\mathbb F_q}
\newcommand{\Fqbar}{\overline{\mathbb F}_q}
\newcommand{\Affk}{\mathsf{AffVar}_k}

\newcommand{\Affqq}{\mathsf{AffVar}_{\mathbb F_q}^{(\mathbb F_q)}}
\newcommand{\Polyk}{\mathsf{Poly}_k}

\newcommand{\Polyqqred}{\mathsf{Poly}_{\Fq}^{\mathsf{qred.}}}
\newcommand{\CAlgk}{\mathsf{CAlg}_k}
\newcommand{\CAlgfgk}{\mathsf{CAlg}^{\mathsf{f.g.}}_k}
\newcommand{\CAlgfgredk}{\mathsf{CAlg}^{\mathsf{f.g.},\ \mathsf{red.}}_k}

\newcommand{\CAlgfgq}{\mathsf{CAlg}^{\mathsf{f.g.}}_{\Fq}}

\newcommand{\CAlgfgqredq}{\mathsf{CAlg}^{\mathsf{f.g.},\ \mathsf{qred.}}_{\Fq}}
\newcommand{\GagOne}{\LangSpq^{\mathsf{Fourier}}}

\newcommand{\LangSpan}{\mathbb L_{\Span}}
\newcommand{\LangSpk}{\mathsf{GAG}_k}
\newcommand{\LangSpq}{\mathsf{GAG}_q}
\newcommand{\LangCsp}{\mathsf{GCA}_k}

\newcommand{\semqMat}[1][-]{\interp{#1}_q}
\newcommand{\semZOne}[1][-]{\interp{#1}_{q}^{\mathsf{Fourier}}}

\newcommand{\ZH}{\mathsf{ZH}}

\newcommand{\ZHsem}[1][-]{\interp{#1}_{\mathsf{ZH}}}
\newcommand{\ZHtargetMats}{q^{\frac{*}{2}} \Mat_{\mathbb Z[\omega]}}

\newcommand{\zassoc}{{\tiny(\hyperref[fig:lawvere-theory-calg-k]{\textsc{z-assoc}})}}
\newcommand{\zsymm}{{\tiny(\hyperref[fig:lawvere-theory-calg-k]{\textsc{z-symm}})}}
\newcommand{\zunit}{{\tiny(\hyperref[fig:lawvere-theory-calg-k]{\textsc{z-unit}})}}
\newcommand{\zfusion}{{\tiny(\hyperref[fig:lawvere-theory-calg-k]{\textsc{z-fusion}})}}

\newcommand{\xassoc}{{\tiny(\hyperref[fig:lawvere-theory-calg-k]{\textsc{x-assoc}})}}
\newcommand{\xsymm}{{\tiny(\hyperref[fig:lawvere-theory-calg-k]{\textsc{x-symm}})}}
\newcommand{\xunit}{{\tiny(\hyperref[fig:lawvere-theory-calg-k]{\textsc{x-unit}})}}
\newcommand{\xfusion}{{\tiny(\hyperref[fig:lawvere-theory-calg-k]{\textsc{x-fusion}})}}

\newcommand{\massoc}{{\tiny(\hyperref[fig:lawvere-theory-calg-k]{\textsc{m-assoc}})}}
\newcommand{\msymm}{{\tiny(\hyperref[fig:lawvere-theory-calg-k]{\textsc{m-symm}})}}
\newcommand{\munit}{{\tiny(\hyperref[fig:lawvere-theory-calg-k]{\textsc{m-unit}})}}
\newcommand{\mfusion}{{\tiny(\hyperref[fig:lawvere-theory-calg-k]{\textsc{m-fusion}})}}

\newcommand{\addcp}{{\tiny(\hyperref[fig:lawvere-theory-calg-k]{\textsc{add-cp}})}}
\newcommand{\adddel}{{\tiny(\hyperref[fig:lawvere-theory-calg-k]{\textsc{add-del}})}}
\newcommand{\zerocp}{{\tiny(\hyperref[fig:lawvere-theory-calg-k]{\textsc{zero-cp}})}}
\newcommand{\zerodel}{{\tiny(\hyperref[fig:lawvere-theory-calg-k]{\textsc{zero-del}})}}
\newcommand{\zxbialg}{{\tiny(\hyperref[fig:lawvere-theory-calg-k]{\textsc{zx-bialg}})}}

\newcommand{\multcp}{{\tiny(\hyperref[fig:lawvere-theory-calg-k]{\textsc{mult-cp}})}}
\newcommand{\multdel}{{\tiny(\hyperref[fig:lawvere-theory-calg-k]{\textsc{mult-del}})}}
\newcommand{\onecp}{{\tiny(\hyperref[fig:lawvere-theory-calg-k]{\textsc{one-cp}})}}
\newcommand{\onedel}{{\tiny(\hyperref[fig:lawvere-theory-calg-k]{\textsc{one-del}})}}
\newcommand{\zmbialg}{{\tiny(\hyperref[fig:lawvere-theory-calg-k]{\textsc{zm-bialg}})}}

\newcommand{\xmdistr}{{\tiny(\hyperref[fig:lawvere-theory-calg-k]{\textsc{xm-distr}})}}

\newcommand{\kassoc}{{\tiny(\hyperref[fig:lawvere-theory-calg-k]{\textsc{k-assoc}})}}
\newcommand{\oneInK}{{\tiny(\hyperref[fig:lawvere-theory-calg-k]{\textsc{one-in-k}})}}
\newcommand{\zeroInK}{{\tiny(\hyperref[fig:lawvere-theory-calg-k]{\textsc{zero-in-k}})}}
\newcommand{\kcopy}{{\tiny(\hyperref[fig:lawvere-theory-calg-k]{\textsc{k-copy}})}}
\newcommand{\kdistr}{{\tiny(\hyperref[fig:lawvere-theory-calg-k]{\textsc{k-distr}})}}
\newcommand{\kadd}{{\tiny(\hyperref[fig:lawvere-theory-calg-k]{\textsc{k-add}})}}
\newcommand{\kdel}{{\tiny(\hyperref[fig:lawvere-theory-calg-k]{\textsc{k-del}})}}
\newcommand{\kOnZero}{{\tiny(\hyperref[fig:lawvere-theory-calg-k]{\textsc{k-on-zero}})}}
\newcommand{\km}{{\tiny(\hyperref[fig:lawvere-theory-calg-k]{\textsc{k-m}})}}

\newcommand{\polycp}{{\tiny(\hyperref[fig:scalable-rewrites]{\textsc{poly-copy}})}}
\newcommand{\polydel}{{\tiny(\hyperref[fig:scalable-rewrites]{\textsc{poly-del}})}}
\newcommand{\polyadd}{{\tiny(\hyperref[fig:scalable-rewrites]{\textsc{poly-add}})}}
\newcommand{\polymult}{{\tiny(\hyperref[fig:scalable-rewrites]{\textsc{poly-mult}})}}
\newcommand{\polycomp}{{\tiny(\hyperref[fig:scalable-rewrites]{\textsc{poly-comp}})}}

\newcommand{\kinv}{{\tiny(\hyperref[fig:rules-for-csp]{\textsc{k-inv}})}}
\newcommand{\xbone}{{\tiny(\hyperref[fig:rules-for-csp]{\textsc{x-bone}})}}
\newcommand{\zfrob}{{\tiny(\hyperref[fig:rules-for-csp]{\textsc{z-frob}})}}
\newcommand{\xfrob}{{\tiny(\hyperref[fig:rules-for-csp]{\textsc{x-frob}})}}
\newcommand{\cccup}{{\tiny(\hyperref[fig:rules-for-csp]{\textsc{cup}})}}
\newcommand{\cccap}{{\tiny(\hyperref[fig:rules-for-csp]{\textsc{cap}})}}
\newcommand{\ktrp}{{\tiny(\hyperref[fig:rules-for-csp]{\textsc{k-trp}})}}
\newcommand{\mtrp}{{\tiny(\hyperref[fig:rules-for-csp]{\textsc{k-trp}})}}
\newcommand{\onetrp}{{\tiny(\hyperref[fig:rules-for-csp]{\textsc{one-trp}})}}

\newcommand{\ZOneCopy}{{\tiny(\hyperref[fig:gag-one-rules]{\textsc{Z1-copy}})}}
\newcommand{\ZOneDel}{{\tiny(\hyperref[fig:gag-one-rules]{\textsc{Z1-del}})}}
\newcommand{\ZOneTrace}{{\tiny(\hyperref[fig:gag-one-rules]{\textsc{Z1-trace}})}}

\newenvironment{subsubfigure}[2][]{%
  \begin{subfigure}[#1]{#2}%
    \stepcounter{subsubfigure}%
}{%
    \addtocounter{subfigure}{-1}%
  \end{subfigure}%
}

\newcounter{subsubfigure}

\allowdisplaybreaks

\title[Graphical Algebraic Geometry]{Graphical Algebraic Geometry\\ {\smaller \smaller From Ideals and Varieties to Quantum Calculi}}
\author{Dichuan (David) Gao}
\affiliation{%
  \institution{University of Oxford}
  \city{Oxford}
  \country{UK}}
\author{Razin A. Shaikh}
\affiliation{%
  \institution{University of Oxford}
  \city{Oxford}
  \country{UK}}
\author{Aleks Kissinger}
\affiliation{%
  \institution{University of Oxford}
  \city{Oxford}
  \country{UK}}

\begin{document}

\begin{abstract}
  We introduce \emph{Graphical Algebraic Geometry} (GAG), a family of diagrammatic languages extending the Graphical Linear Algebra programme.
  We construct several languages within this family and prove that they are universal and complete for the corresponding (co)span semantics of commutative algebras and affine varieties.
  This framework provides clear graphical representations of algebraic structures --- such as polynomials, ideals, and varieties --- enabling intuitive yet rigorous diagrammatic reasoning.

  We showcase two practical viewpoints on GAG. First, we show that instances of counting constraint satisfaction problem (\#CSP) are recast as rewrite problems of closed diagrams in GAG. This means that deciding rewritability in GAG is \#P-hard, and GAG can be viewed as a complete and compositional rewrite system for networks of polynomial constraints. Second, we characterize the qudit ZH calculus, a diagrammatic language for quantum computation, as an extension of Graphical Algebraic Geometry.
  This establishes the correspondence that \emph{Graphical Algebraic Geometry is to the ZH calculus what Graphical Linear Algebra is to the ZX calculus}.
  Using this construction, we show that computing amplitudes in qudit ZH requires only a constant number of queries to a GAG oracle.
\end{abstract}

\maketitle

\footnotetext{Correspondence to dichuan.gao@cs.ox.ac.uk}

\section{Introduction}
\label{sec:introduction}

\paragraph{Algebraic geometry in computer science}
Algebraic geometry and commutative algebra have been important influences in several fields in computer science, for example in robotics \cite{ahmadi_geometry_2017}, computer vision \cite{hartley_multiple_2004}, and in cryptography \cite{galbraith_mathematics_2012}. In the converse direction, computational techniques have become influential in algebraic geometry with the advent of computer algebra systems \cite{cox_ideals_2015}.

Of particular interest to us is the use of commutative algebra and geometry in two related areas of research. The first, and perhaps more obvious one, is that of Constraint Satisfaction Problems (CSP) \cite{tsang_foundations_1993}. A CSP instance where constraints are given as polynomials is a problem about an algebraic variety. In particular, in the case where the domain of the CSP is a finite field $\Fq$, the corresponding \#CSP instance \cite{jerrum_counting_2017} is to count $\Fq$-rational points in a variety defined by polynomial constraints --- a classical and central problem in 20th century algebraic geometry \cite{hasse_zur_1936,weil_numbers_1949}. More recently, cryptographic proof systems such as SNARK and STARK have relied heavily on algebraic techniques over finite fields, demonstrating the continuing relevance of this approach \cite{ben-sasson_scalable_2018,groth_size_2016}. 

The second area is quantum information theory. It has long been known that algebraic geometry can be used to design classical codes with good asymptotic behavior \cite{reed_polynomial_1960,berlekamp_algebraic_1968,goppa_codes_1981,tsfasman_modular_1982,hoholdt_decoding_1995,guruswami_improved_1999,vladut_number_1983}, and that via the CSS or stabilizer constructions these classical AG codes lift to quantum error correction (QEC) codes with equally good properties \cite{ashikhmin_asymptotically_2001,matsumoto_algebraic_2002,matsumoto_improvement_2002, galindo_quantum_2015,jin_euclidean_2013,jin_quantum_2014,sarvepalli_nonbinary_2006,shaska_quantum_2008}.
Such codes also played a central role in the groundbreaking construction of a magic-state distillation protocol with constant overhead \cite{wills_constant-overhead_2024}.
Aside from error correction, algebraic geometry has also served as a useful language for foundational areas of research such as quantum entanglement and quantum contextuality \cite{coffman_distributed_1999,miyake_classification_2003,landsberg_ideals_2003,gharahi_classifying_2024,frembs_algebraic_2024}.

\paragraph{Graphical linear algebra and its friends}
To systematically understand the role algebraic geometry can play in quantum information theory, we propose that graphical calculi would be useful. Graphical calculi are compositional diagrammatic languages based in category theory, equipped with formal ``rewrite rules'' with which one can manipulate diagrams rigorously as equational terms. For an in-depth exposition of graphical calculi and rewrite theory, see \cite{bonchi_string_2022-1,bonchi_string_2022-2,bonchi_string_2022-3}. Such calculi are usually designed to represent some family of physical phenomena or computational processes. For instance, the ZX calculus and its related calculi have become prominent as languages for representing quantum computational processes \cite{coeckeInteractingQuantumObservables2008,coecke_picturing_2017,debeaudrapZXCalculusLanguage2020,kissinger_picturing_2024, kissingerReducingTcountZXcalculus2020, poorZXcalculusCompleteFiniteDimensional2025}.

When considering the classical algebraic structure inherent in quantum computing, we often encounter fragments of the quantum calculi that are best characterized by languages in the family called Graphical Linear Algebra (GLA) \cite{bonchi_interacting_2014,bonchi_interacting_2017,zanasi_interacting_2018}. Broadly speaking, GLA consists of diagrammatic languages that have denotational semantics built from classical algebraic structure. On the semantic side, the GLA story begins with a category of classical maps: for example, the linear maps between $\mathbb F_2$ spaces. It then applies the (co)span or the (co)relation construction to that category to obtain a dagger-category. For example, $\mathbb F_2$ linear maps give rise to spans of $\mathbb F_2$ linear maps, or to $\mathbb F_2$ linear subsets. On the syntactic side, it builds a simple diagrammatic language with generators that have clear semantic interpretations, and rewrite rules that characterize denotational equivalence.

Languages in the graphical linear algebra family are excellent at picking out fragments of quantum computation that correspond to some given notion of a ``classical-relational backbone''. For example, the CSS codes resulting from classical $\mathbb F_2$-linear codes correspond to the phase-free fragment of the ZX calculus \cite{kissinger_phase-free_2022}, which is picked out by the language of interacting bialgebras \cite{bonchi_interacting_2014}, the language that kickstarted the GLA literature. Similarly, the language of interacting Hopf algebras \cite{bonchi_interacting_2017,zanasi_interacting_2018} picks out the phase-free fragment of the qudit ZX calculus, which models the linear-relational core of qudit computation. Expanding our algebraic structure from linear to affine algebra, the language of graphical affine algebra \cite{bonchi_graphical_2019} picks out the phase-free ZX with a Pauli X gate, which is the affine-relational core of qubit computation \cite{comfort_distributive_2021}. Finally the language of affine Lagrangian relations picks out the stabilizer fragment of the ZX calculus \cite{comfort_graphical_2022}. 

Of course, GLA has applications far beyond quantum computing. For instance, the language for linear relations over a field of rational functions $\mathbb R(x)$ models signal flow graphs and linear dynamical systems \cite{bonchi_full_2015,baez_categories_2015,fong_categorical_2016}, while the language for affine relations models the fundamental structure of concurrency \cite{bonchi_graphical_2019}. The strength of GLA lies in its ability to model a remarkably varied set of physical phenomena using the same formal structure. 

\paragraph{The need for a graphical nonlinear algebra}
However, existing versions of GLA are not yet sufficient to model the fragment of quantum computation with classical algebraic geometry as its ``backbone''; nor can they reason about generic constraint satisfaction problems built on polynomial constraints. This is because commutative algebra involves a multiplication operator (an AND gate in the $\mathbb F_2$ case), which is essentially nonlinear (and non-affine for that matter). A graphical calculus that models classical algebraic geometry would need to supply such an operator, or equivalently, to supply the Toffoli gate.

In the realm of quantum computing, the ZH calculus \cite{backens_zh_2019,backens_completeness_2023} or equivalently the ZX\& calculus \cite{comfort_distributive_2021,comfort_zx-calculus_2021} supplies such a gate for the $\mathbb F_2$ case, and the Galois-qudit ZH calculus \cite{roy_qudit_2023,gao_qudit_2024} serves as the appropriate version in the nonbinary case. The motivating question of this paper is therefore: can we extend GLA in a principled and simple way, to pick out the nonlinear ``classical backbone'' of the ZH calculus, corresponding to classical algebraic geometry?  If so, can we provide a sound and complete set of rewrite rules for that extension?

\paragraph{The solution: GCA and GAG}
The answer is yes. However, unlike in the linear case, in which algebra and geometry coincide, we must deal with a gap between commutative algebra and algebraic geometry, which is bridged by a family of theorems known as the Nullstellensatze \cite{hilbert_ueber_1890,dani_note_2019,hartshorne_algebraic_1977}. Therefore we construct our extension of GLA in two steps.

First, we develop a language $\LangCsp$, which we call graphical commutative algebra (GCA). Diagrams in $\LangCsp$ are built from gates corresponding to copying, addition, scaling, multiplication, and the units of these operators --- the standard generators featured in the theory of commutative algebras. Diagrams can be reduced to a pseudonormal form that looks like
\begin{equation}
    \label{eq:cospan-form-intro}
    \input{./figures/cospan-form-intro.tikz}
\end{equation}
Denotational semantics are given as structured cospans of commutative algebras. In the pseudonormal form (\ref{eq:cospan-form-intro}), the vertical wires denote a commutative algebra quotiented by the ideal $(g_1, \dots, g_n)$, while $f_1$ and $f_2$ denote two algebra morphisms pointing into it. This captures the nonlinear \textit{algebraic} backbone of quantum processes in ZH. Our first main result is a sound and complete rule set for $\LangCsp$ with respect to this cospan semantic.

Second, we develop languages $\LangSpk$ and $\LangSpq$, which we refer to as Graphical Algebraic Geometry (GAG) over algebraically closed fields and finite fields, respectively. Diagrams in these are built from the same generators as diagrams in $\LangCsp$, but they are subject to further rewrite rules that express the content of the Nullstellensatze. Denotational semantics are given respectively, as structured spans of affine varieties, and of $\Fq$-loci of varieties (or, equivalently, of finite sets). We still have the same pseudonormal form (\ref{eq:cospan-form-intro}), but now the vertical wires with polynomials $g_1, \dots, g_n$ denote an affine variety carved out by polynomial constraints, and $f_1, f_2$ are polynomial maps pointing out of it. This captures the nonlinear \textit{geometric} backbone of quantum processes in ZH. Our second main result is that these languages are sound and complete with respect to their span semantics.

Our languages bear a clear relation to CSP instances with polynomial constraints. As will become clear, a diagram of the form $\begin{tikzpicture}
	\begin{pgfonlayer}{nodelayer}
		\node [style=rmat] (0) at (0, 0) {$g$};
		\node [style=X dot] (1) at (1, 0) {};
		\node [style=none] (2) at (-1, 0) {};
	\end{pgfonlayer}
	\begin{pgfonlayer}{edgelayer}
		\draw (1) to (2.center);
	\end{pgfonlayer}
\end{tikzpicture}
$
just \textit{is} the CSP instance with constraints $g = 0$, while the diagram $\begin{tikzpicture}
	\begin{pgfonlayer}{nodelayer}
		\node [style=rmat] (0) at (0, 0) {$g$};
		\node [style=X dot] (1) at (1, 0) {};
		\node [style=Z dot] (2) at (-1, 0) {};
	\end{pgfonlayer}
	\begin{pgfonlayer}{edgelayer}
		\draw (1) to (2);
	\end{pgfonlayer}
\end{tikzpicture}
$ is the \#CSP problem of counting the number of points satisfying $g$. Our language, then, gives a compositional and complete rewrite system for networks of CSP instances with polynomial constraints. Through this view of GAG, we show that determining rewritability of an arbitrary pair of GAG diagrams is \#P-hard.
In the special case where the field is $\mathbb F_2$, this just reduces to a compositional language for SAT instances. This serves as an algebraic elucidation of two ideas already present in the literature: first, that the ZX\& calculus completely models ``qubit multirelations'' \cite{comfort_zx-calculus_2021}; and second, that the ZH calculus can be used to model \#SAT instances, and to reason about optimizations of \#SAT algorithms \cite{laakkonen_picturing_2023,laakkonen_graphical_2024}.

On the other hand, we justify our claim that \emph{GAG is to ZH what GLA is to ZX} by characterizing the Galois-qudit ZH as an extension of GAG over finite fields. Indeed, we show that once we have access to the ``classical backbone'' carved out by GAG, any process in ZH can be built by consuming just a single basis state in the Fourier basis, up to a finite number of global scalars. 

\paragraph{Outline of the paper}
We begin in Section~\ref{sec:background} by covering the requisite theoretical background on category theory and algebraic geometry. In Section~\ref{section:poly-fragment}, we discuss the Lawvere theory of commutative algebras, which serves as a building block (the ``forward fragment'') of our languages. We then construct the (co)span semantic categories of interest, namely, the structured cospan category of commutative algebras, and the structured span category of affine varieties, in Section~\ref{section:semantic-targets}. We prove our first main result in Section~\ref{sec:cospans-of-calg}, where we construct the language $\LangCsp$ and prove its soundness and completeness for cospans of commutative algebras. The second main result is proven in Section~\ref{sec:structured-spans-of-alg-varieties}, where we construct the languages $\LangSpk$ and $\LangSpq$, and prove their soundness and completeness for spans of affine varieties and of rational loci, respectively. In Section~\ref{sec:GAG-to-CSP}, we show that closed diagrams in GAG are equivalent to instances of $\#\mathsf{CSP}$, and therefore rewriting in GAG is \#P-hard. Finally, in Section~\ref{sec:from-GAG-to-ZH} we show how the Galois-qudit ZH can be built as an extension of GAG.

\paragraph{Related Work} 
There have been several previous works which present diagrammatic languages of a similar flavor, although never from an explicitly algebraic-geometric angle. Lafont has presented a graphical calculus \cite{lafont_towards_2003} of Boolean circuits which is equivalent to our ``polynomial fragment'' (Section~\ref{section:poly-fragment}), but modulo the additional rule that encodes the identity $x^2 = x$ for boolean variables. As will become clear in section~\ref{sec:spans-over-ff}, this makes it a ``forward fragment'' of the language $\LangSpan$ constructed in that section, for the finite field~$\mathbb F_2$. Wilson and Zanasi~\cite{wilson_axiomatic_2023} have presented a diagrammatic language which formalizes differentiation of polynomials. Underlying that language is a notion of polynomial circuits, which is again identical to our polynomial fragment in section~\ref{section:poly-fragment}, defined over arbitrary semirings. Finally, the ZX\&-calculus of Comfort \cite{comfort_zx-calculus_2021} is the same language as the language $\LangSpq$ constructed in section~\ref{sec:spans-over-ff} for the field $\mathbb F_2$. However, the completeness proofs for the ZX\&-calculus are tailored for that specific field, so in comparison, our approach serves to elucidate the algebraic nature of the language even in the case of $\mathbb F_2$.

\section{Background}
\label{sec:background}

\subsection{Diagrammatic Languages} 
\label{sec:gla}

In this section we present the theory of PROPs as it relates to the construction of our diagrammatic languages. For a more in-depth and theoretical discussion of PROPs, we recommend the excellent note by Lack \cite{lack_composing_2004}.

\begin{definition}
    \label{def:PROP} 
    \cite{maclane_categorical_1965} A Product-and-Permutation-Category (PROP) is a strict symmetric monoidal category $\mathcal C$ with $Ob(\mathcal C) \cong \mathbb N$, whose monoidal product on objects corresponds to addition of natural numbers.

    A PROP morphism from $\mathcal C$ to $\mathcal D$ is a strict symmetric monoidal functor such that $1_\mathcal C \mapsto 1_\mathcal D$, or in other words, the objects are fixed. A PROP isomorphism is a PROP morphism that is fully faithful. 
\end{definition}

When we speak of a \textbf{presentation} of a PROP, we are speaking about a particular data structure, which we describe here with terminology adopted from the graphical linear algebra literature \cite{bonchi_interacting_2017}. 

\begin{definition}
    A \textbf{signature} is a set $S = \{f_i: n_i \to m_i\}_{i \in \mathcal I}$ where each element has a name, a number of inputs, and a number of outputs, and is pictured as a node in a string diagram: 
    $$\input{./figures/generic-spider.tikz}$$

    A \textbf{diagrammatic term over a signature} $S$ is an inductive structure constructed as in Figure~\ref{fig:diagram-terms}. 

    \begin{figure*}
        \caption{Recursive Definition of Diagrammatic Terms}
        \label{fig:diagram-terms}
        \centering 
        $D, D_1, D_2 \quad \coloneq$\quad \input{./figures/empty-diagram.tikz} \quad\Bigg| \quad \input{./figures/id-wire.tikz} \quad\Bigg| \quad \input{./figures/swap.tikz} \quad \Bigg| \quad \input{./figures/generic-spider-simple.tikz} \quad \Bigg|\quad \input{./figures/generic-compose.tikz} \quad \Bigg| \quad \input{./figures/generic-tensor.tikz}
    \end{figure*}
    
    A \textbf{diagram} over $S$ is an equivalence class of diagrammatic terms, where the equivalence relation is the one generated by the coherence axioms of strict symmetric monoidal categories, under the standard interpretation of string diagrams as morphisms in strict symmetric monoidal categories. 
\end{definition}

\begin{definition}
    \label{def:smt} 
        \cite{bonchi_interacting_2017} A symmetric monoidal theory (SMT) is a pair of sets $(S, E)$, where
        \begin{itemize}
            \item The set $S$ is a signature, whose elements are also referred to as the generators or the spiders; 
            \item The set $E = \{D_j \sim D'_j\}_{j \in \mathcal J}$ is a set of pairs of diagrams over $S$, where each pair relates two diagrams with the same numbers of inputs and outputs. These pairs are called the relations or rewrite rules; 
        \end{itemize}
\end{definition}

\begin{definition}
    For any symmetric monoidal theory $(S, E)$, there is a corresponding PROP $\langle S \rangle/E$, where: 
    \begin{itemize}
        \item Morphisms from $n$ to $m$ are equivalence classes of diagrams over $S$ with $n$ input wires and $m$ output wires, where the equivalence relation is the smallest one containing the axioms $E$ while satisfying the laws of transitivity and substitution; 
        \item Composition is plugging together of wires 
        \begin{equation}
        \input{./figures/generic-compose.tikz}
        \end{equation}
        \item Monoidal product is side-by-side juxtaposition 
        \begin{equation}
            \input{./figures/generic-tensor.tikz}
        \end{equation}
    \end{itemize}
\end{definition}

\begin{definition}
    \label{def:presentation} 
    An SMT $(S, E)$ presents a PROP $\mathcal C$ if there is a PROP isomorphism $\langle S \rangle/E \cong \mathcal C$. 
\end{definition}

Because a presentation, in this sense, lends itself to diagrammatic representation so naturally, we will use \textbf{diagrammatic language} for $\mathcal C$ as a synonym for a presentation of $\mathcal C$. 

Often we have a concrete PROP $\mathcal C$ in mind (for example, the PROP $\Mat_R$ of finite matrices over a ring $R$), and we want to find a presentation for it. It is useful, then, to think of this as constructing a ``syntax category'' $\langle S \rangle / E$, and a ``semantics functor'' $\sem: \langle S \rangle / E \to \mathcal C$. If $\sem$ is well-defined as a PROP morphism, we say that $E$ is \textit{sound} for $\mathcal C$. If it is full, we say that $S$ is \textit{universal} for $\mathcal C$. If it is faithful, we say that $E$ is \textit{complete} for $\mathcal C$. If it is all three at the same time, then it is a PROP isomorphism, and so we have a genuine presentation of $\mathcal C$. 

The simplest useful examples of presentations of PROPs are Lawvere theories of well-behaved algebraic theories, which we now define. 

\begin{definition}[Lawvere Theory]
    A Lawvere theory is a PROP $\mathbb L$ whose monoidal product is also its category-theoretical product, admitting a strictly identity-on-objects and product-preserving functor $\aleph_0 \to L$, where $\aleph_0$ is a skeleton of the category of finite sets and functions. 

    Intuitively, a Lawvere theory is a PROP equipped with a cartesian structure (of copying and deleting). 

    A model of a Lawvere theory $\mathbb L$ is a finite product preserving functor $\mathbb L \to \mathsf{Set}$. The models of a fixed Lawvere theory form a category $\mathsf{Mod}(\mathbb L)$. 
\end{definition}

\begin{prop}
    (Standard Fact on Lawvere Theories. See for example \cite{cheng_distributive_2020}) Let $T$ be a monad on $\mathsf{Set}$ that is finitary (i.e. it preserves filtered colimits). Then there is a Lawvere theory $\mathbb L_T$, defined by the equational theory of $T$, such that 
    \begin{equation}
        \mathsf{Mod}(\mathbb L_T) \simeq \mathsf{Alg}(T)
    \end{equation}
    where $\mathsf{Alg}(T)$ denotes the Eilenberg-Moore category of $T$. Moreover, $\mathbb L_T$ is equivalent to the opposite of $\Kl T|_{fin}$, the Kleisli category of $T$ restricted to finite objects. 
\end{prop}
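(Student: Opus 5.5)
The plan is to build $\mathbb L_T$ directly as (the opposite of) the Kleisli category of $T$ on finite sets, verify that it is a Lawvere theory, and then compare its models with $T$-algebras. Let $\mathbb L_T$ have objects $n \in \mathbb N$, identified with $[n]=\{1,\dots,n\}$, and hom-sets $\mathbb L_T(n,m) := \mathsf{Set}(m, Tn)$. An element of this hom-set is an $m$-tuple of $n$-ary terms of $T$. Composition is Kleisli composition using the multiplication $\mu$, so by construction $\mathbb L_T = (\Kl T|_{fin})^{op}$. The free functor $\mathsf{Set}\to\Kl T$ restricted to finite sets gives an identity-on-objects functor $\aleph_0^{op}\to \Kl T|_{fin}$; dualising gives $\aleph_0 \to \mathbb L_T$, with the variance conventions matched to the definition. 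Coproducts in $\aleph_0$ are preserved by the free functor because it is a left adjoint, and the coproduct $n+m$ in $\Kl T$ is the disjoint union. Hence $n+m$ is a product in $\mathbb L_T$, and $\mathbb L_T$ is a Lawvere theory. Choosing the strict symmetric monoidal structure given by addition makes it a PROP.

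Next I define the comparison functor $\Phi:\mathsf{Alg}(T)\to\mathsf{Mod}(\mathbb L_T)$. Given an algebra $(A,a)$, send $n\mapsto A^n$. A morphism $t\in \mathsf{Set}(m,Tn)$ is sent to the map $A^n\to A^m$, $x\mapsto (a\circ Tx\circ t)$, where we view $x:n\to A$. Functoriality follows from the algebra laws $a\circ\eta=\mathrm{id}$ and $a\circ Ta = a\circ\mu$. Product preservation is immediate from the construction. Algebra morphisms are sent to natural transformations, since they commute with $a$. Faithfulness is clear by evaluating at $n=1$.

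For essential surjectivity and fullness, take a model $M$ and set $A=M(1)$. Product preservation gives $M(n)\cong A^n$. Each element $t\in Tn$ is a morphism $n\to1$ of $\mathbb L_T$, so it gives an operation $A^n\to A$. This defines $a_{fin}:T_{fin}A \to A$ on the restriction of $T$ to finite sets. The main obstacle is extending this to all of $TA$, and this is where finitarity is used. Since $T$ preserves filtered colimits, and every set is the filtered colimit of its finite subsets, $TA = \mathrm{colim}_{n\to A} Tn$. Concretely, every $u\in TA$ equals $Tx(t)$ for some $t\in Tn$ and some $x:n\to A$. Set $a(u) := M(t)(x)$.

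Well-definedness requires that two such presentations agree. Given two representations, a common refinement exists in the filtered diagram. Functoriality of $M$ along the $\aleph_0$-maps (renaming of variables) then shows that they yield the same value. The algebra laws for $a$ follow from $M$ preserving identities, which gives the unit law, and composition, which gives the law involving $\mu$. Natural transformations between models are determined by their $1$-components, because both functors preserve products. Those components commute with every operation, and hence with $a$, so they are $T$-algebra morphisms. Therefore $\Phi$ is full and essentially surjective, which establishes $\mathsf{Mod}(\mathbb L_T)\simeq\mathsf{Alg}(T)$.
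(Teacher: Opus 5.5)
The paper does not prove this proposition; it quotes it as a standard fact and points to the literature, so there is no internal argument to compare against. What you have written is the standard proof, and it is correct.

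Defining $\mathbb L_T := (\Kl T|_{fin})^{op}$ makes the ``moreover'' clause hold by construction. The comparison functor $(A,a)\mapsto(n\mapsto A^n)$ is well defined, product-preserving and faithful for any monad. Finitarity is used exactly where it is needed: to write every $u\in TA$ as $Tx(t)$ with $t\in Tn$, $n$ finite, and to show that $a(u):=M(t)(x)$ does not depend on the representation.

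For the independence step, the identity doing the work is the following. $Tf(t)\in Tp$ is the $\mathbb L_T$-composite of $t$ with the image $\hat f\colon p\to n$ of $f\colon n\to p$. Moreover $M(\hat f)(z)=z\circ f$, because $M$ preserves projections. This covers non-injective $x$ (identification of variables) as well as inclusions of finite subsets.

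A few things should be tidied up.

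\begin{enumerate}
\item The free functor $\mathsf{Set}\to\Kl T$ is covariant. It therefore restricts to an identity-on-objects, coproduct-preserving functor $\aleph_0\to\Kl T|_{fin}$, and dualising gives $\aleph_0^{op}\to\mathbb L_T$, sending the sums $n+m$ to products. This is the correct structure map of a Lawvere theory. You should not twist the variance to match the paper's loose phrase ``product-preserving functor $\aleph_0\to L$''.

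\item The $\mu$-law should be written out, because it uses finitarity twice. Write $U\in TTA$ as $U=TTx(Tt(V))$ with $V\in Tm$, $t\colon m\to Tn$ and $x\colon n\to A$. This uses finitarity for $T(TA)$, and then for the finitely many elements of $TA$ involved, amalgamated over a common $n$. Then $a\,\mu_A U = M\bigl(\mu_n(Tt(V))\bigr)(x) = M(V)\bigl(M(t)(x)\bigr) = a\,Ta\,U$, where the middle step is functoriality of $M$.

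\item In the fullness step, ``commutes with every operation, hence with $a$'' silently uses the representation $u=Tx(t)$; say so.

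\item For essential surjectivity, state the isomorphism $\Phi(A,a)\cong M$. It is immediate: both functors agree on the morphisms $n\to 1$ by the definition of $a$, and both preserve products.
\end{enumerate}
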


For instance, suppose we want to find a presentation for the category $\mathsf{Mat}_R$ of finite matrices over a ring $R$. Let $T$ be the monad sending a set $X$ to the free $R$-module generated by $X$. Then $\mathsf{Mat}_R$ is equivalent to the opposite of $\Kl T|_{fin}$. The Lawvere theory for finite-dimensional $R$-modules is therefore equivalent to $Mat_R$. But the Lawvere theory is also, in itself, a neatly presented PROP: the generators are the operations in the Lawvere theory 
$$\left\{\input{./figures/copy.tikz} \qquad \begin{tikzpicture}
	\begin{pgfonlayer}{nodelayer}
		\node [style=Z dot] (0) at (0.5, 0) {};
		\node [style=none] (1) at (-0.5, 0) {};
	\end{pgfonlayer}
	\begin{pgfonlayer}{edgelayer}
		\draw (1.center) to (0);
	\end{pgfonlayer}
\end{tikzpicture}
\qquad \input{./figures/add.tikz}\qquad \begin{tikzpicture}
	\begin{pgfonlayer}{nodelayer}
		\node [style=X dot] (0) at (-0.5, 0) {};
		\node [style=none] (3) at (0.5, 0) {};
	\end{pgfonlayer}
	\begin{pgfonlayer}{edgelayer}
		\draw (0) to (3.center);
	\end{pgfonlayer}
\end{tikzpicture}
 \qquad \begin{tikzpicture}
	\begin{pgfonlayer}{nodelayer}
		\node [style=kscalar] (0) at (0, 0) {$r$};
		\node [style=none] (1) at (-1, 0) {};
		\node [style=none] (2) at (1, 0) {};
	\end{pgfonlayer}
	\begin{pgfonlayer}{edgelayer}
		\draw (1.center) to (2.center);
	\end{pgfonlayer}
\end{tikzpicture}
  \quad \forall r \in R \right\}$$
(copy, delete, add, zero, scaling by $r$) and the rewrite rules are the usual axioms of $R$-modules. This yields a diagrammatic presentation of the PROP $\mathsf{Mat}_R$, as we wanted. 

However, in the GLA literature, the target PROP is usually more complicated than the Kleisli category of an algebraic monad. In particular, it usually has a dagger structure. Building presentations of such PROPs requires us to be able to synthesize larger PROPs out of smaller ones. The simplest way to do this is via PROP sum. 

\begin{definition}
    \cite{zanasi_interacting_2018} Let $\mathcal C$ and $\mathcal D$ be PROPs. Their sum $\mathcal C + \mathcal D$ is their coproduct in the category of PROPs. 
\end{definition}

\begin{prop}
    \cite{zanasi_interacting_2018} If $\mathcal C$ has a presentation $(S_1, E_1)$, and $\mathcal D$ has a presentation $(S_2, E_2)$, then $\mathcal C +\mathcal D$ has presentation $(S_1 \sqcup S_2, E_1 \sqcup E_2)$
\end{prop}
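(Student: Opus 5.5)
The plan is to show that $\langle S_1 \sqcup S_2\rangle/(E_1 \sqcup E_2)$ has the universal property of the coproduct of $\mathcal C$ and $\mathcal D$ in the category of PROPs. Since coproducts are unique up to unique isomorphism, and the given presentations are PROP isomorphisms $\langle S_1\rangle/E_1 \cong \mathcal C$ and $\langle S_2\rangle/E_2 \cong \mathcal D$, it is enough to prove
\[
\langle S_1 \sqcup S_2\rangle/(E_1 \sqcup E_2) \;\cong\; \langle S_1\rangle/E_1 + \langle S_2\rangle/E_2 .
\]
Hence we may assume $\mathcal C = \langle S_1\rangle/E_1$ and $\mathcal D = \langle S_2\rangle/E_2$ outright.

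\textbf{Coprojections.} First define $\iota_1 : \langle S_1\rangle/E_1 \to \langle S_1\sqcup S_2\rangle/(E_1\sqcup E_2)$ on representatives. A diagrammatic term over $S_1$ is sent to the same term, read over the larger signature; this is defined by induction on the grammar of Figure~\ref{fig:diagram-terms}. The map is well defined for two reasons. The symmetric monoidal coherence equations are the same in both signatures. Any derivation of $D\sim D'$ from $E_1$, using transitivity and substitution into contexts over $S_1$, is also a derivation from $E_1\sqcup E_2$, because contexts over $S_1$ are contexts over $S_1\sqcup S_2$. By construction $\iota_1$ is identity on objects and strict symmetric monoidal. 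The coprojection $\iota_2$ is defined in the same way.

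\textbf{Copairing.} Now take any PROP $\mathcal E$ with PROP morphisms $F:\langle S_1\rangle/E_1\to\mathcal E$ and $G:\langle S_2\rangle/E_2\to\mathcal E$. Define $[F,G]$ on terms by structural induction. The identity, swap and empty diagram are sent to the corresponding structure in $\mathcal E$. Composition and tensor are sent to composition and tensor. A generator $s\in S_1$ goes to $F(s)$, and a generator $s\in S_2$ goes to $G(s)$. One then checks that this respects the equivalence. The coherence axioms hold because $\mathcal E$ is a strict symmetric monoidal category. The relations are handled as follows.
\begin{itemize}
\item For $(D\sim D')\in E_1$, both sides use only $S_1$-generators, so $[F,G](D)=F(D)=F(D')=[F,G](D')$, since $F$ is well defined on the quotient.
\item The case $E_2$ is symmetric.
\item Closure under substitution holds because $[F,G]$ is compositional, so equal subterms yield equal images in any context.
\item Closure under transitivity is immediate.
\end{itemize}
The identities $[F,G]\circ\iota_1=F$ and $[F,G]\circ\iota_2=G$ hold on generators, and hence everywhere.

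\textbf{Uniqueness.} Suppose $H$ is any PROP morphism with $H\iota_1=F$ and $H\iota_2=G$. As a strict symmetric monoidal, identity-on-objects functor, $H$ is determined by its values on generators, because every morphism is the class of a term built from generators by $\circ$, $\otimes$, identities and swaps. Those values are forced to be $F(s)$ and $G(s)$, so $H=[F,G]$.

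\textbf{Main obstacle.} The only delicate point is the well-definedness of $[F,G]$ with respect to the congruence generated by $E_1\sqcup E_2$. A single derivation may interleave rewrites from $E_1$ and from $E_2$ inside mixed contexts. I would handle this by induction on the length of derivations, reducing to a single rewrite step $C[D]\leadsto C[D']$ with $(D,D')\in E_i$. That step is preserved because $[F,G](C[D])$ is obtained by plugging $[F,G](D)=[F,G](D')$ into the image of the context $C$.
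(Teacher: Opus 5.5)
Your argument is correct. Checking directly that $\langle S_1 \sqcup S_2\rangle/(E_1 \sqcup E_2)$, with the evident coprojections, has the coproduct universal property is the standard proof: the copairing is well defined by induction on derivations, and it is unique because PROP morphisms are determined by their values on generators. The paper gives no proof of its own beyond the citation to Zanasi, so your write-up supplies the routine verification that the citation relies on.
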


In the sum of PROPs, morphisms coming from two different PROPs of course do not interact. To endow interactive behaviour on morphisms of $\mathcal C$ and $\mathcal D$, we need the notion of distributive laws of PROPs. 

\begin{definition}
    A distributive law is a rewrite rule that distributes morphisms from the two different PROPs past each other: 
    $$\overset{f \in \mathcal C}{\longrightarrow}\overset{g \in \mathcal D}{\longrightarrow} \quad \sim \quad \overset{g' \in \mathcal D}{\longrightarrow}\overset{f' \in \mathcal C}{\longrightarrow}$$

    If $E$ is a collection of such rewrites, then $\mathcal C +_E \mathcal D$ denotes the PROP presented by $(S_1 \sqcup S_2, E_1 \sqcup E_2 \sqcup E)$.
\end{definition}
 Note, this is not the traditional way in which distributive laws and their resulting categories are defined, but in the case where $\mathcal C$ and $\mathcal D$ admit presentations (which is the only case we encounter in this paper), this is in fact equivalent to the traditional definition \cite{lack_composing_2004}. 
%  As the more general notion of distributive laws would require us to discuss at length the theory of PROPs, we will not present it here and refer the interested reader to \cite{lack_composing_2004}. 

% The general strategy encountered in graphical linear algebra (especially see \cite{bonchi_interacting_2017}) is to start with a diagrammatic language $\mathbb L$ for the ``forward'' fragment of our target PROP. For example, when $Mat_R$ is our forward fragment, we have the aforementioned Lawvere theory for $R$-modules as $\mathbb L$. On the semantic side, we apply the span or relation construction to $Mat_R$ to obtain a dagger compact semantic category (more on this in section \ref{sec:daggers}). On the syntax side, we look for a set of $\dag$-invariant distributive laws $E$, each of the form 
% $$\overset{f \in \mathbb L}{\longrightarrow} \overset{g \in \mathbb L^{op}}{\longrightarrow} \quad \sim \quad \overset{g' \in \mathbb L^{op}}{\longrightarrow} \overset{f' \in \mathbb L}{\longrightarrow}$$
% such that $\mathbb L +_E \mathbb L^{op}$ is equivalent to our dagger compact semantic category. Since $\mathbb L + \mathbb L^{op}$ is a dagger category, and $E$ is $\dag$-invariant, so the syntax category is guaranteed to be a dagger category. 

\subsection{Algebraic Geometry} 
\label{sec:ag}

In reviewing the basic algebraic geometry tools required for our purposes, we follow standard texts \cite{cox_ideals_2007,hartshorne_algebraic_1977,eisenbud_commutative_1995}. 

We denote by $k$ an arbitrary field, and $\kbar$ its algebraic closure. Likewise we denote by $\Fq$ a finite field of size $q = p^n$ for some prime $p$, and $\Fqbar$ its algebraic closure. The category of finitely generated commutative algebras over $k$ is denoted $\CAlgfgk$.

\subsubsection{Varieties and Polynomial Mappings}

The basic objects we will be concerned with are affine varieties. Intuitively, we think of an affine variety as a place where certain polynomials evaluate to zero. If, say, a set $S$ of polynomials all evaluate to zero at a certain place in space, then it's clear that any linear combination of polynomials in $S$ also evaluate to zero. Thus, we think of affine varieties as being affiliated to \textit{ideals} of polynomials, rather than just to sets of polynomials. 

\begin{definition}
    Let $I \subset k[x_1, \dots, x_n]$ be an ideal. The \textbf{zero-set} of $I$ is: 
    \begin{equation}
        V(I) \coloneq \{\mathbf a \in \kbar^n \mid \forall f \in I, f(\mathbf a) = 0\}
    \end{equation}
    An \textbf{affine variety} over $k$ is a zero-set of any ideal of polynomials. 
\end{definition}

Notice that a variety is a subset in a space over the \textit{algebraic closure} of the base field. This is necessary, because in general the roots of polynomials exist only in the algebraic closure. However, often we need to reason about the restriction of a variety, either to the base field, or to some intermediate field. This is captured in the following notion: 

\begin{definition}
    For an intermediate field $k \subset k' \subset \kbar$, the \textbf{$k'$-rational locus} of a variety $V \subset \kbar^n$ is $V^{(k')} \coloneq V \cap (k')^{n}$; i.e. the set of points in $V$ whose coordinates lie entirely in $k'$. When we say ``a rational locus over $k$'' without naming an intermediate field, we mean the $k$-rational locus of a variety over $k$.
\end{definition}

\begin{definition}
    Let $S$ be any subset of $\kbar^n$. The set of $k$-polynomials vanishing on $S$ is 
    \begin{equation*}
        I_k(S) \coloneq \{f \in k[x_1, \dots, x_n] \mid \forall \mathbf a \in S, f(\mathbf a) = 0\}
    \end{equation*}
    It is easy to verify that this set of polynomials forms a radical ideal. 
\end{definition}

\begin{definition}
    \label{def:regular-map}
    Let $V \subset \kbar^n$ and $W \subset \kbar^m$ be affine varieties over $k$. A function $\phi: V \to W$ is a \textbf{$k$-polynomial mapping} if there exists a polynomial tuple $f = (f_1, \dots, f_m) \in (k[x_1, \dots, x_n])^m$ such that 
    $$\phi(a_1, \dots, a_n) = f(a_1, \dots, a_n)$$
    for all $(a_1, \dots, a_n) \in V$. An \textbf{isomorphism of varieties} is a polynomial mapping $\sigma$ which is invertible as a function, such that the inverse function $\sigma^{-1}$ is itself a polynomial mapping. 

    The polynomial mappings from $V$ to $k$ form a commutative algebra over $k$, called the \textbf{coordinate ring} of $V$, and denoted $k[V]$.
\end{definition}

\begin{prop}
    \label{prop:sheaf-ring} 
    \cite{cox_ideals_2007} Let $V$ be an affine variety over $k$. Then 
    \begin{equation*}
        k[V] \cong k[x_1, \dots, x_n]/I_k(V). 
    \end{equation*}
\end{prop}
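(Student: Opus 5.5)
The plan is to realize the isomorphism as the kernel quotient of an evaluation map. Define
\[
\mathrm{ev}: k[x_1, \dots, x_n] \to k[V], \qquad f \mapsto \bigl(\mathbf a \mapsto f(\mathbf a)\bigr),
\]
which sends a polynomial to the polynomial mapping $V \to \kbar$ that it induces. By Definition~\ref{def:regular-map}, the coordinate ring $k[V]$ consists of functions on $V$ represented by $k$-polynomials, with pointwise operations. Evaluation at a point respects sums, products, scalars and constants, so $\mathrm{ev}$ is a homomorphism of $k$-algebras.

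Next I would check surjectivity, which is immediate from the definition. Every element of $k[V]$ is, by definition, a function $\phi$ for which some $f \in k[x_1,\dots,x_n]$ satisfies $\phi(\mathbf a) = f(\mathbf a)$ for all $\mathbf a \in V$. That is exactly the statement $\phi = \mathrm{ev}(f)$.

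The remaining step is to identify the kernel. A polynomial $f$ lies in $\ker \mathrm{ev}$ precisely when the function it induces on $V$ is zero, that is, when $f(\mathbf a) = 0$ for every $\mathbf a \in V$. This is verbatim the definition of $I_k(V)$, so $\ker \mathrm{ev} = I_k(V)$. The first isomorphism theorem for commutative $k$-algebras then gives $k[x_1,\dots,x_n]/I_k(V) \cong k[V]$.

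I do not expect a real obstacle. The only point needing care is well-definedness and equality in $k[V]$: two polynomials define the same element exactly when they agree as functions on $V$. With that convention the kernel computation is tautological. In particular, no Nullstellensatz is needed here, because we quotient by $I_k(V)$ itself rather than by an ideal $I$ with $V = V(I)$.
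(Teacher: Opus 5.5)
Your proof is correct, and it is essentially the argument in the reference the paper cites (the paper gives no proof of its own): the evaluation map $k[x_1,\dots,x_n] \to k[V]$ is a surjective $k$-algebra homomorphism whose kernel is, by definition, $I_k(V)$, and the first isomorphism theorem finishes. You are also right that no Nullstellensatz is needed, since the quotient is by $I_k(V)$ itself rather than by some ideal cutting out $V$.
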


\begin{definition}
    \label{def:aff-k} (The Categories of Affine Varieties and Rational Loci)
    \cite{cox_ideals_2007} For each fixed field $k$, the polynomial mappings between affine varieties over $k$ form a category $\Affk$. For each intermediate field $k \subset k' \subset \kbar$, there is a full subcategory $\Affk^{(k')} \subset \Affk$ consisting of affine varieties that are entirely $k'$-rational.
\end{definition}

\begin{prop}
    There is a fully faithful contravariant functor $k[-]: \Affk^{op} \to \CAlgfgk$ sending an affine variety $V$ to its coordinate ring $k[V]$.
\end{prop}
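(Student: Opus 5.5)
We need to prove that $k[-]\colon \Affk^{op} \to \CAlgfgk$ is a well-defined contravariant functor, and that it is faithful and full. The plan is to use Proposition~\ref{prop:sheaf-ring} to identify $k[V]$ with $k[x_1,\dots,x_n]/I_k(V)$, and then to compare polynomial mappings with algebra homomorphisms between these quotients.

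\emph{Objects and morphisms.} For $V \subset \kbar^n$, the ring $k[V] \cong k[x_1,\dots,x_n]/I_k(V)$ is a quotient of a polynomial ring in finitely many variables, so it lies in $\CAlgfgk$. A polynomial mapping $\phi\colon V\to W$ with $W \subset \kbar^m$ is sent to the pullback $\phi^*\colon k[W]\to k[V]$, $g\mapsto g\circ\phi$. This is well defined: if $\phi$ is given by a tuple $f=(f_1,\dots,f_m)$, then $g\circ f$ is again a $k$-polynomial, and its restriction to $V$ depends only on the restriction of $g$ to $W$, because $\phi(V)\subset W$. Pullback is evidently a $k$-algebra homomorphism, since addition, multiplication and scalars are computed pointwise. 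It also satisfies $(\psi\circ\phi)^*=\phi^*\circ\psi^*$ and $\mathrm{id}^*=\mathrm{id}$. So $k[-]$ is a contravariant functor.

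\emph{Faithfulness.} Suppose $\phi^*=\psi^*$. Apply both to the coordinate functions $y_j|_W \in k[W]$. We get $\phi_j = y_j\circ\phi = y_j\circ\psi = \psi_j$ as functions on $V$ for every $j$, so $\phi=\psi$ pointwise.

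\emph{Fullness.} This is the main step. Let $\alpha\colon k[W]\to k[V]$ be a $k$-algebra homomorphism. For each $j$, choose a polynomial representative $f_j\in k[x_1,\dots,x_n]$ of $\alpha(\bar y_j)$, and let $\phi\colon V \to \kbar^m$ be the map given by $f=(f_1,\dots,f_m)$. There are three things to check.
\begin{itemize}
\item[(i)] \emph{$\phi$ lands in $W$.} Take $g\in I_k(W)$ and $\mathbf a\in V$. In $k[W]$ we have $\bar g = g(\bar y_1,\dots,\bar y_m)=0$, so
\[
0=\alpha(\bar g) = g(\alpha(\bar y_1),\dots,\alpha(\bar y_m)) = \overline{g(f_1,\dots,f_m)} \in k[V].
\]
Hence $g(f(\mathbf a))=0$. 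So $f(\mathbf a)$ is killed by every element of $I_k(W)$, which places it in $V(I_k(W))$.
\item[(ii)] \emph{$V(I_k(W)) = W$.} This is the delicate point. Write $W=V(J)$ for some ideal $J$. Then $J\subset I_k(W)$, which gives $V(I_k(W))\subset V(J)=W$. The reverse inclusion is immediate. Note that this uses only the definitions, with no Nullstellensatz, because both $V(-)$ and $I_k(-)$ are taken with $k$-coefficients and $\kbar$-points.
\item[(iii)] \emph{$\phi^*=\alpha$.} Both are $k$-algebra maps, and $k[W]$ is generated as a $k$-algebra by the $\bar y_j$. By construction $\phi^*(\bar y_j) = \bar f_j = \alpha(\bar y_j)$, so the two maps agree on generators and hence everywhere.
\end{itemize}

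The obstacle I expect is therefore step (ii): making sure that ``variety over $k$'' is interpreted consistently with $I_k$. Once that is settled, fullness follows from the universal property of polynomial rings, namely that algebra maps out of $k[y_1,\dots,y_m]/I$ correspond to tuples that satisfy the relations $I$.
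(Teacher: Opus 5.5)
Your proof is correct. It is essentially the standard argument from Cox--Little--O'Shea (\S5.4), which the paper relies on rather than proving this itself: pull back along polynomial maps, recover $\phi$ from the images of the coordinate functions, and use $V(I_k(W)) = W$ to show that the reconstructed map lands in $W$. Your remark in step (ii) is correct and is exactly what lets the textbook argument carry over to the paper's setting of $\kbar$-points and $k$-coefficients: since $V(-)$ and $I_k(-)$ are both taken in that setting, no Nullstellensatz is needed.
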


% \begin{proof}
%     Cox et.al. Section 5.4 Proposition 8 \cite{cox_ideals_2007}. 
% \end{proof}

\subsubsection{The Nullstellensatze}

The coordinate ring functor $k[-]$ is a way to associate an algebraic object to each geometric shape (the variety). This association is not an equivalence of categories, but very close to being one. The connection in the opposite direction, as we will now see, is provided by Hilbert's Nullstellensatz for algebraically closed fields, and its alternative formulation for finite fields. The Nullstellensatze can therefore be seen as bridges connecting the algebraic world to the geometric world, and it's not surprising that we will use them extensively in Section~\ref{sec:structured-spans-of-alg-varieties} as we move from graphical commutative algebra to graphical algebraic geometry. 

\begin{restatable}[Hilbert's Nullstellensatz]{prop}{hilbertnullstellensatz}
    \label{prop:hilbert-nullstellensatz}
    \cite{hartshorne_algebraic_1977} Let $k$ be an algebraically closed field, and $J \subset k[x_1, \dots, x_n]$ an ideal. Then  
    \begin{equation}
        I_k(V(J)) = \sqrt{J} 
    \end{equation}
\end{restatable}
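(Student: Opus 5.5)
The inclusion $\sqrt{J} \subseteq I_k(V(J))$ is immediate and I will dispose of it first. If $f^m \in J$ for some $m$, then at every $\mathbf a \in V(J)$ we have $f(\mathbf a)^m = 0$. Since $k$ is a field, this forces $f(\mathbf a) = 0$, so $f \in I_k(V(J))$. The substantive direction is the reverse inclusion $I_k(V(J)) \subseteq \sqrt{J}$. I will split it into two steps: first the weak Nullstellensatz, then the Rabinowitsch trick. Throughout, $k = \kbar$ because $k$ is algebraically closed.

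\textbf{Step 1 (weak Nullstellensatz).} Here I will show that if $J$ is a proper ideal, then $V(J) \neq \emptyset$. Choose a maximal ideal $\mathfrak m \supseteq J$, which exists by Zorn's lemma, or more simply by Noetherianity of $k[x_1,\dots,x_n]$ via the Hilbert basis theorem. The quotient $L = k[x_1,\dots,x_n]/\mathfrak m$ is a field that is finitely generated as a $k$-algebra. The key algebraic input is Zariski's lemma: such an $L$ is a finite, hence algebraic, extension of $k$. Since $k$ is algebraically closed, this gives $L = k$. Each $x_i$ is therefore congruent modulo $\mathfrak m$ to some scalar $a_i \in k$, so $x_i - a_i \in \mathfrak m$. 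Every element of $J \subseteq \mathfrak m$ then vanishes at $\mathbf a = (a_1,\dots,a_n)$, which means $\mathbf a \in V(J)$.

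For Zariski's lemma itself I would cite Noether normalization: $L$ is integral over a polynomial subring $k[y_1,\dots,y_d]$. Because $L$ is a field, this subring must also be a field, which forces $d = 0$. As an alternative for uncountable $k$, one can argue via dimension: $k(t)$ has uncountable $k$-dimension, because the elements $1/(t-c)$ are linearly independent, while $L$ has countable $k$-dimension.

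\textbf{Step 2 (Rabinowitsch trick).} Let $f \in I_k(V(J))$. By the Hilbert basis theorem, $J = (g_1,\dots,g_s)$ for finitely many $g_i$. Consider the ideal
\[
J' = (g_1,\dots,g_s,\, 1 - y f) \subseteq k[x_1,\dots,x_n,y].
\]
This ideal has no zeros. At any common zero of the $g_i$ the polynomial $f$ vanishes, so $1 - yf = 1 \neq 0$ there. By Step 1, $J'$ is therefore the unit ideal, so we can write
\[
1 = \sum_i h_i g_i + h\,(1 - yf).
\]
Next I substitute $y = 1/f$ in the field of fractions $k(x_1,\dots,x_n)$, assuming $f \neq 0$; the case $f = 0$ is trivial. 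The term $h\,(1-yf)$ vanishes, and multiplying through by a high enough power $f^N$ to clear denominators gives $f^N \in J$. Hence $f \in \sqrt{J}$.

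I expect Zariski's lemma in Step 1 to be the main obstacle, since it is where algebraic closedness is genuinely used. Step 2 is a routine algebraic manipulation. Since this statement is classical, in the paper itself I would likely just cite \cite{hartshorne_algebraic_1977} for Zariski's lemma and present Steps 1 and 2 only as a sketch.
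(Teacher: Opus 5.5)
Your argument is correct. The paper gives no proof of its own and simply cites Hartshorne, and your route (the weak Nullstellensatz via Zariski's lemma, followed by the Rabinowitsch trick) is the standard proof behind that citation; one small precision on your closing remark is that Zariski's lemma holds over any field, so algebraic closedness is used only in deducing that the finite extension $L$ of $k$ equals $k$, not in Zariski's lemma itself.
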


% \begin{proof}
%     Hartshorne \cite{hartshorne_algebraic_1977}, Chapter 1, Proposition 1.2. 
% \end{proof}

We want to see this as an equivalence between a category of algebraic objects and one of geometric objects. So let us now translate this into the language of category theory as follows: 

\begin{definition}
    A commutative algebra $A$ over $k$ is \textbf{reduced} if it has no nonzero nilpotents. The full subcategory of $\CAlgfgk$ consisting of reduced commutative algebras is denoted $\CAlgfgredk$. 
    %It is called \textbf{geometrically reduced} if it is reduced and $A \otimes \kbar$ is also reduced. 
\end{definition}

\begin{restatable}{prop}{redisleftadjoint}
    \label{prop:red-is-left-adjoint}
    \cite{eisenbud_commutative_1995} The functor $(-)_{red}: \CAlgfgk \to \CAlgfgredk$ sending $A \mapsto A / \nil A$ is the left-adjoint of the inclusion $\CAlgfgredk \hookrightarrow \CAlgfgk$. 
\end{restatable}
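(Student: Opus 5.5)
We prove the adjunction by exhibiting, for each $A \in \CAlgfgk$ and each reduced $B \in \CAlgfgredk$, a bijection
\[
\CAlgfgredk(A/\nil A,\, B) \;\cong\; \CAlgfgk(A,\, B),
\]
natural in both $A$ and $B$. The bijection is precomposition with the quotient map $\pi_A : A \to A/\nil A$, which will serve as the unit.

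\textbf{Step 1: well-definedness.} We check that $(-)_{red}$ lands in $\CAlgfgredk$ and is a functor.
\begin{itemize}
\item $A/\nil A$ is finitely generated, being a quotient of a finitely generated algebra.
\item $A/\nil A$ is reduced. If $\bar a^n = 0$ in $A/\nil A$, then $a^n \in \nil A$, so $a^{nm} = 0$ for some $m$. Hence $a \in \nil A$ and $\bar a = 0$.
\item Any algebra map $f : A \to A'$ sends nilpotents to nilpotents, since $f(a)^n = f(a^n) = 0$. So $f(\nil A) \subseteq \nil A'$, and $f$ descends to a map $f_{red} : A/\nil A \to A'/\nil A'$. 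Functoriality then follows from the universal property of quotients.
\end{itemize}

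\textbf{Step 2: the bijection.} Let $g : A \to B$ be an algebra map with $B$ reduced. For any $a \in \nil A$, the element $g(a)$ is nilpotent in $B$, hence zero. So $\nil A \subseteq \ker g$, and $g$ factors uniquely through $\pi_A$ as $g = \bar g \circ \pi_A$.
\begin{itemize}
\item Existence of $\bar g$ gives surjectivity of precomposition with $\pi_A$.
\item Surjectivity of $\pi_A$ gives uniqueness of $\bar g$, hence injectivity.
\end{itemize}
Since $\CAlgfgredk$ is a full subcategory, $\bar g$ is a morphism there.

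\textbf{Step 3: naturality.} Naturality in $B$ is immediate, because postcomposition commutes with precomposition by $\pi_A$. Naturality in $A$ amounts to the identity $f_{red} \circ \pi_A = \pi_{A'} \circ f$, which holds by the construction of $f_{red}$ in Step 1.

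There is no real obstacle in this argument. The only point needing care is reducedness of $A/\nil A$ in Step 1, which uses that the nilradical is a radical ideal.
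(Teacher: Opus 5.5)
Your proof is correct and complete. You verify that $\CAlgfgredk$ is a reflective subcategory with unit the quotient map $\pi_A$, and the crucial observation, that any map into a reduced algebra kills $\nil A$ and so factors uniquely through the surjection $\pi_A$, is exactly right. The paper gives no proof of its own and only cites Eisenbud; your argument is the standard one that citation stands in for.
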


\begin{prop}
    \label{prop:alg-closed-anti-equiv}
    \cite{eisenbud_commutative_1995} If $k$ is an algebraically closed field, then the coordinate ring functor $k[-]$ gives an anti-equivalence of categories 
    \begin{equation}
        k[-]: \Affk^{op} \simeq \CAlgfgredk.
    \end{equation}
    Denote its inverse functor (up to isomorphism) as 
    \begin{equation}
        \Spec: \CAlgfgredk \to \Affk^{op}
    \end{equation}
\end{prop}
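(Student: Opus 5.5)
We prove that for algebraically closed $k$ the coordinate ring functor $k[-]: \Affk^{op} \to \CAlgfgredk$ is well defined, fully faithful and essentially surjective. Together these give an anti-equivalence, and an inverse $\Spec$ can then be chosen by picking a preimage for each object.

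\textbf{Well-definedness.} By Proposition~\ref{prop:sheaf-ring}, $k[V] \cong k[x_1,\dots,x_n]/I_k(V)$. This algebra is finitely generated. It is reduced because $I_k(V)$ is a radical ideal: if $f^m$ vanishes on $V$ then $f$ vanishes on $V$. So $k[-]$ does land in $\CAlgfgredk$.

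\textbf{Full faithfulness.} This is the earlier Proposition stating that $k[-]: \Affk^{op} \to \CAlgfgk$ is fully faithful. Since $\CAlgfgredk$ is a full subcategory, the corestriction is still fully faithful. For a self-contained argument, a polynomial map $\phi: V \to W$ induces pullback $\phi^*: g \mapsto g\circ\phi$. Conversely, a $k$-algebra map $\psi: k[W] \to k[V]$ is determined by the images $f_i = \psi(\bar y_i)$ of the coordinate functions. The tuple $f=(f_1,\dots,f_m)$ sends $V$ into $W$, because for every $g \in I_k(W)$ we have $g(f) = \psi(\bar g) = 0$ in $k[V]$, so $g(f)$ vanishes on $V$. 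These two constructions are mutually inverse.

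\textbf{Essential surjectivity.} This is the step where the Nullstellensatz is needed. Let $A$ be a finitely generated reduced $k$-algebra. Choose generators to present it as $A \cong k[x_1,\dots,x_n]/J$. Because $A$ is reduced, $J = \sqrt J$. Set $V = V(J) \subset \kbar^n = k^n$, using that $k$ is algebraically closed. By Proposition~\ref{prop:hilbert-nullstellensatz}, $I_k(V(J)) = \sqrt J = J$. Proposition~\ref{prop:sheaf-ring} then gives $k[V] \cong k[x_1,\dots,x_n]/J \cong A$.

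\textbf{Main obstacle.} The only real content is the inclusion $I_k(V(J)) \subseteq \sqrt J$, which is exactly Hilbert's Nullstellensatz, and we cite it rather than reprove it. The remaining care point is matching the paper's conventions. Here $\kbar = k$, so every variety is automatically $k$-rational and polynomial mappings coincide with morphisms of affine varieties. We also need that reducedness of $A$ corresponds precisely to $J$ being radical, which holds because nilpotents in $k[x]/J$ are exactly the classes of elements of $\sqrt J$.
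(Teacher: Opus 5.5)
Your proof is correct. It is the standard argument behind the result the paper cites from Eisenbud; the paper gives no proof of its own. You get full faithfulness by pulling back coordinate functions, and essential surjectivity from Hilbert's Nullstellensatz applied to a radical presentation $A \cong k[x_1,\dots,x_n]/J$. The one step you leave implicit is that $f(V) \subseteq V(I_k(W))$ forces $f(V) \subseteq W$. This is immediate because $W$ is itself a zero set, so $V(I_k(W)) = W$.
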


% \begin{proof}
%     See Eisenbud Ch. 1 Cor. 1.8 \cite{eisenbud_commutative_1995}. 
% \end{proof}

We now turn our attention to the case of finite fields. While affine varieties over algebraically closed fields are ``rare'' among the subsets of $k^n$, \textit{any subset} of $\Fq^n$ is an affine variety over $\Fq$:

\begin{prop}
    \label{prop:affqq-is-finset} 
    The category $\Affqq$ of rational loci over $\Fq$ is equivalent to the category $\mathsf{FinSet}$ of finite sets and relations. 
\end{prop}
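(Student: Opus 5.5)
The plan is to take the forgetful functor $U\colon \Affqq \to \mathsf{FinSet}$, which sends a rational locus $V \subset \Fq^n$ to its underlying set and a polynomial mapping to its underlying function, and show that it is fully faithful and essentially surjective. I read the statement's ``finite sets and relations'' as finite sets and \emph{functions}, since morphisms in $\Affqq$ are polynomial mappings, which are in particular functions. $U$ is well defined: an object of $\Affqq$ is a variety all of whose points have coordinates in $\Fq$, so it is a subset of the finite set $\Fq^n$.

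Faithfulness is immediate from Definition~\ref{def:regular-map}: a polynomial mapping \emph{is} a function $V \to W$, and two of them are equal exactly when they agree pointwise, whatever polynomials represent them.

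Fullness is the key step. Given finite $V \subset \Fq^n$, $W \subset \Fq^m$ and an arbitrary function $\phi\colon V \to W$, I would use the indicator polynomials
\[
\delta_{\mathbf a}(\mathbf x) \coloneq \prod_{i=1}^n \bigl(1 - (x_i - a_i)^{q-1}\bigr) \in \Fq[x_1,\dots,x_n].
\]
By Fermat's little theorem in $\Fq$, $\delta_{\mathbf a}$ takes the value $1$ at $\mathbf a$ and $0$ at every other point of $\Fq^n$. Then the tuple
\[
f = \sum_{\mathbf a \in V} \phi(\mathbf a)\,\delta_{\mathbf a}
\]
is a tuple of $\Fq$-polynomials that agrees with $\phi$ on $V$. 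Hence $\phi$ is an $\Fq$-polynomial mapping.

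For essential surjectivity I first check that every subset $S \subset \Fq^n$ is an affine variety. Let $J$ be the ideal generated by the field equations $x_i^q - x_i$ together with the polynomials $\delta_{\mathbf b}$ for all $\mathbf b \in \Fq^n \setminus S$.
\begin{itemize}
\item Any zero of $J$ in $\Fqbar^n$ satisfies $x_i^q = x_i$ for each $i$, so it lies in $\Fq^n$.
\item Among points of $\Fq^n$, each $\mathbf b \notin S$ is excluded because $\delta_{\mathbf b}(\mathbf b) = 1$.
\item Every point of $S$ is a zero of $J$, since it satisfies the field equations and every $\delta_{\mathbf b}$ with $\mathbf b \notin S$ vanishes on it.
\end{itemize}
Hence $V(J) = S$, and $S$ is an object of $\Affqq$. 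Now, given a finite set $X$ with $N$ elements, choose $n$ with $q^n \ge N$ and any injection $X \hookrightarrow \Fq^n$. Its image is a rational locus in bijection with $X$, and a bijection is an isomorphism in $\mathsf{FinSet}$. Since $U$ is then fully faithful and essentially surjective, it is an equivalence.

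I expect the main subtlety to be not any single computation but the bookkeeping around the algebraic closure: varieties are defined as zero sets in $\Fqbar^n$, so one must impose the field equations $x_i^q - x_i$ to cut the zero set down to $\Fq$-points. This same interpolation trick also shows that the inverse of a bijective polynomial mapping is again polynomial, so isomorphisms in $\Affqq$ are exactly the bijections.
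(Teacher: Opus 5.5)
Your argument is correct and is essentially the paper's proof. The paper's proof simply cites the two facts you establish by hand with the indicator polynomials $\delta_{\mathbf a}$: every subset of $\Fq^n$ is an affine variety, and every function between such subsets is polynomial, so the forgetful functor is fully faithful and essentially surjective. Your reading of ``relations'' as ``functions'' is also the intended one: the paper's proof and its later use of this fact (gluing bijections into a polynomial map when showing $M_q$ is faithful) treat morphisms as functions, and with relations the claim would fail, since $\Affqq(\Fq^0, \emptyset)$ is empty whereas every hom-set of relations contains the empty relation.
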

\begin{proof}
    It is well known \cite{glynn_classification_1995,lidl_finite_1997} that any subset of $\Fq^n$ is an affine variety, and any function between them is polynomial. 
\end{proof}

However, our results are most easily read if we think in terms of polynomial constraints and morphisms, rather than subsets and functions. We retain the notation $\Affqq$ as a reminder of this ``attitude''. 

\begin{prop}[Finite Field Nullstellensatz]
    \label{prop:ff-nullstellensatz} 
    Let $J \subset \Fq[x_1, \dots, x_n]$ be any ideal of polynomials over a finite field. Then 
    \begin{equation}
        I_{\Fq} (V^{(\Fq)}(J)) = J + \Gamma_q^n
    \end{equation}
    where 
    $\Gamma_q^n \coloneq (x_1^q - x_1, \dots, x_n^q - x_n)$. \cite{dani_note_2019} 
\end{prop}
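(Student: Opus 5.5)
We prove the two inclusions separately; set $V \coloneq V^{(\Fq)}(J) \subset \Fq^n$.

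\textbf{The inclusion $J + \Gamma_q^n \subseteq I_{\Fq}(V)$.} This direction is immediate. Every $f \in J$ vanishes on $V(J)$, hence on its rational locus $V$. Each generator $x_i^q - x_i$ vanishes on all of $\Fq^n$, since $a^q = a$ for every $a \in \Fq$ (the multiplicative group has order $q-1$). Since $I_{\Fq}(V)$ is an ideal, it contains $J + \Gamma_q^n$.

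\textbf{The inclusion $I_{\Fq}(V) \subseteq J + \Gamma_q^n$.} Set $R \coloneq \Fq[x_1,\dots,x_n]/\Gamma_q^n$.

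First, $R$ is isomorphic to the algebra $\mathrm{Fun}(\Fq^n,\Fq)$ of all functions $\Fq^n \to \Fq$, via evaluation. Reduction modulo the $x_i^q - x_i$ brings every polynomial to one of degree $<q$ in each variable, so $\dim R \le q^n$. The evaluation map $R \to \mathrm{Fun}(\Fq^n,\Fq)$ is surjective: the point indicator
\[
\delta_{\mathbf a}(\mathbf x) = \prod_i \bigl(1 - (x_i - a_i)^{q-1}\bigr)
\]
is a polynomial, and every function is a linear combination of these. Both sides then have dimension $q^n$, so the map is an isomorphism. This is the same fact underlying Proposition~\ref{prop:affqq-is-finset}.

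Second, by the correspondence theorem, ideals of $\Fq[\mathbf x]$ containing $\Gamma_q^n$ correspond to ideals of $R \cong \Fq^{\,q^n}$. This is a finite product of fields, indexed by the points of $\Fq^n$. Every ideal of such a product has the form $\{\phi : \phi|_S = 0\}$ for some subset $S \subset \Fq^n$. Indeed, if $\phi$ lies in an ideal $\mathfrak a$ and $\phi(\mathbf a) \neq 0$, then $\delta_{\mathbf a} = \phi(\mathbf a)^{-1}\delta_{\mathbf a}\phi \in \mathfrak a$. So $\mathfrak a$ is spanned by the idempotents $\delta_{\mathbf a}$ it contains, and $S$ is the complementary set of points.

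Third, apply this to $\mathfrak a = (J + \Gamma_q^n)/\Gamma_q^n$. The corresponding set $S$ is the common zero set in $\Fq^n$ of $J + \Gamma_q^n$. Since $\Gamma_q^n$ vanishes everywhere on $\Fq^n$, this is exactly $V$. Thus $\mathfrak a$ consists of all functions vanishing on $V$, so any $f \in I_{\Fq}(V)$ has image in $\mathfrak a$, and therefore $f \in J + \Gamma_q^n$.

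\textbf{Main obstacle.} The only substantive step is the first: identifying $R$ with the full function algebra, i.e.\ the dimension count together with the explicit indicator polynomials $\delta_{\mathbf a}$. Once that is in place, the ideal-structure argument for a finite product of fields is routine.
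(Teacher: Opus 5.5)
Your proof is correct and complete. The paper gives no proof of this proposition; it is quoted from the literature as background. So the useful comparison is with the usual elementary argument. That argument needs only the injectivity half of your first step: a polynomial vanishing on all of $\Fq^n$ lies in $\Gamma_q^n$. It then replaces your classification of ideals of $\Fq^{\,q^n}$ by an explicit certificate. Write $J=(g_1,\dots,g_m)$ and suppose $f$ vanishes on $V$. Then $f\prod_j(1-g_j^{q-1})$ vanishes on all of $\Fq^n$, so it lies in $\Gamma_q^n$. Also $\prod_j(1-g_j^{q-1})\equiv 1\pmod{J}$, and together these give $f\in J+\Gamma_q^n$.

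The two proofs rest on the same idea. In $R$, the element $1-\prod_j(1-g_j^{q-1})$ is exactly the idempotent $\sum_{\mathbf a\notin V}\delta_{\mathbf a}$ that generates your ideal $\mathfrak a$. The explicit route is shorter and produces a concrete witness for membership in $J+\Gamma_q^n$. Your route finds that idempotent pointwise, without choosing generators of $J$. It also makes the underlying structure visible: $\Fq[x_1,\dots,x_n]/\Gamma_q^n$ is the full function algebra on $\Fq^n$, and its ideals are exactly the vanishing ideals of subsets of $\Fq^n$. This is the Stone-duality picture the paper relies on in Proposition~\ref{prop:finite-field-anti-equiv}.
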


As with Hilbert's Nullstellensatz, we now translate the Finite Field Nullstellensatz into a statement about categorical equivalence: 

\begin{definition}
    For a finite field $\Fq$, a commutative algebra $A$ over $\Fq$ is called \textbf{$q$-reduced}\footnote{This is our terminology. In finite field error correction literature this is just called ``reduced'' \cite{dani_note_2019}, but we need to avoid confusion with the aforementioned more standard notion of reduction from algebra.} if for every $f \in A$, $f^q = f$. The $q$-reduced algebras form a full subcategory $\CAlgfgqredq\subset \CAlgfgq$. 
\end{definition}

\begin{definition}
    \label{def:q-radical}
    For a finite field $\Fq$ and a finitely generated commutative algebra $A$ over it, define its $q$-radical as the ideal generated as follows: 
    \begin{equation}
        \Gamma_q(A) \coloneq (f^q - f : f \in A)
    \end{equation}
\end{definition}

\begin{remark}
    \label{rem:basis-of-qradical}
    If $\{x_1, ..., x_n\}$ generate $A$ over $\Fq$ then
    \begin{equation}
        \Gamma_q(A) = \Gamma_q^n = (x_1^q - x_1, \dots, x_n^q - x_n). 
    \end{equation}
\end{remark}

\begin{prop}
    \label{prop:qred-is-left-adjoint}
    The functor $(-)_{qred}: \CAlgfgq \to \CAlgfgqredq$ mapping $A \mapsto A/\Gamma_q(A)$ is left-adjoint to the inclusion. 
\end{prop}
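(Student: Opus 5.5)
The plan is to prove the adjunction by checking the universal property of the quotient $A \to A/\Gamma_q(A)$. Let $\eta_A: A \to A/\Gamma_q(A)$ be the quotient map. It suffices to show three things: $A/\Gamma_q(A)$ is finitely generated and $q$-reduced; every $\Fq$-algebra morphism $\phi: A \to B$ with $B$ $q$-reduced factors uniquely through $\eta_A$; and this factorization is natural. Functoriality of $(-)_{qred}$ and the adjunction then follow formally, by the standard fact that a reflective subcategory is obtained whenever every object has a universal arrow into it.

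First I check that $A/\Gamma_q(A)$ lies in $\CAlgfgqredq$. Finite generation is inherited from $A$, since the images of generators generate the quotient. For $q$-reducedness, take any class $[f]$. Then $[f]^q - [f] = [f^q - f]$, and $f^q - f$ lies in $\Gamma_q(A)$ by Definition~\ref{def:q-radical}, so $[f]^q = [f]$. Note that this works directly with the generating set $\{f^q - f : f \in A\}$; Remark~\ref{rem:basis-of-qradical} is not needed, although it does confirm that the ideal is finitely generated.

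Next comes the universal property. Let $\phi: A \to B$ be a morphism of $\Fq$-algebras with $B$ $q$-reduced. Since $\phi$ is a ring homomorphism, $\phi(f^q - f) = \phi(f)^q - \phi(f) = 0$ in $B$. Hence $\phi$ kills every generator of $\Gamma_q(A)$, and therefore kills the whole ideal. By the universal property of quotient rings, $\phi$ factors as $\bar\phi \circ \eta_A$, and the factorization is unique because $\eta_A$ is surjective. This gives a bijection
\[
\mathrm{Hom}_{\CAlgfgq}(A, B) \cong \mathrm{Hom}_{\CAlgfgqredq}(A/\Gamma_q(A), B).
\]

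Finally, naturality. For $g: A \to A'$, the composite $\eta_{A'} \circ g$ lands in a $q$-reduced algebra, so by the universal property it descends to a map $g_{qred}: A/\Gamma_q(A) \to A'/\Gamma_q(A')$. This defines the functor on morphisms, and uniqueness of factorizations yields both functoriality and naturality of the bijection.

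No step here is a genuine obstacle. The only point needing care is that the ideal in Definition~\ref{def:q-radical} is generated by all elements $f^q - f$, not just by those for generators of $A$. It is the full generating set that makes $q$-reducedness of the quotient immediate, while the homomorphism property of $\phi$ handles the kernel containment.
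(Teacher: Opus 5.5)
Your proof is correct and complete. The quotient $A/\Gamma_q(A)$ is $q$-reduced directly from the full generating set of Definition~\ref{def:q-radical}. Any $\phi$ into a $q$-reduced algebra kills each $f^q - f$ because it is a ring homomorphism. The universal-arrow criterion then gives the left adjoint, with functoriality and naturality following formally.

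The paper itself gives no proof of this proposition. It asserts it as a standard fact, parallel to Proposition~\ref{prop:red-is-left-adjoint}, which is cited to Eisenbud. Your universal-property argument is exactly the standard argument that fills this in, so there is no competing route to compare against.

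You are also right that Remark~\ref{rem:basis-of-qradical} is not needed here. Its content rests on the Frobenius $f \mapsto f^q$ being an $\Fq$-algebra endomorphism. It only matters later, when the paper needs $\Gamma_q(A)$ to be generated by the finitely many elements $x_i^q - x_i$, so that $q$-reduction can be expressed diagrammatically in Lemma~\ref{lem:diagram-qreduction} and used in the completeness proof for $\LangSpq$.
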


\begin{prop}
    \label{prop:finite-field-anti-equiv}
    For a finite field $\Fq$, the coordinate ring functor $\Fq[-]$ restricted to the $\Fq$-rational loci defines an equivalence of categories 
    \begin{equation}
        \Fq[-]: {\Affqq}^{op} \simeq \CAlgfgqredq
    \end{equation}
    Denote its inverse functor as $\Spec$. 
\end{prop}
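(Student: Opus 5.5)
We need to show that $\Fq[-]$, restricted to $\Affqq$, lands in $\CAlgfgqredq$, is fully faithful, and is essentially surjective onto the $q$-reduced finitely generated algebras. We follow the pattern of Proposition~\ref{prop:alg-closed-anti-equiv}, with the Finite Field Nullstellensatz (Proposition~\ref{prop:ff-nullstellensatz}) playing the role of Hilbert's.

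\textbf{The functor lands in $q$-reduced algebras.} Let $V \subset \Fq^n$ be an $\Fq$-rational locus, and identify $V$ with $V^{(\Fq)}(J)$ for some ideal $J$. By Proposition~\ref{prop:sheaf-ring},
\[
\Fq[V] \cong \Fq[x_1,\dots,x_n]/I_{\Fq}(V).
\]
Every element $f \in \Fq[V]$ is a function $V \to \Fq$, and every $a\in\Fq$ satisfies $a^q = a$. Hence $f^q - f$ vanishes on $V$, so $f^q = f$ in $\Fq[V]$. The algebra is also visibly finitely generated.

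\textbf{Full faithfulness.} For rational loci $V, W$, a morphism $V \to W$ in $\Affqq$ is a polynomial mapping. It is determined by the coordinate functions $\phi^*(y_j) \in \Fq[V]$, and precomposition gives an algebra map $\Fq[W] \to \Fq[V]$.

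Faithfulness holds because $\phi$ is recovered from the images of the coordinates $y_j$.

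For fullness, take an algebra map $\psi: \Fq[W] \to \Fq[V]$. Choose polynomial representatives $f_j$ of $\psi(\bar y_j)$; these define a polynomial map $\Fq^n \to \Fq^m$. We must check that it sends $V$ into $W$. Let $g \in I_{\Fq}(W)$. Then $g(f_1,\dots,f_m) = \psi(\bar g) = 0$ in $\Fq[V]$, so $g\circ f$ vanishes on $V$. Since $V$ and $W$ are $\Fq$-rational, $f$ maps $V$ into the common zero set of $I_{\Fq}(W)$ within $\Fq^m$, which is $W$. (In fact, Proposition~\ref{prop:affqq-is-finset} already says that any function between such sets is polynomial.)

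\textbf{Essential surjectivity.} This is the main step. Let $A$ be finitely generated and $q$-reduced, with a presentation $A \cong \Fq[x_1,\dots,x_n]/J$. Because each generator satisfies $x_i^q = x_i$ in $A$, we have $\Gamma_q^n \subseteq J$ (Remark~\ref{rem:basis-of-qradical}). Now set $V = V^{(\Fq)}(J)$. By Proposition~\ref{prop:ff-nullstellensatz},
\[
I_{\Fq}(V) = J + \Gamma_q^n = J,
\]
so $\Fq[V] \cong \Fq[x]/J \cong A$.

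The care needed here is mostly bookkeeping:
\begin{itemize}
\item The statement about $I_{\Fq}$ concerns vanishing on $\Fq$-points, so it is the rational locus, not the full variety over $\Fqbar$, that must be used as the object.
\item Showing that the Frobenius relation on generators forces $J \supseteq \Gamma_q^n$ needs only generators, which is exactly the content of the remark.
\end{itemize}

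A quasi-inverse $\Spec$ is then obtained by choosing such a presentation for each $A$. Functoriality follows from full faithfulness.
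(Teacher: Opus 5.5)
Your proof is correct, but it takes a different route from the paper's. The paper gives only a Stone-duality sketch. It passes through $\Affqq\simeq\mathsf{FinSet}$ (Proposition~\ref{prop:affqq-is-finset}) and asserts that finitely generated $q$-reduced $\Fq$-algebras are precisely the algebras $\Fq^S$ of $\Fq$-valued functions on finite sets $S$. Its essential surjectivity therefore rests on outside references and an unstated structure theorem: such an $A$ is finite-dimensional, every element is a root of the split separable polynomial $t^q-t$, so $A$ is a product of copies of $\Fq$.

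You instead mirror the algebraically closed case. You derive essential surjectivity from the Finite Field Nullstellensatz (Proposition~\ref{prop:ff-nullstellensatz}), which the paper states but does not actually invoke in this proof. Your argument stays inside the paper's polynomial-constraint viewpoint and makes the quasi-inverse explicit as $\Fq[x_1,\dots,x_n]/J\mapsto V(J)$ with $J\supseteq\Gamma_q^n$. That is exactly how $\Spec$ is used later (Proposition~\ref{prop:spq-complete}, Proposition~\ref{prop:gag-as-csp}). The paper's version buys the conceptual picture of a nonbinary finite Stone duality, and explains at once why every function is polynomial.

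Two small remarks. First, your full-faithfulness step is just the restriction, to a full subcategory, of the paper's earlier proposition that $k[-]\colon\Affk^{op}\to\CAlgfgk$ is fully faithful for every field $k$, so you could simply cite it. Second, in the essential-surjectivity step, $\Gamma_q^n\subseteq J$ and the roots of $x^q-x$ in $\Fqbar$ are exactly the elements of $\Fq$, so $V(J)\subseteq\Fq^n$. Hence the variety and its rational locus coincide. This is what certifies $V$ as an object of $\Affqq$, and it makes your first bookkeeping caveat moot.
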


\begin{proof}
    Since $\Affqq \simeq \mathsf{FinSet}$, this is really just a nonbinary version of the well known Stone duality between finite boolean algebras and finite sets \cite{johnstone_stone_1982}. The $q$-reduced commutative algebras over $\Fq$ correspond one-to-one with $\Fq$-valued functions out of finite sets, and $\Fq[-]$ sends a set to its algebra of $\Fq$-valued functions \cite{lidl_finite_1997}. For a more algebraic perspective see also \cite{grothendieck_ements_1960}. 
\end{proof}

\begin{notation}
    Denote the $q$-reduced free commutative algebras 
    \begin{equation}
        \mathcal R_q^n \coloneq S_{\Fq}(n)/\Gamma_q^n, 
    \end{equation}
    and denote the full subcategory of $\CAlgfgq$ consisting of such algebras as $\Polyqqred$, and $F_q$ its inclusion functor. 
\end{notation} 

% \begin{remark}
%     These $\mathcal R_q^n$ serve as the $q$-reduced versions of free commutative algebras, in that the following commutes: 
%     % https://q.uiver.app/#q=WzAsNSxbMCwwLCJcXHtcXEZxXm46blxcaW4gXFxtYXRoYmIgTlxcfSJdLFswLDEsIlxcQWZmcXEiXSxbMSwxLCJcXENBbGdmZ3FyZWRxIl0sWzEsMCwiXFx7XFxtYXRoY2FsIFJfcV5uOm4gXFxpbiBcXG1hdGhiYiBOXFx9Il0sWzIsMCwiXFxQb2x5cXFyZWQiXSxbMCwzLCJcXEZxWy1dIl0sWzMsMiwiRl9xIiwwLHsic3R5bGUiOnsidGFpbCI6eyJuYW1lIjoiaG9vayIsInNpZGUiOiJ0b3AifX19XSxbMCwxLCJHX3EiLDIseyJzdHlsZSI6eyJ0YWlsIjp7Im5hbWUiOiJob29rIiwic2lkZSI6InRvcCJ9fX1dLFsxLDIsIlxcRnFbLV0iLDJdLFszLDQsIiIsMix7InN0eWxlIjp7ImhlYWQiOnsibmFtZSI6Im5vbmUifX19XV0=
%     \[\begin{tikzcd}[ampersand replacement=\&]
%         {\{\Fq^n:n\in \mathbb N\}} \& {\{\mathcal R_q^n:n \in \mathbb N\}} \& \Polyqqred \\
%         \Affqq \& \CAlgfgqredq
%         \arrow["{\Fq[-]}", from=1-1, to=1-2]
%         \arrow["{G_q}"', hook, from=1-1, to=2-1]
%         \arrow[equals, from=1-2, to=1-3]
%         \arrow["{F_q}", hook, from=1-2, to=2-2]
%         \arrow["{\Fq[-]}"', from=2-1, to=2-2]
%     \end{tikzcd}\]
% \end{remark}

\section{The Polynomial Fragment} 
\label{section:poly-fragment} 

Consider the free-forgetful adjunction 
% https://q.uiver.app/#q=WzAsMixbMCwwLCJcXG1hdGhzZntTZXR9Il0sWzEsMCwiXFxDQWxnayJdLFswLDEsIlNfayIsMCx7ImN1cnZlIjotMn1dLFsxLDAsIlUiLDAseyJjdXJ2ZSI6LTJ9XSxbMiwzLCIiLDAseyJsZXZlbCI6MSwic3R5bGUiOnsibmFtZSI6ImFkanVuY3Rpb24ifX1dXQ==
\[\begin{tikzcd}
	{\mathsf{Set}} & \CAlgk
	\arrow[""{name=0, anchor=center, inner sep=0}, "{S_k}", curve={height=-12pt}, from=1-1, to=1-2]
	\arrow[""{name=1, anchor=center, inner sep=0}, "U", curve={height=-12pt}, from=1-2, to=1-1]
	\arrow["\dashv"{anchor=center, rotate=-90}, draw=none, from=0, to=1]
\end{tikzcd}\]
On finite sets $n$, $S_k(n)$ is just the polynomial algebra $k[x_1, \dots, x_n]$. Then let $T_k = U \circ S_k$ be the associated monad. 

\begin{definition}
    \label{def:polyk} 
    Let $\Polyk$ be the full subcategory of $\CAlgfgk$ given by the image of $S_k$ restricted to finite sets. In other words, let $\Polyk$ be the Kleisli category of $T_k$ restricted to finite sets, corresponding to the polynomial algebras $k[x_1, \dots, x_n]$. 
\end{definition}

\begin{restatable}[Sharps and Flats]{notation}{sharpsAndFlats}
    It is worth being careful about the equivalence between $\Polyk$ and $\Kl T_k |_{fin}$, the Kleisli category of $T_k$ restricted to finite objects. If $f \in (k[x_1, \dots, x_n])^m$ is a tuple of polynomials, denote by $f^\#$ its unique extension to an algebra morphism $S_k(m) \to S_k(n)$. This acts on a polynomial $g(y_1, \dots, y_m)$ by substituting $y_j$ with $f_j(x_1, \dots, x_n)$. Notice the contravariance implicit in this: algebra morphisms point and compose in the opposite direction than polynomials. Conversely, if $\varphi: S_k(m) \to S_k(n)$ is an algebra morphism, denote by $\varphi^\flat: m \to S_k(n)$ the corresponding tuple of polynomials defined by $\varphi^\flat_j = \varphi(x_j)$.  
\end{restatable}

Recall from Section~\ref{sec:gla} that for any finitary monad $T$, there is an equivalence $\Kl T|_{fin} \simeq \mathbb L_T^{op}$, where $\mathbb L_T$ denotes the Lawvere theory corresponding to $T$. In our case, the Lawvere theory is that of commutative algebras over $k$, which we denote $\mathbb L_{CAlg_k}$. Then: 

\begin{prop}
    \label{prop:forward-fragment-universal-complete}
    The generators and rewrite rules depicted in Figure~\ref{fig:lawvere-theory-calg-k} form a presentation of the PROP $\mathbb L_{CAlg_k} \simeq \Polyk^{op}$. 
\end{prop}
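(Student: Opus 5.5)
The plan is to combine the standard fact on Lawvere theories (the proposition recalled in Section~\ref{sec:gla}) with the usual argument that the Lawvere theory of an equational theory is presented, as a PROP, by its operations together with the cartesian structure. The equivalence $\mathbb L_{CAlg_k} \simeq \Polyk^{op}$ is immediate from that proposition: $T_k = U\circ S_k$ is finitary, since polynomials involve finitely many variables. Its Eilenberg--Moore algebras are exactly commutative $k$-algebras. Moreover, $\Kl T_k|_{fin}$ is, by Definition~\ref{def:polyk} and the sharps-and-flats notation, the category $\Polyk$. So what remains is to show that the generators and relations of Figure~\ref{fig:lawvere-theory-calg-k} present this Lawvere theory.

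For this I will define the semantics functor $\interp{-}\colon \langle S\rangle/E \to \Polyk^{op}$ on generators. Copy, delete, add, zero, scalar $r$, multiplication and unit are sent to the evident polynomial tuples (for instance, multiplication $2\to 1$ goes to $x_1x_2$, and copy $1\to 2$ goes to $(x_1,x_1)$). Soundness is a check that each rule is a polynomial identity or an instance of substitution, so I will omit it. The rules fall into four groups: the commutative (co)monoid rules; the rules making copy and delete natural (\polycp, \polydel-type rules such as \addcp, \adddel, \multcp, \multdel, \kcopy, \kdel, \onecp, \zerocp, \zerodel, \onedel); the bialgebra rules \zxbialg and \zmbialg; and the ring and module axioms (\xmdistr, \kdistr, \kadd, \km, \kassoc, \oneInK, \zeroInK, \kOnZero).

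Universality: every $f \in (k[x_1,\dots,x_n])^m$ is built as a diagram. I copy each input wire enough times, form each monomial with multiplication trees (using the unit for the empty monomial), scale with scalar boxes, and sum with addition trees (using zero for the empty sum), one output at a time.

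Completeness is the main obstacle. I will define a normal form: every diagram $n\to m$ is rewritten as a copy layer followed by, for each output, an addition tree of scaled multiplication trees of distinct monomials in sorted order with nonzero coefficients. The reduction proceeds by induction on the term structure. Composing a normal form with a generator, or placing two normal forms side by side, yields a normal form after applying the rules. Deletes and copies are pushed to the inputs by the naturality rules. Multiplication is pushed past addition using \xmdistr and past scalars using \km. Monomials are merged using associativity, commutativity and fusion, like terms are combined with \kadd, and zero coefficients are eliminated with \kOnZero and the unit laws. Distinct normal forms denote distinct polynomial tuples, because a polynomial has a unique representation as a sorted sum of monomials with nonzero coefficients. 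Hence $\interp{D_1}=\interp{D_2}$ forces equal normal forms and therefore $D_1 = D_2$ in $\langle S\rangle/E$. Alternatively, I could cite the general theorem that any equational presentation of a Lawvere theory, together with the copy/delete comonoid and naturality axioms, presents it as a PROP (Bonchi--Zanasi, Lack). The explicit normal form then serves only to confirm that the listed rules contain all the commutative-algebra axioms needed.
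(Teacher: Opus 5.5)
Your proposal is correct, and its main route differs from the paper's. The paper's proof is essentially your closing alternative: it declares Figure~\ref{fig:lawvere-theory-calg-k} to be the standard presentation of the Lawvere theory of commutative $k$-algebras and reads the generators as you do. It then simply asserts that the rules make each polynomial tuple expressible as a diagram uniquely up to rewriting, with no normal-form argument.

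Your main route instead proves universality and faithfulness directly, by reducing every diagram to a canonical form (a copy layer followed by sorted sums of scaled monomials). This gives a self-contained check that the particular rules listed actually suffice, which the citation leaves implicit. For example, additive inverses are not a separate axiom: they come from the scalar $-1$ via \oneInK, \kadd and \zeroInK. Similarly, $0\cdot x = 0$ has to be derived from \onedel, \zeroInK, \km and \munit. Your route also makes explicit that the last step relies on uniqueness of coefficients of \emph{formal} polynomials, which is what $\Polyk$ consists of as a Kleisli category. So, for instance, $x^q$ and $x$ over $\Fq$ correctly receive different normal forms.

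The cost is bookkeeping that your sketch glosses over. Normal forms are only unique once you fix a canonical wiring of the copy layer. Also, before \kadd or \xmdistr can apply, you must use the naturality rules (\multcp, \onecp, \addcp, \kcopy) in one of two directions. Right-to-left merges identical monomial subdiagrams into a single copied subdiagram; left-to-right duplicates an entire addition tree.
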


\begin{figure*}
\caption{The Theory $\mathbb L_{CAlg_K}$ of Commutative Algebras over $k$} 
\label{fig:lawvere-theory-calg-k}
    % generators 
    \input{./figures/copy-text.tikz} \qquad \input{./figures/delete-text.tikz}\qquad \input{./figures/add-text.tikz}\qquad \input{./figures/zero-text.tikz}\qquad \input{./figures/mult-text.tikz} \qquad\input{./figures/one-text.tikz} \qquad \input{./figures/scalar-text.tikz}  \quad $\forall k \in K$
    % \rule{\textwidth}{0.4pt}
    % rewrites 
    % ---------------------------------------------
    \input{./figures/L-CAlg-Rules.tikz}
\end{figure*}

\begin{proof}
    This is the standard presentation of the Lawvere theory of commutative algebras over $k$, with the following interpretation for the diagrams: the white dots 
    $\input{./figures/copy.tikz}$ and $$ 
    are read as copying and its counit deleting, the grey dots 
    $\input{./figures/add.tikz}$ and $$ 
    are read as adding and its unit zero, the yellow half-moons 
    $\input{./figures/mult.tikz}$ and $\begin{tikzpicture}
	\begin{pgfonlayer}{nodelayer}
		\node [style=multop] (0) at (-0.5, 0) {};
		\node [style=none] (3) at (0.5, 0) {};
	\end{pgfonlayer}
	\begin{pgfonlayer}{edgelayer}
		\draw (0) to (3.center);
	\end{pgfonlayer}
\end{tikzpicture}
$
    are read as multiplication and its unit one, and the ``squeeze'' gadgets $\begin{tikzpicture}
	\begin{pgfonlayer}{nodelayer}
		\node [style=kscalar] (1) at (0, 0) {$\kappa$};
		\node [style=none] (2) at (-1, 0) {};
		\node [style=none] (3) at (1, 0) {};
	\end{pgfonlayer}
	\begin{pgfonlayer}{edgelayer}
		\draw (2.center) to (3.center);
	\end{pgfonlayer}
\end{tikzpicture}
$ labelled by elements of $k$  
    are read as the scaling operation by the field $k$ on a $k$-commutative algebra. The rewrite rules of the Lawvere theory guarantee that each (finite variable) polynomial with coefficients in $k$ can be expressed as a diagram in this language uniquely up to rewrites. 
\end{proof} 

\begin{restatable}[Scalable Notation]{notation}{scalable}
    We can now unambiguously introduce scalable notation for diagrams in $\mathbb L_{CAlg_k}$. For $n \in \mathbb N$ we define $n$-labelled wire and gates, as depicted in Figure~\ref{fig:scalable-notation}. 
    
    For every polynomial tuple $f \in (k[x_1, \dots, x_n])^m$, we abbreviate the diagram (unique up to rewrites) that corresponds to $f^\#$ in $\mathbb L_{CAlg_k}$ as $\begin{tikzpicture}
	\begin{pgfonlayer}{nodelayer}
		\node [style=rmat] (0) at (0, 0) {$f$};
		\node [style=none] (1) at (-1, 0) {};
		\node [style=none] (2) at (1, 0) {};
		\node [style=none] (3) at (-1.5, 0) {$n$};
		\node [style=none] (4) at (1.5, 0) {$m$};
	\end{pgfonlayer}
	\begin{pgfonlayer}{edgelayer}
		\draw (1.center) to (2.center);
	\end{pgfonlayer}
\end{tikzpicture}
$. 
    
    For example, for the polynomial tuple $f = (x_1 + x_2, x_1 x_3 + x_1^2) \in (k[x_1, x_2,x_3])^2$, we have 
    \begin{equation}
        \begin{tikzpicture}
	\begin{pgfonlayer}{nodelayer}
		\node [style=rmat] (0) at (0, 0) {$f$};
		\node [style=none] (1) at (-1, 0) {};
		\node [style=none] (2) at (1, 0) {};
		\node [style=none] (3) at (-1, 0.5) {$3$};
		\node [style=none] (4) at (1, 0.5) {$2$};
	\end{pgfonlayer}
	\begin{pgfonlayer}{edgelayer}
		\draw (1.center) to (2.center);
	\end{pgfonlayer}
\end{tikzpicture}
 = \input{./figures/example-poly-constr.tikz}
    \end{equation}
\end{restatable}

\begin{figure*}
    \caption{Scalable Notation for $\mathbb L_{CAlg_k}$} 
    \label{fig:scalable-notation} 
    \centering 
    \begin{subfigure}{0.3\textwidth}
        \centering 
        \input{./figures/scalable-wire.tikz} \quad $\coloneq$ \quad \input{./figures/scalable-wire-constr.tikz}
        % \caption{Scalable Wires} 
        % \label{fig:scalable-wires}
    \end{subfigure}\\[1.5em]

    \begin{subfigure}{0.3\textwidth}
        \centering 
        \input{./figures/scalable-copy.tikz} \quad $\coloneq$ \quad \input{./figures/scalable-copy-constr.tikz} 
        % \caption{Scalable Copy} 
        % \label{fig:scalable-copy}
    \end{subfigure}
    \begin{subfigure}{0.3\textwidth}
        \centering 
        \input{./figures/scalable-add.tikz} \quad $\coloneq$ \quad \input{./figures/scalable-add-constr.tikz} 
        % \caption{Scalable Addition} 
        % \label{fig:scalable-add}
    \end{subfigure}
    \begin{subfigure}{0.3\textwidth}
        \centering 
        \input{./figures/scalable-mult.tikz} \quad $\coloneq$ \quad \input{./figures/scalable-mult-constr.tikz} 
        % \caption{Scalable Multiplication}
        % \label{fig:scalable-mult}
    \end{subfigure}\\[1.5em]

    \begin{subfigure}{0.3\textwidth}
        \centering 
        \input{./figures/scalable-delete.tikz} \quad $\coloneq$ \quad \input{./figures/scalable-delete-constr.tikz} 
        % \caption{Scalable Delete}
        \label{fig:scalable-delete}
    \end{subfigure}
    \begin{subfigure}{0.3\textwidth}
        \centering 
        \input{./figures/scalable-zero.tikz} \quad $\coloneq$ \quad \input{./figures/scalable-zero-constr.tikz} 
        % \caption{Scalable Zero}
        \label{fig:scalable-zero}
    \end{subfigure}
    \begin{subfigure}{0.3\textwidth}
        \centering 
        \input{./figures/scalable-one.tikz} \quad $\coloneq$ \quad \input{./figures/scalable-one-constr.tikz} 
        % \caption{Scalable One}
        % \label{fig:scalable-one}
    \end{subfigure}
\end{figure*}

Notice that the contravariance between morphism composition and polynomial composition has been inherited by our diagrammatic language. Diagrammatically, everything is drawn in the direction of the Lawvere theory $\mathbb L_{CAlg_k}$. So $$ corresponds to a tuple of polynomials $(k[x_1, \dots, x_n])^m$, which corresponds to a morphism $f^\# \in \Polyk(m, n)$. The necessity for this will become clear in Section~\ref{sec:structured-spans-of-alg-varieties} when we construct the language for spans of affine varieties.

\begin{restatable}{prop}{scalableRewrites}
    From the rewrite rules of $\mathbb L_{CAlg_k}$ it is possible to derive the rewrite rules about polynomials in the scalable notation, as depicted in Figure~\ref{fig:scalable-rewrites}. 
\end{restatable}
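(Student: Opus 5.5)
The derivation reduces to the fact, already established in Proposition~\ref{prop:forward-fragment-universal-complete}, that the generators and rules of Figure~\ref{fig:lawvere-theory-calg-k} present $\mathbb L_{CAlg_k}\simeq\Polyk^{op}$. That fact is \emph{completeness} for the polynomial fragment: two diagrams of $\mathbb L_{CAlg_k}$ are equal modulo the rules precisely when they denote the same tuple of polynomials, i.e.\ the same algebra morphism $S_k(m)\to S_k(n)$. So it suffices to check that, for each rule in Figure~\ref{fig:scalable-rewrites} (\textsc{poly-copy}, \textsc{poly-del}, \textsc{poly-add}, \textsc{poly-mult}, \textsc{poly-comp}), the two sides have the same semantics.

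First I will unfold the scalable notation of Figure~\ref{fig:scalable-notation}. Each scalable gadget on $n$ wires is the $n$-fold tensor of the basic generator, with wires rearranged by a fixed permutation. Its semantics is therefore the corresponding componentwise polynomial tuple:
\begin{itemize}
\item the $n$-fold copy gives $(x,x)$;
\item $n$-fold addition gives $x+y$ componentwise;
\item $n$-fold multiplication gives $x\cdot y$ componentwise;
\item delete, zero and one give the empty tuple, $0$ and $1$ respectively.
\end{itemize}
The box labelled $f$ denotes $f^\#$ by definition.

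Next I will compute both sides of each rule, keeping in mind that diagram composition corresponds to substitution of polynomials in the reversed order noted after the sharps-and-flats notation.
\begin{itemize}
\item \textsc{poly-comp}: $f$ followed by $g$ denotes $(g\circ f)^\#=f^\#\circ g^\#$, since substituting $f$ into $g$ is functorial.
\item \textsc{poly-copy}: copying after $f$ versus $f\otimes f$ after copying both denote the tuple $(f,f)$.
\item \textsc{poly-del}: deleting after $f$ is deletion, because the empty tuple composes to the empty tuple.
\item \textsc{poly-add} and \textsc{poly-mult}: these state that $f+g$ and $f\cdot g$ are realised by copying the inputs, applying $f\otimes g$, and then adding or multiplying componentwise. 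This matches the ring operations of $k[x_1,\dots,x_n]$ applied pointwise to tuples.
\end{itemize}
Each check is immediate once the semantics are written down. Completeness then turns each semantic equality into a derivation from the rules of Figure~\ref{fig:lawvere-theory-calg-k}.

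The only delicate part, and the main obstacle, is bookkeeping. The scalable gadgets involve wire permutations, and direction conventions are easy to get wrong: a box $f$ drawn from $n$ to $m$ corresponds to a morphism $S_k(m)\to S_k(n)$. I will fix the interleaving convention for $n$-fold copy once, check it against the $n=1$ case, and argue by induction on $n$ using naturality of symmetries. If a more syntactic derivation is wanted, the fallback is structural induction on the polynomial $f$. The base cases are variables, constants, $+$ and $\times$, which are direct instances of \textsc{add-cp}, \textsc{mult-cp}, \textsc{k-copy}, \textsc{add-del}, \textsc{zero-cp} and related rules. The inductive step uses coassociativity and cocommutativity of copy.
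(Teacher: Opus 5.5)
Your proof is correct. For \textsc{poly-add}, \textsc{poly-mult} and \textsc{poly-comp} it coincides with the paper's, which simply appeals to $\mathbb L_{CAlg_k}$ being the Lawvere theory of commutative $k$-algebras.

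For \textsc{poly-copy} and \textsc{poly-del}, however, the paper takes a different, purely syntactic route: structural induction on the diagram term $f$ (generators, sequential composition, tensor).
\begin{itemize}
\item For \textsc{poly-copy}, the base cases are the copy generator (via \textsc{z-assoc} and \textsc{z-symm}), delete (via \textsc{z-unit}), and the algebraic generators (via \textsc{add-cp}, \textsc{zero-cp}, \textsc{mult-cp}, \textsc{one-cp}, \textsc{k-copy}).
\item For \textsc{poly-del}, the base cases are copy (via \textsc{z-unit}) and the algebraic generators (via \textsc{add-del}, \textsc{zero-del}, \textsc{mult-del}, \textsc{one-del}, \textsc{k-del}).
\item The inductive steps use only the hypothesis and symmetric monoidal coherence.
\end{itemize}

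Your uniform semantic argument is shorter, and it is legitimate given Proposition~\ref{prop:forward-fragment-universal-complete} as stated. But it leans on a result whose usual normal-form proof already needs copy and discard to be natural with respect to arbitrary terms, which is essentially \textsc{poly-copy} and \textsc{poly-del} themselves.

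The paper's induction establishes these two rules directly from the generator-level axioms, without any appeal to completeness. This buys two things. The rules hold in any PROP satisfying those axioms, since none of the ring equations are used (so, e.g., over semirings). And it is visible exactly which axioms are needed.

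If you use your fallback, induct on diagram terms rather than on polynomials. The copy and delete generators must then be base cases themselves. That is where coassociativity, cocommutativity and counitality of copy are used, not in the inductive step.
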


\begin{proof}
    See Appendix~\ref{append:scalable-rules}. 
\end{proof}

\begin{figure*}
    \caption{Rules for Scalable Notation} 
    \label{fig:scalable-rewrites} 
    \centering 
    \input{./figures/Scalable-Rules.tikz}
    % \begin{subfigure}{0.4\textwidth}
    %     \centering
    %     \tikzfig{p-copy-lhs} = \tikzfig{p-copy-rhs}
    %     \caption{Polynomial copies} 
    %     \label{eq:poly-copy} 
    % \end{subfigure}
    % \begin{subfigure}{0.4\textwidth}
    %     \centering
    %     \tikzfig{p-del-lhs} = \tikzfig{delete}
    %     \caption{Polynomial terminates} 
    %     \label{eq:poly-del} 
    % \end{subfigure}
    % \begin{subfigure}{0.4\textwidth}
    %     \centering
    %     \tikzfig{poly-add-lhs} = \tikzfig{poly-add-rhs}
    %     \caption{Polynomial addition} 
    %     \label{eq:poly-add} 
    % \end{subfigure}
    % \begin{subfigure}{0.4\textwidth}
    %     \centering
    %     \tikzfig{poly-mult-lhs} = \tikzfig{poly-mult-rhs}
    %     \caption{Polynomial multiplication} 
    %     \label{eq:poly-mult} 
    % \end{subfigure}
    % \begin{subfigure}{0.4\textwidth}
    %     \centering
    %     \tikzfig{poly-comp-lhs} = \tikzfig{poly-comp-rhs}
    %     \caption{Polynomial composition} 
    %     \label{eq:poly-comp} 
    % \end{subfigure}
\end{figure*}
\section{(Co)span Semantics}
\label{section:semantic-targets}

In this section we describe the target semantic categories of graphical commutative algebra and graphical algebraic geometry. We begin by recalling the definition of structured (co)spans, which is of particular importance here: 
\begin{definition}
    \label{def:structured-span}
    \cite{baez_structured_2020} Let $F: \mathcal A \to \mathcal X$ be a functor where $\mathcal X$ is a category with pullbacks. Define the bicategory of \textbf{structured spans} $_F \bSpan (\mathcal X)$: its objects are those of $\mathcal A$; its 1-cells are pairs of $\mathcal X$ morphisms of the form $FA \leftarrow S \rightarrow FB$ with $A, B \in \mathcal A$, which compose via pullback 
    % https://q.uiver.app/#q=WzAsNixbMSwxLCJTIl0sWzAsMiwiQSJdLFsyLDIsIkIiXSxbMywxLCJUIl0sWzQsMiwiQyJdLFsyLDAsIlNcXHRpbWVzX0JUIl0sWzAsMV0sWzAsMl0sWzMsMl0sWzMsNF0sWzUsMF0sWzUsM10sWzUsMiwiIiwyLHsic3R5bGUiOnsibmFtZSI6ImNvcm5lciJ9fV1d
    \[\begin{tikzcd}[row sep=small, column sep=small]
        && {S\times_{FB} T} \\
        & S && T \\
        FA && FB && FC
        \arrow[from=1-3, to=2-2]
        \arrow[from=1-3, to=2-4]
        \arrow["\lrcorner"{anchor=center, pos=0.125, rotate=-45}, draw=none, from=1-3, to=3-3]
        \arrow[from=2-2, to=3-1]
        \arrow[from=2-2, to=3-3]
        \arrow[from=2-4, to=3-3]
        \arrow[from=2-4, to=3-5]
    \end{tikzcd}\]
    and its 2-cells are commuting diagrams of the form 
    % https://q.uiver.app/#q=WzAsNCxbMSwwLCJTIl0sWzAsMSwiQSJdLFsyLDEsIkIiXSxbMSwyLCJTJyJdLFswLDFdLFswLDJdLFszLDFdLFszLDJdLFswLDNdXQ==
    \[\begin{tikzcd}[row sep=small, column sep=small]
        & S \\
        FA && FB \\
        & {S'}
        \arrow[from=1-2, to=2-1]
        \arrow[from=1-2, to=2-3]
        \arrow[from=1-2, to=3-2]
        \arrow[from=3-2, to=2-1]
        \arrow[from=3-2, to=2-3]
    \end{tikzcd}\]
    Quotienting the categories of 1-cells of $_F \bSpan (\mathcal X)$ by the 2-isomorphisms, we obtain a 1-category $_F \Span(\mathcal X)$, whose objects are those of $\mathcal A$, and morphisms are isomorphism classes of spans $FA \leftarrow S \rightarrow FB$. 

    For a functor $F: \mathcal A \to \mathcal X$ where $\mathcal X$ has pushouts, the bicategory of \textbf{structured cospans} $_F \bCsp(\mathcal X)$ and the 1-category $_F \Csp(\mathcal X)$ are defined dually to the above, with 1-cells of the form $FA \rightarrow S \leftarrow FB$ and composition given by pushout instead of pullback. 
\end{definition}

The following is an immediate well-known consequence of the dual nature of the definitions of structured cospans and spans: 
\begin{prop}
    \label{prop:duality-of-co-span}
    If $F: \mathcal A \to \mathcal X$ is a functor and $\mathcal X$ has pullbacks, then $\mathcal X^{op}$ has pushouts, and there is an equivalence 
    \begin{equation}
        _F \Span(\mathcal X) \simeq \,\, _{F^{op}}\Csp(\mathcal X^{op})
    \end{equation}
    where $F^{op}: \mathcal A^{op} \to \mathcal X^{op}$ is the dual functor of $F$. 
\end{prop}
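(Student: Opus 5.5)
The plan is to build an explicit identity-on-objects functor $\Phi: {}_F \Span(\mathcal X) \to {}_{F^{op}}\Csp(\mathcal X^{op})$ and check that it is well defined, functorial, and bijective on hom-sets. First, $\mathcal X^{op}$ has pushouts because a pushout in $\mathcal X^{op}$ is exactly a pullback in $\mathcal X$ read with the arrows reversed. So the target structured cospan category makes sense. Its objects are those of $\mathcal A^{op}$, which coincide with the objects of $\mathcal A$, and on objects we take $\Phi$ to be the identity.

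On morphisms, a span $FA \xleftarrow{l} S \xrightarrow{r} FB$ in $\mathcal X$ is the same data as a pair of arrows $l^{op}: F^{op}A \to S$ and $r^{op}: F^{op}B \to S$ in $\mathcal X^{op}$, that is, a structured cospan $F^{op}A \to S \leftarrow F^{op}B$. A 2-cell between spans is a morphism $\alpha: S \to S'$ in $\mathcal X$ commuting with the legs. Under the correspondence it becomes $\alpha^{op}: S' \to S$, which again commutes with the legs, now in $\mathcal X^{op}$. Hence the correspondence takes isomorphisms of spans to isomorphisms of cospans, using $\alpha^{-1}$ where needed to match directions. It is therefore a bijection on isomorphism classes, so $\Phi$ is well defined and bijective on each hom-set.

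Next I would check composition. The composite of spans is $S \times_{FB} T$ with its projections, and this pullback square in $\mathcal X$ is verbatim a pushout square in $\mathcal X^{op}$ of $S \leftarrow F^{op}B \to T$. So $\Phi$ sends a chosen composite to a valid choice of cospan composite. Since both composites are defined only up to the unique comparison isomorphism, they agree in the quotient 1-categories. For identities, the identity span $FA \xleftarrow{1} FA \xrightarrow{1} FA$ is sent to the identity cospan.

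Being the identity on objects and bijective on homs, $\Phi$ is an isomorphism of categories, and in particular an equivalence. I expect no real obstacle. The only point needing care is that composition in both categories is defined via a choice of (co)limit, so the argument must pass through the iso-class quotient rather than claim equality of representatives.
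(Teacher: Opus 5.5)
Your proposal is correct and takes essentially the paper's approach: the paper gives no written proof, only calling the statement an immediate consequence of the dual definitions of structured spans and cospans. Your identity-on-objects functor spells that duality out, with the pullback-in-$\mathcal X$ versus pushout-in-$\mathcal X^{op}$ check, the reversal of 2-cells, and the passage to isomorphism classes, and in fact shows the two categories are isomorphic rather than merely equivalent.
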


\begin{definition}
    In the structured span construct, the category $\mathcal A$ embeds into $_F \Span(\mathcal X)$ both covariantly by the graph embedding:
    $$G_{\Span}: \quad \left(A \overset{f}{\rightarrow} B\right) \quad \mapsto \quad \left(FA = FA \overset{Ff}{\rightarrow} FB\right)$$
    and contravariantly by the cograph embedding:
    $$G_{\Span}^{op}: \quad \left(A \overset{f}{\rightarrow} B\right) \quad \mapsto \quad \left(FB \overset{Ff}{\leftarrow} FA = FA\right)$$
    Similarly in the structured cospan construct, the category $\mathcal A$ embeds into $_F \Csp(\mathcal X)$ covariantly by the graph embedding:
    $$G_{\Csp}: \quad \left(A \overset{f}{\rightarrow} B\right) \quad \mapsto \quad \left(FA \overset{Ff}{\rightarrow} FB =  FB\right)$$
    and contravariantly by the cograph embedding:
    $$G_{\Csp}^{op}: \quad \left(A \overset{f}{\rightarrow} B\right) \quad \mapsto \quad \left(FB = FB \overset{Ff}{\leftarrow}  FA\right)$$
\end{definition}

The following are not new results, but are recalled here as we shall need it immediately. 
\begin{prop}
    \label{prop:calg-pushout}
    For any field $k$, the category $\CAlgfgk$ of finitely-generated commutative algebras over $k$ has pushouts.
\end{prop}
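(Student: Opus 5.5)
The plan is to construct the pushout explicitly via tensor products and quotients, using finite presentations throughout. Let $C \xleftarrow{\alpha} A \xrightarrow{\beta} B$ be morphisms in $\CAlgfgk$. The candidate pushout is the relative tensor product
\begin{equation*}
    P \coloneq B \otimes_A C \;=\; (B \otimes_k C)\,/\,\bigl(\beta(a)\otimes 1 - 1 \otimes \alpha(a) : a \in A\bigr),
\end{equation*}
with the maps $B \to P$, $b \mapsto b \otimes 1$ and $C \to P$, $c \mapsto 1 \otimes c$. Commutativity of the square is immediate from the relations imposed.

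First I will check that $P$ has the universal property in the larger category $\CAlgk$ of all commutative $k$-algebras. It is a standard fact that $B \otimes_k C$ is the coproduct there: given $u : B \to D$ and $v : C \to D$, the map $b \otimes c \mapsto u(b)v(c)$ is the unique algebra map restricting to $u$ and $v$. If moreover $u\beta = v\alpha$, this map kills every generator $\beta(a)\otimes 1 - 1\otimes\alpha(a)$ of the ideal. It therefore factors uniquely through the quotient $P$, and uniqueness of the factorization follows because $P$ is generated by the images of $B$ and $C$.

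Second I will show that $P$ is finitely generated, so that it actually lies in $\CAlgfgk$. Choose presentations $B = k[x_1,\dots,x_n]/I$ and $C = k[y_1,\dots,y_m]/J$, and let $a_1,\dots,a_r$ generate $A$. Then $P$ is
\begin{equation*}
    k[x,y]\,/\,\bigl(I + J + (\tilde\beta(a_i) - \tilde\alpha(a_i))_{i}\bigr),
\end{equation*}
where $\tilde\beta(a_i)$ and $\tilde\alpha(a_i)$ are chosen polynomial lifts. This presentation is generated by the images of the $x_j$ and $y_l$. It suffices to impose the relations only for the generators $a_i$, because the set of $a \in A$ whose relation holds in a given quotient is a $k$-subalgebra of $A$. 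Since $\CAlgfgk$ is a full subcategory of $\CAlgk$, the universal property from the first step restricts to it.

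The main point requiring care is this second step. One has to argue that the finite presentation really computes $B\otimes_A C$, that is, that imposing relations for generators of $A$ suffices. The subalgebra argument above handles it. Incidentally, this same explicit form with generators $x, y$ modulo an ideal is what the paper's later pseudonormal form manipulates diagrammatically.
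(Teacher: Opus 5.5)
Your proposal is correct and takes the same route as the paper. The paper simply writes the pushout as the relative tensor product with the explicit presentation $k[x,y]/\bigl(I + J + (\tilde\beta(a_i)-\tilde\alpha(a_i))_i\bigr)$; you additionally verify the universal property and justify imposing relations only on generators of $A$, both of which the paper leaves implicit. One small remark: finite generation is already immediate because $P$ is a quotient of $B\otimes_k C$, so your subalgebra argument is needed only for the explicit finite presentation, not for membership in the subcategory.
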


\begin{proof}
    To construct the pushout: suppose $A = k[x_1, \dots, x_n]/I_A$, $B = k[y_1, \dots, y_m]/I_B$, $C = k[z_1, \dots, z_r]/I_C$ with maps $A \overset{\phi}{\longleftarrow} C \overset{\psi}{\longrightarrow} B$. Then 
    \begin{equation}
        \label{eq:explicit-pushout-of-calg}
        A \otimes_C B = \frac{k[x_1, \dots, x_n, y_1, \dots, y_m]}{I_A + I_B + \sum_{i = 1}^r(\phi(z_i) - \psi(z_i))}
    \end{equation}
    Where, by a slight abuse of notation, $I_A, I_B$ denote the images of $I_A, I_B$ via the embeddings
    $$k[x_1, \dots, x_n] \hookrightarrow k[x_1, \dots, x_n, y_1, \dots, y_m] \hookleftarrow k[y_1, \dots, y_m]$$
\end{proof}

\begin{corollary}
    \label{prop:affine-pullback}
    If $k$ is either algebraically closed or finite, the category $\Affk^{(k)}$ has pullbacks. 
\end{corollary}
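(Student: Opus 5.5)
The plan is to transport the existence of pushouts in commutative algebras (Proposition~\ref{prop:calg-pushout}) across the anti-equivalences of Propositions~\ref{prop:alg-closed-anti-equiv} and~\ref{prop:finite-field-anti-equiv}. An anti-equivalence $\mathcal C^{op} \simeq \mathcal D$ sends pullbacks in $\mathcal C$ to pushouts in $\mathcal D$ and conversely. So it suffices to show that the relevant subcategory of reduced (resp.\ $q$-reduced) finitely generated algebras has pushouts. If $k$ is algebraically closed, every variety is $k$-rational, so $\Affk^{(k)} = \Affk$ and we use $\CAlgfgredk$. If $k = \Fq$, we have $\Affk^{(k)} = \Affqq$ and we use $\CAlgfgqredq$.

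The key step is a general categorical observation. Let $\mathcal R \hookrightarrow \CAlgfgk$ be a full reflective subcategory with reflector $L$. These are given by Proposition~\ref{prop:red-is-left-adjoint} and Proposition~\ref{prop:qred-is-left-adjoint}, with $L = (-)_{red}$ or $(-)_{qred}$. Then $\mathcal R$ has all pushouts that $\CAlgfgk$ has. Given a span $A \leftarrow C \rightarrow B$ in $\mathcal R$, form the pushout $A \otimes_C B$ in $\CAlgfgk$ using the explicit formula~\eqref{eq:explicit-pushout-of-calg}, and apply $L$. For any $R \in \mathcal R$, the following hold naturally:
\[
\mathcal R(L(A\otimes_C B), R) \cong \CAlgfgk(A\otimes_C B, R) \cong \CAlgfgk(A,R)\times_{\CAlgfgk(C,R)}\CAlgfgk(B,R).
\]
Since $\mathcal R$ is full, the last expression is the set of cocones in $\mathcal R$, so $L(A\otimes_C B)$ is the pushout in $\mathcal R$. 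Concretely, the pushout is $(A\otimes_C B)/\nil$, resp.\ $(A\otimes_C B)/\Gamma_q$. It remains finitely generated, since it is a quotient of a finitely generated algebra.

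The step needing most care is bookkeeping of the equivalences, not mathematics. Write $\Spec$ for the inverse equivalence. The pullback of $V \to U \leftarrow W$ in $\Affk^{(k)}$ is then $\Spec$ of the pushout of $k[V] \leftarrow k[U] \to k[W]$ in the reduced subcategory. One should check that $\Spec$, being part of an equivalence, preserves the universal property. It does, since equivalences preserve all limits and colimits and the duality swaps them.

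In the finite field case one could alternatively just invoke Proposition~\ref{prop:affqq-is-finset}: pullbacks of finite sets along functions exist as $\{(v,w) : f(v)=g(w)\}$. This gives a one-line check that agrees with the algebraic construction.
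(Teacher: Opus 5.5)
Your proof is correct and follows the route the paper intends. The paper states the corollary without a written proof, directly after Proposition~\ref{prop:calg-pushout}, and its surrounding text (the reflectors $(-)_{red}$ and $(-)_{qred}$ preserving pushouts, together with the $\Spec$ anti-equivalences) points to exactly your deduction; your explicit reflective-subcategory argument and the observation $\Affk^{(k)}=\Affk$ for algebraically closed $k$ fill in the details it leaves implicit.
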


We are now ready to define the three target semantic categories we meet in this paper: one for graphical commutative algebra, and two for graphical algebraic geometry. As we shall see later in proposition \ref{prop:semantic-props}, structuring the (co)spans with respect to the appropriate inclusion functors ensures that each of these semantic categories is a PROP. 

\begin{definition}[Semantic Category of Graphical Commutative Algebra]
    Consider the structured cospan category $\FrCsp$, where $F$ is the inclusion of the free commutative algebras into the finitely generated commutative algebras. The objects of $\FrCsp$ are polynomial algebras $k[x_1, \dots, x_n]$, and its morphisms are structured cospans
        $$k[x_1, \dots, x_n] \overset{f}{\rightarrow} A \overset{g}{\leftarrow} k[y_1, \dots, y_m]$$
    where $A$ is some finitely generated commutative algebra over $k$. 
\end{definition}

\begin{definition}[Semantic Category of Graphical Algebraic Geometry]
    Consider the structured span categories $\FrSpk$ and $\FrSpq$, where $G$ and $G_q$ are the inclusions of the free spaces $k^n$, $\Fq^n$ in the categories of affine varieties. The objects are free spaces $k^n$ and $\Fq^n$ respectively, and morphisms in $\FrSpk$ are spans of polynomial maps 
    $$k^n \leftarrow V \rightarrow k^m \quad , \quad \Fq^n \leftarrow W \rightarrow \Fq^m$$ 
    respectively, where $V$ is an affine variety over $k$ and $W$ is a rational locus over $\Fq$. 
\end{definition}

\begin{prop}
    \label{prop:semantic-props}
    The three semantic categories are PROPs. 
\end{prop}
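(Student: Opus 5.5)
We need to show each of $\FrCsp$, $\FrSpk$ and $\FrSpq$ is a strict symmetric monoidal category whose objects are in bijection with $\mathbb N$ and whose monoidal product on objects is addition. By Proposition~\ref{prop:duality-of-co-span}, together with the anti-equivalences of Propositions~\ref{prop:alg-closed-anti-equiv} and~\ref{prop:finite-field-anti-equiv}, the two span categories are equivalent to structured cospan categories of reduced, respectively $q$-reduced, algebras. Even so, we treat all three uniformly: the argument only uses pushouts (or pullbacks) together with coproducts (respectively products) in the ambient category, which are preserved in the appropriate sense by the inclusion functor.

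The first step is to identify objects with natural numbers. For $\FrCsp$ the objects are the polynomial algebras $k[x_1,\dots,x_n]$, i.e. $F(n)$; for the span categories they are $k^n$ and $\Fq^n$. We take the objects to be the $n$ themselves, structured by the functor $n\mapsto F(n)$, so $Ob\cong\mathbb N$ holds by definition. Composition is well defined on isomorphism classes because pushouts (Proposition~\ref{prop:calg-pushout}) and pullbacks (Corollary~\ref{prop:affine-pullback}) exist and are unique up to unique isomorphism. Associativity and unitality hold up to 2-cell isomorphism, and hence strictly in the quotient 1-category $_F\Csp$; this is the standard structured-cospan fact from \cite{baez_structured_2020}.

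The second step defines the monoidal structure. In $\CAlgfgk$ the coproduct is $\otimes_k$, and $F(n)\otimes_k F(m)\cong F(n+m)$ via $k[x_1,\dots,x_n]\otimes k[y_1,\dots,y_m]\cong k[x_1,\dots,x_n,y_1,\dots,y_m]$, with $F(0)=k$ the initial object. We define $n\otimes m=n+m$ on objects and take the tensor of cospans $F n\to A\leftarrow F m$ and $F n'\to A'\leftarrow F m'$ to be $F(n+n')\to A\otimes_k A'\leftarrow F(m+m')$, with the legs induced by the coproduct. Dually, on the span side we use the product $V\times W$ together with $k^n\times k^m=k^{n+m}$; this is a variety (the product of zero sets in disjoint variables) and a rational locus in the finite case. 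The symmetry is the image, under the graph embedding $G_{\Csp}$, of the swap isomorphism $F(n+m)\to F(m+n)$ that permutes variables.

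The third step checks the axioms, and I expect this to be the main obstacle. It is a bookkeeping task: we must show that $\otimes$ is functorial, in particular that the interchange law holds. This amounts to showing that the pushout of coproducts is the coproduct of pushouts, $(A\otimes_C B)\otimes_k(A'\otimes_{C'}B')\cong(A\otimes_k A')\otimes_{C\otimes C'}(B\otimes_k B')$, canonically and compatibly with the legs. We can verify this either by colimits commuting with colimits, or directly from the explicit presentation (\ref{eq:explicit-pushout-of-calg}): both sides are the polynomial algebra in all variables modulo the sum of the corresponding ideals. Strictness also needs care. Associators and unitors are only canonical isomorphisms at the level of apexes, but on objects $(n+m)+l=n+(m+l)$ holds on the nose, and any apex isomorphism commuting with the legs is a 2-cell. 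Hence the associator and unitor become identities in the quotient category, giving a strict monoidal category. Finally, the symmetry axioms (naturality, the hexagon identities, and $\sigma^2=\mathrm{id}$) follow because $G_{\Csp}$ is a functor and these identities already hold among permutations of variables in $\Polyk^{op}$.
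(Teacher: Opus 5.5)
Your proposal is correct and follows the same route as the paper's much terser proof: tensor product of algebras (dually, products of varieties) on objects and on both legs, with the symmetry given by the graph of the variable-swap isomorphism. You additionally supply the interchange-law and strictness checks that the paper leaves implicit. One small correction: naturality of the symmetry with respect to arbitrary (co)spans does not follow from $G_{\Csp}$ being a functor alone. It is still immediate, because the swap isomorphism of apexes $A\otimes_k A'\cong A'\otimes_k A$ provides the required 2-cell.
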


\begin{proof}
    The objects of $\FrCsp$ are the polynomial algebras $k[x_1, \dots, x_n]$, the monoidal product on objects is $k[x_1, \dots, x_n] \otimes k[y_1, \dots, y_m] = k[x_1, \dots, x_n, y_1, \dots, y_m]$; and on morphisms by the tensor-product of algebra morphisms on both legs of the cospans. 
    % \begin{equation}
    %     \begin{split}
    %         &\left(S_{n_1}(k) \overset{f_1}{\rightarrow} V \overset{g_1}{\leftarrow} S_{m_1}(k)\right) \otimes \left(S_{n_2}(k) \overset{f_2}{\rightarrow} W \overset{g_2}{\leftarrow} S_{m_2}(k)\right) \\ &= S_{n_1 + n_2}(k) \xrightarrow{f_1\otimes f_2} V\otimes W \xleftarrow{g_1 \otimes g_2} S_{m_1 + m_2}(k)
    %     \end{split}
    % \end{equation}
    % where $f_1 \otimes f_2$ acts on the first $n_1$ variables in $S_{n_1 + n_2}(k)$ according to $f_1$, and on the last $n_2$ variables according to $f_2$, and similarly for $g_1 \otimes g_2$. 

    The braidings are given by the graphs of 
    \begin{equation}
        \sigma: k[x_1, \dots, x_n, y_1, \dots, y_m] \to k[y_1, \dots, y_m, x_1, \dots, x_n]
    \end{equation}
    The proofs for the other two categories are similar. 
\end{proof}

The semantic categories for GCA and GAG are related, and we now explain this relation. 
First, notice that the structured (co)span construction is ``functorial'': 

\begin{prop}
    \label{prop:induced-functor-of-spans}
    \cite{baez_structured_2020} Given $F: \mathcal A \to \mathcal X$ and $F': \mathcal B \to \mathcal Y$, with $\mathcal X, \mathcal Y$ both having pullbacks, if there is a commutative square of functors 
    % https://q.uiver.app/#q=WzAsNCxbMCwwLCJcXG1hdGhjYWwgQSJdLFsxLDAsIlxcbWF0aGNhbCBYIl0sWzAsMSwiXFxtYXRoY2FsIEIiXSxbMSwxLCJcXG1hdGhjYWwgWSJdLFswLDEsIkYiXSxbMSwzLCJcXFBzaSJdLFswLDIsIlxcUGhpIiwyXSxbMiwzLCJGJyIsMl1d
    \[\begin{tikzcd}
        {\mathcal A} & {\mathcal X} \\
        {\mathcal B} & {\mathcal Y}
        \arrow["F", from=1-1, to=1-2]
        \arrow["\Phi"', from=1-1, to=2-1]
        \arrow["\Psi", from=1-2, to=2-2]
        \arrow["{F'}"', from=2-1, to=2-2]
    \end{tikzcd}\]
    and $\Psi$ preserves pullbacks, then there is a functor 
    \begin{equation}
        \begin{split}
            \Span(\Psi, \Phi): &_F \Span(\mathcal X) \to \,\, _{F'} \Span(\mathcal Y) 
        \end{split}
    \end{equation}
    which sends the structured span $(FA_1 \xleftarrow{f} X \xrightarrow{g} FA_2)$ to a structured span in $_F \Span(\mathcal Y)$ by hitting everything with $\Psi$. The dual statement is true for structured cospans. 
\end{prop}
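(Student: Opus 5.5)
We have to show that $\Span(\Psi,\Phi)$ is well defined on isomorphism classes of spans and that it respects identities and composition. We do this by checking each piece of structure of $_F\Span(\mathcal X)$ in turn.

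\emph{Well-definedness and structure.} Given a span $FA_1 \xleftarrow{f} X \xrightarrow{g} FA_2$, we apply $\Psi$ to get $\Psi F A_1 \xleftarrow{\Psi f} \Psi X \xrightarrow{\Psi g} \Psi F A_2$. The commutative square gives $\Psi F = F' \Phi$, so this equals $F'(\Phi A_1) \leftarrow \Psi X \rightarrow F'(\Phi A_2)$, which is a structured span in $_{F'}\Span(\mathcal Y)$. On objects the functor sends $A \mapsto \Phi A$. If two spans are related by a 2-isomorphism $h: X \to X'$ commuting with the legs, then $\Psi h$ is an isomorphism commuting with the image legs. Hence the assignment descends to isomorphism classes.

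\emph{Identities.} The identity on $A$ is $FA = FA = FA$. Its image is $F'\Phi A = F'\Phi A = F'\Phi A$, which is the identity on $\Phi A$.

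\emph{Composition.} This is the key step. Given composable spans $FA \leftarrow S \to FB$ and $FB \leftarrow T \to FC$, their composite has apex $S\times_{FB}T$ with the induced legs. Since $\Psi$ preserves pullbacks, $\Psi(S\times_{FB}T)$, together with $\Psi$ of the two projections, is a pullback of $\Psi S \to \Psi F B = F'\Phi B \leftarrow \Psi T$. The pullback in $\mathcal Y$ is determined only up to unique isomorphism commuting with the projections. So this isomorphism is a 2-isomorphism between $\Psi$ of the composite and the composite of the $\Psi$-images. The outer legs also agree, by functoriality of $\Psi$ applied to the leg compositions. Hence the two composites are equal in the 1-category.

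\emph{Cospan case.} The dual statement follows by applying the above to $\Psi^{op},\Phi^{op}$ and using Proposition~\ref{prop:duality-of-co-span}. Alternatively, one repeats the argument with pushouts in place of pullbacks.

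The only genuinely non-formal point is the composition step. There the preservation of pullbacks is essential, and one must make sure that the isomorphism it provides commutes with the legs. This holds because the legs are built from the projections.
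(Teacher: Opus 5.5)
Your proof is correct. The paper gives no argument of its own and simply defers to the cited work of Baez et al., whose proof is the same componentwise check you give: well-definedness on isomorphism classes, identities, and composition via the canonical comparison isomorphism, which is invertible because $\Psi$ preserves the relevant limits or colimits. For the cospan case, state explicitly that the dualized hypothesis is that $\Psi$ preserves pushouts, since that is exactly what makes $\Psi^{op}$ preserve pullbacks in your reduction.
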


\begin{corollary}
    In particular, in the situation of proposition \ref{prop:induced-functor-of-spans}, if $\Phi, \Psi$ are both equivalences of categories, then the induced functor $_F \Span(\mathcal X) \to \,\, _{F'} \Span(\mathcal Y) $ is also an equivalence of categories. 
\end{corollary}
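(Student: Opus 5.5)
The approach is to build an explicit inverse functor from the inverses of $\Phi$ and $\Psi$, and to show the two composites are naturally isomorphic to the identity. The delicate point is that an inverse of an equivalence commutes with $F$ and $F'$ only up to natural isomorphism, not on the nose.

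Choose quasi-inverses $\Phi^{-1}: \mathcal B \to \mathcal A$ and $\Psi^{-1}: \mathcal Y \to \mathcal X$, with unit and counit isomorphisms. Since $\Psi$ is an equivalence, it preserves all limits, so the hypothesis of Proposition~\ref{prop:induced-functor-of-spans} holds and $\Span(\Psi,\Phi)$ is defined. Its action on objects is $A \mapsto \Phi A$. For essential surjectivity on objects, take $B \in \mathcal B$ and pick $A$ with $\Phi A \cong B$. Here we should note that when isomorphic objects are not identified, an object isomorphism $\Phi A \cong B$ induces an isomorphism in $_{F'}\Span(\mathcal Y)$: the span $F'\Phi A \xleftarrow{=} F'\Phi A \xrightarrow{F'(\iota)} F'B$, where $\iota:\Phi A \to B$ is the chosen isomorphism.

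\textbf{Fullness.} Take a structured span $F'\Phi A_1 \xleftarrow{u} Y \xrightarrow{v} F'\Phi A_2$. Using $F'\Phi = \Psi F$, rewrite the legs as $\Psi F A_i$. Since $\Psi$ is essentially surjective, choose $X$ with an isomorphism $\alpha: \Psi X \cong Y$. Since $\Psi$ is full, find $f: X \to FA_1$ and $g: X \to FA_2$ with $\Psi f = u\alpha$ and $\Psi g = v\alpha$. Then $\alpha$ is a 2-isomorphism from $\Psi(FA_1 \leftarrow X \rightarrow FA_2)$ to the given span, so the two are equal in the quotient 1-category.

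\textbf{Faithfulness.} Suppose two spans $(f,g)$ on $X$ and $(f',g')$ on $X'$ have images related by an isomorphism $\beta: \Psi X \to \Psi X'$ commuting with the legs. Since $\Psi$ is fully faithful, $\beta = \Psi\gamma$ for a unique $\gamma: X \to X'$, and $\gamma$ is an isomorphism. Faithfulness of $\Psi$ then gives $f'\gamma = f$ and $g'\gamma = g$. So the original spans were already 2-isomorphic.

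\textbf{Conclusion and main obstacle.} A fully faithful, essentially surjective functor is an equivalence. The main obstacle is the treatment of objects: to make the equivalence literal rather than up to the relabelling isomorphism spans described above, one must check that those spans compose well. Composition is along pullbacks of identities, so this reduces to a routine check. The dual statement for structured cospans follows by applying the argument to the opposite categories, as in Proposition~\ref{prop:duality-of-co-span}.
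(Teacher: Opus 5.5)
Your argument is correct, but it takes a different route from the one the paper has in mind. The paper gives no proof. It presents the corollary ``in particular'' as a consequence of the functoriality in Proposition~\ref{prop:induced-functor-of-spans}; the implicit idea is to apply that construction to quasi-inverses $\Phi^{-1},\Psi^{-1}$ and obtain an inverse functor. Made rigorous, that route needs two extra ingredients. First, a version of the proposition for squares that commute only up to natural isomorphism, since $F\Phi^{-1}\cong\Psi^{-1}F'$ holds only up to isomorphism. Second, a check that both composites are naturally isomorphic to the identities. This is exactly the delicacy you flag in your opening sentence.

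Your fully-faithful-plus-essentially-surjective argument avoids all of this. The hom-set of ${}_F\Span(\mathcal X)$ from $A_1$ to $A_2$ depends only on $FA_1$ and $FA_2$, and the induced functor acts on it through $\Psi$ alone. Your argument also yields a sharper statement: full faithfulness uses only that $\Psi$ is an equivalence, and essential surjectivity uses only that $\Phi$ is essentially surjective. Full faithfulness of $\Phi$ is never used. The price is that you obtain the equivalence non-constructively, via choices of preimages, rather than as an explicit inverse functor. Existence is all the paper uses.

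Two presentational points. Your first paragraph announces an explicit inverse functor that you never build. Also, the ``main obstacle'' in your last paragraph is not an obstacle. The relabelling span $F'\Phi A \xleftarrow{=} F'\Phi A \xrightarrow{F'\iota} F'B$ is the image of the isomorphism $\iota$ under the graph embedding $\mathcal B\to{}_{F'}\Span(\mathcal Y)$, which is a functor, so the span is invertible. The criterion you invoke requires nothing further about how these spans compose.
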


Now to explain how our three semantic categories are related: recall from proposition \ref{prop:qred-is-left-adjoint} that there are functors 
\begin{equation}
    \begin{split}
        (-)_{red}: \CAlgfgk &\to \CAlgfgredk \\ 
        (-)_{qred}: \CAlgfgq &\to \CAlgfgqredq
    \end{split}
\end{equation}
each of which are left-adjoint to the inclusion in the opposite direction, and therefore they preserve pushouts. Moreover, they commute with the inclusions of the ``free'' algebras $F: \Polyk \hookrightarrow \CAlgfgk$ and $F_q: \Polyqqred \hookrightarrow \CAlgfgq$. The functoriality of the structured cospan construction (proposition \ref{prop:induced-functor-of-spans}) implies there are functors 
\begin{equation}
    \label{eq:functors-btx-cospan-semantic-targets}
    \begin{split}
        &\FrCsp \xrightarrow{(-)_{red}} \,\, _F \Csp \left( \CAlgfgredk \right) \\ 
        &\FrCspq \xrightarrow{(-)_{qred}} \,\, _{F_q} \Csp \left( \CAlgfgqredq \right)
    \end{split}
\end{equation}

Now recall that if $k$ is algebraically closed, then there is an anti-equivalence (\ref{prop:alg-closed-anti-equiv})
\begin{equation}
    \Spec: \CAlgfgredk \simeq \Affk^{op} 
\end{equation}
and if $\Fq$ is finite of size $q$, then there is an anti-equivalence (\ref{prop:finite-field-anti-equiv})
\begin{equation}
    \Spec: \CAlgfgqredq \simeq {\Affqq}^{op}
\end{equation}
Moreover, it is easy to check that these commute with the inclusions $F, F_q, G, G_q$ in the obvious way. So using the duality between span and cospan (\ref{prop:duality-of-co-span}) we obtain maps 
\begin{equation}
    \label{eq:semantic-target-equivalences}
    \begin{split}
        \FrCsp \xrightarrow{(-)_{red}} _F \Csp \left( \CAlgfgredk \right) &\simeq \,\,  \FrSpk \\ 
        \FrCspq \xrightarrow{(-)_{qred}} _{F_q} \Csp \left( \CAlgfgqredq \right) &\simeq \,\, \FrSpq
    \end{split}
\end{equation}

\section{Graphical Commutative Algebra}
\label{sec:cospans-of-calg} 

We now work towards a diagrammatic presentation, which we refer to as Graphical Commutative Algebra (GCA), of the structured cospan category $\FrCsp$. All lemmas, propositions, and theorems in this section are proven in Appendix~\ref{append:csp-proofs}, unless otherwise specified. 

\subsection{Defining the Calculus $\LangCsp$}

Our language $\LangCsp$ will be defined essentially as an amalgamation, modulo certain additional rewrite rules, of the Lawvere theory $\mathbb L_{CAlg_k}$ and its own opposite, so as to become a dagger-category: 

\begin{definition}
    Construct the PROP $\LangCsp = \mathbb L_{CAlg_k} +_E \mathbb L_{CAlg_k}^{op}$ where $E$ is the set of rewrite rules displayed in Figure~\ref{fig:rules-for-csp}. 
    
    We supply this PROP with an intended semantic functor 
    \begin{equation}
        \sem: \LangCsp \to \FrCsp
    \end{equation}
    by mapping each generator in $\mathbb L_{CAlg_k}$ via the cograph embedding, and each generator in $\mathbb L_{CAlg_k}^{op}$ via the graph embedding. 
    % https://q.uiver.app/#q=WzAsNyxbMCwwLCJcXG1hdGhiYiBMX3tDQWxnX2t9Il0sWzAsMiwiXFxtYXRoYmIgTF97Q0FsZ19rfV57b3B9Il0sWzAsMSwiXFxtYXRoYmIgTF97Q0FsZ19rfStcXG1hdGhiYiBMX3tDQWxnX2t9XntvcH0iXSxbMSwwLCJcXG1hdGhzZntQb2x5fV9rXntvcH0iXSxbMSwyLCJcXG1hdGhzZntQb2x5fV9rIl0sWzIsMSwiXFxGckNzcCJdLFsxLDEsIlxcbWF0aGJiIExfe0NBbGdfa30rX0UgXFxtYXRoYmIgTF97Q0FsZ19rfV57b3B9Il0sWzMsNSwiR197Q3NwfSJdLFs0LDUsIkdfe0NzcH1ee29wfSIsMl0sWzAsM10sWzEsNF0sWzAsMl0sWzEsMl0sWzIsNiwiIiwxLHsic3R5bGUiOnsiaGVhZCI6eyJuYW1lIjoiZXBpIn19fV0sWzYsNSwiXFxzZW0iLDFdXQ==
    \[\begin{tikzcd}[ampersand replacement=\&,cramped,row sep=small]
        {\mathbb L_{CAlg_k}} \& {\mathsf{Poly}_k^{op}} \\
        {\mathbb L_{CAlg_k}+\mathbb L_{CAlg_k}^{op}} \& {\LangCsp} \& \FrCsp \\
        {\mathbb L_{CAlg_k}^{op}} \& {\mathsf{Poly}_k}
        \arrow[from=1-1, to=1-2]
        \arrow[from=1-1, to=2-1]
        \arrow["{G_{Csp}}", from=1-2, to=2-3]
        \arrow[two heads, from=2-1, to=2-2]
        \arrow["\sem"{description}, from=2-2, to=2-3]
        \arrow[from=3-1, to=2-1]
        \arrow[from=3-1, to=3-2]
        \arrow["{G_{Csp}^{op}}"', from=3-2, to=2-3]
    \end{tikzcd}\]
\end{definition}

\begin{figure*}[h]
    \centering 
    \caption{Additional Rewrite Rules for $\LangCsp$} 
    \label{fig:rules-for-csp}
    \input{./figures/L-Csp-Rules.tikz}
\end{figure*}

We must first prove that the functor $\sem$ is well-defined, which is equivalent to the following statement: 

\begin{restatable}[Cospan Soundness]{prop}{cspsound}
    \label{prop:csp-sound}
    Every rewrite rule in $E$ is sound with respect to the intended semantic $\sem$. 
\end{restatable}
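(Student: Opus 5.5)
Soundness is a finite, rule-by-rule check. Because $\LangCsp$ is presented as $\mathbb L_{CAlg_k}+_E\mathbb L_{CAlg_k}^{op}$, the assignment $\sem$ is already a PROP morphism on each summand. It is $G_{\Csp}$ composed with Proposition~\ref{prop:forward-fragment-universal-complete} on $\mathbb L_{CAlg_k}$, and its mirror on the opposite summand. Both graph and cograph embeddings are functors into $\FrCsp$, and $\FrCsp$ is a PROP by Proposition~\ref{prop:semantic-props}. So the rules of $\mathbb L_{CAlg_k}$ and of its opposite are automatically sound, and it only remains to check that for each rule $D\sim D'$ in Figure~\ref{fig:rules-for-csp} the cospans $\interp{D}$ and $\interp{D'}$ are isomorphic.

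To compute each side uniformly I will use the explicit pushout formula~(\ref{eq:explicit-pushout-of-calg}) from Proposition~\ref{prop:calg-pushout}. Every generator denotes a cospan whose apex is a polynomial algebra and one of whose legs is an identity. A composite is therefore a cospan with apex $k[\text{all internal wire variables}]$ modulo the ideal that identifies, for each internal wire, the polynomial images coming from its two ends. Concretely, I will assign a variable to each wire, read each forward generator as an equation $y=f(x)$ and each backward generator as $x=f(y)$, and take the apex to be the polynomial algebra on the wire variables quotiented by those equations, with the boundary legs sending boundary variables to their classes. Tensor and symmetry act componentwise, so this is legitimate.

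With this normal-form recipe, each rule reduces to exhibiting an algebra isomorphism between two finitely presented quotients that commutes with the boundary legs. I will group the rules by type.
\begin{enumerate}
\item The Frobenius and bone laws (\textsc{z-frob}, \textsc{x-frob}, \textsc{x-bone}) and the cup/cap rules: both sides give $k[\text{boundary}]$ modulo the same linear identifications. For example, the Frobenius law for copy identifies all four boundary variables on each side. In the bone law for addition, the apex is $k[z]/(z)\cong k$ on both sides.
\item The transpose rules (\textsc{k-trp}, \textsc{one-trp}): I will eliminate internal variables by substitution, i.e.\ by the isomorphism $k[x,y]/(y-f(x))\cong k[x]$, and check that the legs agree.
\item \textsc{k-inv}: the apex is $k[x,y]/(y-\kappa x)$ with legs $x$ and $\kappa^{-1}y$, or similar. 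This is isomorphic to the identity cospan precisely because $\kappa\neq 0$ is invertible in $k$, and that hypothesis must be used here.
\end{enumerate}

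The main obstacle is the rules that mix multiplication with its opposite, where the apex is a genuinely non-free algebra. Here I must verify carefully that the two generating sets produce the \emph{same ideal} and not merely ideals with the same radical, since the cospan semantics does not reduce. For each such rule I will first try to eliminate the variables that are solved by a linear equation. I will then compare the remaining generators directly, showing that each generator on one side lies in the ideal on the other side, by explicit polynomial combinations.
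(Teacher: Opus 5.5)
Your proposal is correct and is essentially the paper's own proof: check each rule of $E$ by computing both sides with the explicit pushout formula, eliminate internal variables by substitution, and exhibit an apex isomorphism compatible with the legs, using $l\neq 0$ for \textsc{k-inv} (in the paper this rule equates the forward scalar $l$ with the backward scalar $l^{-1}$, via the apex isomorphism $x\mapsto l^{-1}y$). The ``main obstacle'' you anticipate does not actually arise, because the only multiplicative rule in $E$ is the multiplication transpose rule, whose apex $k[x_1,x_2,x_3,z_1,z_2,z_3]/(x_1-z_1,\,x_1x_2-z_2,\,x_3-z_2,\,x_2-z_3)$ collapses by substitution to the free algebra $k[z_1,z_3]$ with leg $x_3\mapsto z_1z_3$.
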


Once well-definedness has been established, we can derive the following useful rewrite rules from the given rules: 

\begin{restatable}{lemma}{derivedRulesOfCsp}
    \label{lem:derived-rules}
    The following are some useful derived rules of $\LangCsp$: 
    \begin{align}
        \label{eq:z-special}
        \tag{special}
        \input{./figures/z-special-lhs.tikz} &= \begin{tikzpicture}
	\begin{pgfonlayer}{nodelayer}
		\node [style=none] (0) at (-1, 0) {};
		\node [style=none] (1) at (1, 0) {};
	\end{pgfonlayer}
	\begin{pgfonlayer}{edgelayer}
		\draw (0.center) to (1.center);
	\end{pgfonlayer}
\end{tikzpicture}
 \\ 
        \label{eq:poly-trans} 
        \tag{poly-trp}
        \input{./figures/poly-trans-lhs.tikz} &= \begin{tikzpicture}
	\begin{pgfonlayer}{nodelayer}
		\node [style=lmat] (0) at (0, 0) {$f$};
		\node [style=none] (1) at (-1.5, 0) {};
		\node [style=none] (2) at (1.5, 0) {};
	\end{pgfonlayer}
	\begin{pgfonlayer}{edgelayer}
		\draw (1.center) to (2.center);
	\end{pgfonlayer}
\end{tikzpicture}
 \qquad \forall f \in (k[x_1, \dots, x_n])^m
    \end{align}
\end{restatable}

\subsection{Universality and Completeness} 

To consider the nature of the (essential) image of $\sem$ in $\FrCsp$, we first prove the following sequence of lemmas in the language $\LangCsp$, which will allow us to introduce an essential piece of syntactic sugar for summarizing a diagram. 

\begin{restatable}{lemma}{principleidealflex}
    \label{lem:principle-ideal-flex}
    For any $f \in k[x_1, \dots, x_n]$, the following is derivable from the rewrite rules of $\LangCsp$: 
    $$\input{./figures/polyzero-going-left.tikz} = \input{./figures/polyzero-going-right.tikz}$$
\end{restatable}
So we just speak of gadgets of the form 
$\input{./figures/polyzero.tikz}$
\begin{restatable}{lemma}{principleidealscommute}
    \label{lem:principle-ideals-commute}
    For any $f, g \in k[x_1, \dots, x_n]$, the following is derivable from the rewrite rules of $\LangCsp$: 
    $$\input{./figures/polyzero-commute-lhs.tikz} = \input{./figures/polyzero-commute-rhs.tikz}$$
\end{restatable}
\begin{restatable}{lemma}{idealbasis}
    \label{lem:ideal-basis}
    Let $f_1, \dots, f_m \in k[x_1, \dots, x_n]$, and pick some $f$ inside the ideal $(f_1, \dots, f_m)$. Then the following is derivable from the rewrite rules of $\LangCsp$: 
    \begin{equation}
        \label{eq:ideal-basis-with-redundancy}
        \input{./figures/polyzero-ideal-basis.tikz} = \input{./figures/polyzero-ideal-basis-with-redundancy.tikz}. 
    \end{equation}
\end{restatable}
Lemmas \ref{lem:principle-ideals-commute} and \ref{lem:ideal-basis} mean that we can coherently define the following gadget: 
\begin{definition}
    \label{def:ring-quot-gadget}
    Let $I \subset k[x_1, \dots, x_n]$ be an ideal. Since $k[x_1, \dots, x_n]$ is Noetherian \cite{eisenbud_commutative_1995}, there exists a finite set of polynomials $f_1, \dots, f_m \in k[x_1, \dots, x_n]$ such that $I = (f_1, \dots, f_m)$. Then define 
    \begin{equation}
        \begin{tikzpicture}
	\begin{pgfonlayer}{nodelayer}
		\node [style=none] (0) at (-1, -0.5) {};
		\node [style=none] (1) at (1, -0.5) {};
		\node [style=Z dot] (2) at (0, -0.5) {};
		\node [style=idealket] (3) at (0, 0.5) {$I$};
	\end{pgfonlayer}
	\begin{pgfonlayer}{edgelayer}
		\draw (0.center) to (1.center);
		\draw (3) to (2);
	\end{pgfonlayer}
\end{tikzpicture}
 \coloneq \input{./figures/polyzero-ideal-basis.tikz}
    \end{equation}
    Note that neither the choice of the basis $f_1, \dots, f_m$, nor the ordering of them, matters in the definition, since any two such choices give diagrams which are rewritable to each other per lemmas \ref{lem:principle-ideals-commute} and \ref{lem:ideal-basis}. For concreteness, we may always choose the reduced Grobner basis with respect to some chosen monomial ordering. 
\end{definition}

\begin{corollary} \label{cor:idem}
    Idempotence $\input{./figures/ring-quot-twice.tikz} = $ is provable in $\LangCsp$.
\end{corollary}

We now explain the semantics of these gadgets. 

\begin{restatable}[Quotient Gadgets]{prop}{quotientgadgets}
    \label{prop:quotient-gadget}
    The semantic of the gadget $$ is (any finite presentation of) the finitely generated commutative algebra defined by $I$. More precisely, if $I \subset k[x_1, \dots, x_n]$ is any ideal, then 
    $$\left[\right] = \left(S_k(n) \twoheadrightarrow S_k(n)/I \twoheadleftarrow S_k(n)\right)$$
    where both legs are the canonical projections $\pi_I$ against $I$.     
\end{restatable}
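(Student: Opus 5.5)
We will prove the statement by computing the semantics of the defining diagram of Definition~\ref{def:ring-quot-gadget} directly, composing the semantics of its pieces in $\FrCsp$ via the explicit pushout formula of Proposition~\ref{prop:calg-pushout}. Write $I = (f_1, \dots, f_m)$. The gadget is built from an $n$-fold copy on the horizontal wire, then a polynomial box $f$ on the vertical branch, where $f = (f_1,\dots,f_m)$. The vertical branch is then capped by the zero/co-zero structure and the result is fused back into the main wire. We first compute the single-generator case $m=1$, the gadget attached to one polynomial $f$. The general case then follows by composing $m$ such gadgets along the horizontal wire, which Lemma~\ref{lem:principle-ideals-commute} and Lemma~\ref{lem:ideal-basis} allow us to do in any order and with any basis.

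For the single-polynomial gadget we proceed in three steps. First, the polynomial box $f$ lies in the $\mathbb L_{CAlg_k}$ fragment. By the cograph embedding, its semantics is the cospan $k[y] \xrightarrow{f^\#} k[x_1,\dots,x_n] = k[x_1,\dots,x_n]$, where $f^\#(y) = f$. Second, the terminating grey ``zero'' on the $\mathbb L_{CAlg_k}^{op}$ side has semantics $k[y] \to k[y]/(y) \cong k \leftarrow k$, since the graph of the zero map $k \to k[y]$ sends $y \mapsto 0$. Composing these by pushout gives $k[x]/(f)$. Third, we compose with the copy (white dot) and its dual. The copy has semantics given by the multiplication map $k[x]\otimes k[x] \to k[x]$ on the appropriate leg, with identity legs otherwise. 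A pushout along the diagonal identifies the two copies of the variables.

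By formula~(\ref{eq:explicit-pushout-of-calg}), each pushout simply adds generators and identifies variables. The composite therefore reduces to
\[ k[x_1,\dots,x_n] \xrightarrow{\pi} k[x_1,\dots,x_n]/(f) \xleftarrow{\pi} k[x_1,\dots,x_n]. \]
Both legs are the canonical projection, because both boundary variables are identified with the same $x_i$ in the quotient.

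Next we compose $m$ such cospans along the wire. The pushout of $k[x]/(f_1) \leftarrow k[x] \rightarrow k[x]/(f_2)$, with both maps the projections, is $k[x]/(f_1,f_2)$, again by~(\ref{eq:explicit-pushout-of-calg}). By induction the full gadget has semantics $S_k(n) \twoheadrightarrow S_k(n)/(f_1,\dots,f_m) \twoheadleftarrow S_k(n)$. This is independent of the chosen basis, since it depends only on $I$, and this independence agrees with the syntactic well-definedness. Up to 2-isomorphism, any other finite presentation of $S_k(n)/I$ gives the same morphism of $_F\Csp$.

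The main obstacle is the bookkeeping in the middle step. We must keep the variance conventions straight: which generators go through the graph embedding and which through the cograph embedding, and in which direction $f^\#$ points. We must also verify that the iterated pushouts, with their fresh variables for internal wires, really collapse to the single quotient without extra generators. For this we will eliminate each internal variable $z$ whenever a relation $z - h$ appears, using the standard isomorphism $k[x,z]/(z-h, J) \cong k[x]/J$. The remaining relations are then exactly the $f_j$.
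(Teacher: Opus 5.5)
Your proposal is correct and is essentially the paper's proof: compute the principal gadget $I=(f)$ by a single pushout, $k[x,z]/(f(x),\,x_j-z_j)\cong k[x]/(f)$, and then get the general case by induction from the fact that the pushout of $k[x]/I \twoheadleftarrow k[x] \twoheadrightarrow k[x]/J$ is $k[x]/(I+J)$. The one thing to tidy is the orientation you flagged yourself: the $\mathbb L_{CAlg_k}$ box $f\colon n\to 1$ goes to the cograph cospan $k[x_1,\dots,x_n] = k[x_1,\dots,x_n]\xleftarrow{f^\#}k[y]$ (not $k[y]\xrightarrow{f^\#}k[x]=k[x]$, which is its dagger), and the co-zero goes to $k[y]\xrightarrow{y\mapsto 0}k=k$, so the relevant pushout is that of $k[x]\xleftarrow{f^\#}k[y]\xrightarrow{y\mapsto 0}k$; the gadget is just this $\mathbb L_{CAlg_k}$ half followed by an $\mathbb L_{CAlg_k}^{op}$ half, with nothing ``fused back'' into the wire.
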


This is essentially what brings us from a language for $\Polyk$ to a language that has full information about $\CAlgfgk$, which forces us to introduce the following notation: 

\begin{notation}
    If $A = S_k(n)/I$, and $f \in (k[x_1, \dots, x_n])^m$, denote by $f^\#_I: S_k(m) \to A$ the algebra morphism one gets by passing to the quotient: 
    \begin{equation}
        S_k(m) \xrightarrow{f^\#} S_k(n) \twoheadrightarrow S_k(n)/I = A
    \end{equation}
    Or equivalently, the algebra morphism one gets by reducing $f$ against $I$ component-wise to obtain $f_I \in A^m$, and then taking its extension to an algebra morphism $f_I^\#: S_k(m) \to A$. 
    % Note that since the operation of taking the extension is bijective (by virtue of $S_k$ being the free construction for commutative algebras), so the mapping $f \mapsto f^\#_I$ is surjective with kernel $I^m$. 
\end{notation}

\begin{restatable}[Cospan Universality]{prop}{cspUniversal}
    \label{prop:csp-universal}
    For any ideal $I \subset k[x_1, \dots, x_n]$ and tuples of polynomials $f \in (k[x_1, \dots, x_n])^m$ and $g \in (k[x_1, \dots, x_n])^{m'}$, 
    \begin{equation}
        \left[\input{./figures/span-form.tikz}\right] = \left(S_k(m) \xrightarrow{f^\#_I} A  \xleftarrow{g^\#_I} S_k(m')\right)
    \end{equation}
    where $A = S_k(n)/I$. Since every algebra morphism $S_k(m) \to A$ is equal to $f^\#_I$ for some $f$, $\sem: \LangCsp \to \FrCsp$ is full. 
\end{restatable}

It remains now to prove the completeness of the language $\LangCsp$. We do this by reducing diagrams to a pseudo-normal form as follows: 

\begin{definition}
    A diagram $D$ in $\LangCsp$ is said to be in \textbf{cospan form} if it is drawn as 
    \begin{equation}
        \input{./figures/span-form.tikz} 
    \end{equation}
    for some ideal $I$ and some polynomial tuples $f, g$. 
\end{definition}

\begin{restatable}[Cospan Normal Form]{prop}{cospanForm}
    \label{prop:cospan-form}
    Every diagram in $\LangCsp$ can be rewritten into cospan form. 
\end{restatable}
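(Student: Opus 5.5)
We argue by structural induction on diagrammatic terms (Figure~\ref{fig:diagram-terms}), showing that the class of diagrams rewritable into cospan form contains all generators and the identities and swaps, and is closed under tensor and composition. Throughout, the polynomial boxes are the scalable gadgets of Figure~\ref{fig:scalable-notation}. The cospan form consists of a left box $f$ drawn in the $\mathbb L_{CAlg_k}^{op}$ direction, a middle wire carrying the quotient gadget for an ideal $I\subset k[x_1,\dots,x_n]$, and a right box $g$ drawn in the $\mathbb L_{CAlg_k}$ direction. We regard the empty ideal $(0)$ as giving the bare wire, via the idempotence and redundancy lemmas.

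\textbf{Base cases.} A generator of $\mathbb L_{CAlg_k}$ is already a polynomial box $g$ (Proposition~\ref{prop:forward-fragment-universal-complete}). We put it in cospan form with $f$ the identity tuple and $I=(0)$. Dually, a generator of $\mathbb L_{CAlg_k}^{op}$ is the transpose of a polynomial box. By (\ref{eq:poly-trans}) it equals the mirrored box $f$, and we take $g$ to be the identity. Identities and swaps are boxes for the identity tuple and the permutation tuple.

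\textbf{Tensor.} Given two cospan forms over variables $x$ (ideal $I$) and $y$ (ideal $J$), their juxtaposition is the cospan form over $(x,y)$. The boxes are tensored using the scalable rules of Figure~\ref{fig:scalable-rewrites}. The two quotient gadgets sit side by side on disjoint wires, so they combine into the gadget for the ideal $I+J$ extended to $k[x,y]$, using Definition~\ref{def:ring-quot-gadget}.

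\textbf{Composition.} This is the main step. Composing $(f,I,g)$ over $x$ with $(f',J,g')$ over $y$ places the right box $g$ against the left box $f'$ in the middle of the diagram, with $g$ pointing outward and $f'$ pointing inward. We resolve the pattern $g ; f'$ with polynomial maps meeting at an $m$-wire, which mirrors the pushout formula (\ref{eq:explicit-pushout-of-calg}). Concretely, we would show the following local rewrite:
\[
\text{(box } g\text{)} \;;\; \text{(reverse box } f'\text{)} \;=\; \text{copy both sides, then impose the gadget for } (g_i(x)-f'_i(y))_i .
\]
The proof proceeds in three stages. First, (\ref{eq:poly-trans}) together with the cup and cap rules of Figure~\ref{fig:rules-for-csp} expresses the reversed box $f'$ via the forward box and caps. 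The composite then becomes the two forward boxes $g$ and $f'$ joined by a cap on their $m$ outputs. Second, a cap on a pair of wires is the gadget ``add one wire to the antipode of the other, then the grey co-unit'', that is, an imposed constraint $a-b=0$. This follows from the Frobenius and bone rules (\textsc{x-frob}, \textsc{x-bone}) and the scalar $-1$. Third, (\textsc{poly-add}) and (\textsc{poly-comp}) let us pull the boxes into that constraint, producing the principal-ideal gadgets for $g_i - f'_i$ on the copied $x,y$ wires. The copy rules (\textsc{poly-copy}) route $x$ and $y$ both to these constraints and to the remaining outer boxes $f$ and $g'$.

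\textbf{Final assembly.} After the local rewrite, the diagram consists of the outer box $f$, the wires $(x,y)$ carrying the gadgets for $I$, $J$, and the $g_i-f'_i$, and the outer box $g'$. Lemmas~\ref{lem:principle-ideal-flex} and~\ref{lem:principle-ideals-commute} let us slide and reorder these gadgets so they sit together on the middle wire. Lemma~\ref{lem:ideal-basis} then merges them into the single gadget for $I+J+(g_i-f'_i)$. The result is in cospan form. The composition step, specifically deriving the cap-as-equality-constraint rewrite from the rules of Figure~\ref{fig:rules-for-csp}, is where we expect the real work to lie; the remaining steps are bookkeeping with the scalable rules.
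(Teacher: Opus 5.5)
Your proposal is correct and follows essentially the same route as the paper: structural induction over generators, tensor and composition. In the composition step you bend the reversed box via (\ref{eq:poly-trans}), turn the resulting cap into the constraint $g(x)-f'(y)=0$, and regroup everything into the gadget for $I+J+(g_i-f'_i)_i$, which mirrors the pushout (\ref{eq:explicit-pushout-of-calg}). The one step you misjudge is the one you expected to be hard: the cap-as-equality-constraint rewrite does not need to be derived from \textsc{x-frob}, \textsc{x-bone} and the scalar $-1$, because it is literally the axiom \textsc{cap} of Figure~\ref{fig:rules-for-csp}, so the paper's composition step is just \textsc{z-fusion}, \textsc{cap} with (\ref{eq:poly-trans}), symmetric monoidal rearrangement, and the definition of the ideal gadget.
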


Thus, to show completeness of $\LangCsp$, it suffices to show that, for any two diagrams in cospan form, if they have the same semantic in $\FrCsp$, then they can be rewritten into each other. We do this via the following lemmas: 

\begin{restatable}{lemma}{diagramSheafRing}
    \label{lem:diagram-sheaf-ring}
    Let $I \subset k[x_1, \dots, x_n]$ be an ideal, and let $f, g \in (k[x_1, \dots, x_n])^m$ be any two polynomial tuples such that $f-g\in I^m$. That is, let $f, g$ be such that $f^\#_I = g^\#_I$. Then the following diagram rewrite is derivable in $\LangCsp$:  
    \begin{equation}
        \input{./figures/ring-on-variety-lhs.tikz} = \input{./figures/ring-on-variety-rhs.tikz} 
    \end{equation}
\end{restatable}

\begin{restatable}{lemma}{diagramRespectsKernel}
    \label{lem:diagram-respects-kernel} 
    Suppose $I \subset S_k(n)$ and $J \subset S_k(m)$ are ideals of polynomials and $f \in (S_k(n))^m$. Suppose $f^\#_I: S_k(m) \to S_k(n)/I$ satisfies $J \subset \ker f^\#_I$. 
    % % https://q.uiver.app/#q=WzAsMyxbMCwwLCJTX2sobSkiXSxbMCwxLCJTX2sobSkvSiJdLFsxLDEsIlNfayhuKS9JIl0sWzAsMiwiXFxzaWdtYV5cXCNfSSJdLFswLDEsIlxccGlfSiIsMl0sWzEsMiwiXFxhbHBoYSIsMl1d
    % \[\begin{tikzcd}[ampersand replacement=\&]
    %     {S_k(m)} \\
    %     {S_k(m)/J} \& {S_k(n)/I}
    %     \arrow["{\pi_J}"', from=1-1, to=2-1]
    %     \arrow["{\sigma^\#_I}", from=1-1, to=2-2]
    %     \arrow["\alpha"', from=2-1, to=2-2]
    % \end{tikzcd}\]
    Then the following rewrite holds:
    $$\input{./figures/regular-map.tikz} = \input{./figures/regular-map-without-im.tikz}$$
\end{restatable}

\begin{restatable}{lemma}{isoInverseIsOp}
    \label{lem:iso-inverse-is-op}
    Suppose $I \subset S_k(n)$ and $J \subset S_k(m)$ are ideals, $\alpha: S_k(m)/J \to S_k(n)/I$ is an isomorphism of algebras, and $\sigma, \tau$ are a pair of polynomial tuples such that they represent $\alpha, \alpha^{-1}$ respectively, i.e. such that the following commutes: 
    % https://q.uiver.app/#q=WzAsNixbMCwwLCJTX2sobSkiXSxbMCwxLCJTX2sobikiXSxbMSwwLCJTX2sobSkvSiJdLFsxLDEsIlNfayhuKS9JIl0sWzIsMCwiU19rKG0pIl0sWzIsMSwiU19rKG4pIl0sWzAsMiwiXFxwaV9KIiwwLHsic3R5bGUiOnsiaGVhZCI6eyJuYW1lIjoiZXBpIn19fV0sWzQsMiwiXFxwaV9KIiwyLHsic3R5bGUiOnsiaGVhZCI6eyJuYW1lIjoiZXBpIn19fV0sWzEsMywiXFxwaV9JIiwyLHsic3R5bGUiOnsiaGVhZCI6eyJuYW1lIjoiZXBpIn19fV0sWzUsMywiXFxwaV9JIiwwLHsic3R5bGUiOnsiaGVhZCI6eyJuYW1lIjoiZXBpIn19fV0sWzAsMSwiXFxzaWdtYV5cXCMiLDJdLFs1LDQsIlxcdGF1XlxcIyIsMl0sWzIsMywiXFxhbHBoYSJdLFszLDIsIlxcYWxwaGFeey0xfSIsMCx7Im9mZnNldCI6LTN9XV0=
    \[\begin{tikzcd}[ampersand replacement=\&]
        {S_k(m)} \& {S_k(m)/J} \& {S_k(m)} \\
        {S_k(n)} \& {S_k(n)/I} \& {S_k(n)}
        \arrow["{\pi_J}", two heads, from=1-1, to=1-2]
        \arrow["{\sigma^\#}"', from=1-1, to=2-1]
        \arrow["\alpha", from=1-2, to=2-2]
        \arrow["{\pi_J}"', two heads, from=1-3, to=1-2]
        \arrow["{\pi_I}"', two heads, from=2-1, to=2-2]
        \arrow["{\alpha^{-1}}", shift left=3, from=2-2, to=1-2]
        \arrow["{\tau^\#}"', from=2-3, to=1-3]
        \arrow["{\pi_I}", two heads, from=2-3, to=2-2]
    \end{tikzcd}\]
    Then the following rewrite holds:
    \begin{equation}
        \input{./figures/sigma-inverse.tikz} = \input{./figures/sigma-op.tikz}
    \end{equation}
\end{restatable}

By composing $\begin{tikzpicture}
	\begin{pgfonlayer}{nodelayer}
		\node [style=rmat] (0) at (0, 0) {$\sigma$};
		\node [style=none] (1) at (-1, 0) {};
		\node [style=none] (2) at (1, 0) {};
	\end{pgfonlayer}
	\begin{pgfonlayer}{edgelayer}
		\draw (1.center) to (2.center);
	\end{pgfonlayer}
\end{tikzpicture}
$ on the right to both sides of the equation in \ref{lem:iso-inverse-is-op}, we obtain: 

\begin{restatable}{corollary}{conjByIso}
    \label{cor:conj-by-iso}
    If $\alpha: S_k(m)/J \to S_k(n)/I$ is an isomorphism of algebras, and $\sigma \in (k[x_1, \dots, x_n])^m$ represents $\alpha$ as in lemma \ref{lem:iso-inverse-is-op}, then 
    $$\begin{tikzpicture}
	\begin{pgfonlayer}{nodelayer}
		\node [style=Z dot] (1) at (0, -0.5) {};
		\node [style=none] (2) at (1, -0.5) {};
		\node [style=none] (3) at (-1, -0.5) {};
		\node [style=idealket] (5) at (0, 0.5) {$J$};
	\end{pgfonlayer}
	\begin{pgfonlayer}{edgelayer}
		\draw (3.center) to (2.center);
		\draw (5) to (1);
	\end{pgfonlayer}
\end{tikzpicture}
 = \input{./figures/ring-quot-conj-sigma.tikz}$$
\end{restatable}

Lemmas \ref{lem:diagram-sheaf-ring}, \ref{lem:diagram-respects-kernel}, and \ref{lem:iso-inverse-is-op} suffice to show that any two diagrams in cospan normal form which have the same semantic in $\FrCsp$ can be rewritten into each other. And therefore: 

\begin{restatable}{prop}{cspComplete}
    \label{prop:csp-complete}
    The functor $\sem$ is faithful. That is, $\LangCsp$ is complete as a language for $\FrCsp$. 
\end{restatable}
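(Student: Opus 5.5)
By Proposition~\ref{prop:cospan-form} it is enough to compare two diagrams $D_1, D_2$ that are already in cospan form and satisfy $[D_1] = [D_2]$. Write $D_1$ with data $(I \subset S_k(n), f, g)$ and $D_2$ with data $(J \subset S_k(n'), f', g')$. Proposition~\ref{prop:csp-universal} gives their semantics as the cospans $S_k(m) \xrightarrow{f^\#_I} S_k(n)/I \xleftarrow{g^\#_I} S_k(m')$ and $S_k(m) \xrightarrow{f'^\#_J} S_k(n')/J \xleftarrow{g'^\#_J} S_k(m')$. These are equal in $\FrCsp$, so there is an isomorphism $\alpha: S_k(n')/J \to S_k(n)/I$ with $\alpha \circ f'^\#_J = f^\#_I$ and $\alpha\circ g'^\#_J = g^\#_I$. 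The whole proof is to realise $\alpha$ diagrammatically and slide it through the normal form.

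First, using freeness of $S_k(n')$, I choose polynomial tuples $\sigma \in (S_k(n))^{n'}$ and $\tau \in (S_k(n'))^{n}$ representing $\alpha$ and $\alpha^{-1}$, exactly as in the square of Lemma~\ref{lem:iso-inverse-is-op}. Corollary~\ref{cor:conj-by-iso} then lets me replace the quotient gadget for $J$ in $D_2$ by the gadget for $I$ conjugated by $\sigma$ on one side and $\sigma^{op}$ (the transposed box) on the other. After this substitution, $D_2$ consists of $f'$ and $g'$ sandwiching $\sigma$, the gadget for $I$, and the opposite of $\sigma$. Before doing this, I use Lemma~\ref{lem:diagram-respects-kernel} to justify inserting or absorbing the $J$ gadget next to the $\sigma$ boxes, since $J \subset \ker \sigma^\#_I$ because $\alpha$ is well defined on $S_k(n')/J$.

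Second, I fuse the polynomial boxes with \polycomp. On the left, $f'$ followed by $\sigma$ becomes a single box for the tuple $f'\circ\sigma$, i.e.\ the tuple whose $\#$ is $\sigma^\# \circ f'^\#$. Modulo $I$ this represents $\alpha\circ f'^\#_J = f^\#_I$. On the right, (poly-trp) from Lemma~\ref{lem:derived-rules} together with \polycomp\ does the same for $g'$, giving a tuple that agrees with $g$ modulo $I$. Both boxes now sit against the $I$ gadget, so Lemma~\ref{lem:diagram-sheaf-ring} replaces them by $f$ and $g$ respectively. The result is literally $D_1$.

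The main obstacle is the first step, namely rewriting the $J$ gadget into the $\sigma$-conjugated $I$ gadget. One has to check carefully that the side conditions of Lemmas~\ref{lem:iso-inverse-is-op} and \ref{lem:diagram-respects-kernel} are exactly the commuting square for $\alpha$, and that the orientations of $\sigma$ versus $\sigma^{op}$ line up with the two legs. A secondary subtlety is that the compositions $\sigma^\#\circ f'^\#$ and $f^\#$ agree only modulo $I$, not on the nose. This is precisely why Lemma~\ref{lem:diagram-sheaf-ring} must be applied after the fusion and not before.
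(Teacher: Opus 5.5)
Your proposal is correct and follows essentially the same route as the paper. You reduce to cospan form, use Corollary~\ref{cor:conj-by-iso} to replace the $J$-gadget by the $\sigma$-conjugated $I$-gadget, and finish with Lemma~\ref{lem:diagram-sheaf-ring} using $f^\#_I = (f'\circ\sigma)^\#_I$ and $g^\#_I = (g'\circ\sigma)^\#_I$. Your separate appeal to Lemma~\ref{lem:diagram-respects-kernel} is harmless but redundant, since that lemma is already used inside the proof of Corollary~\ref{cor:conj-by-iso} (via Lemma~\ref{lem:iso-inverse-is-op}).
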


We have now completed the demonstration of our first main theorem: 

\begin{theorem}
    \label{thm:csp-iso}
    The functor $\sem: \LangCsp \to\,\, \FrCsp$ is an isomorphism of PROPs. Thus $\LangCsp$ is a sound, universal and complete language for $\FrCsp$. 
\end{theorem}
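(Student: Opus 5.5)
Theorem~\ref{thm:csp-iso} is an assembly of the results already established, so I would prove it by checking the three properties that make a PROP morphism an isomorphism. I would first confirm that $\sem$ is a well-defined PROP morphism. On generators it is fixed by the cograph and graph embeddings $G_{\Csp}^{op}$ and $G_{\Csp}$, and it extends freely to $\mathbb L_{CAlg_k} + \mathbb L_{CAlg_k}^{op}$. It descends to the quotient $\LangCsp = \mathbb L_{CAlg_k} +_E \mathbb L_{CAlg_k}^{op}$ because three families of equations are respected. The rules of $\mathbb L_{CAlg_k}$ are respected because $\mathbb L_{CAlg_k} \simeq \Polyk^{op}$ (Proposition~\ref{prop:forward-fragment-universal-complete}) and the (co)graph embeddings are functors. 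Their mirrors in $\mathbb L_{CAlg_k}^{op}$ are respected for the same reason. The rules in $E$ are respected by Cospan Soundness (Proposition~\ref{prop:csp-sound}). Objects are fixed, since $n \mapsto k[x_1,\dots,x_n]$. Strict monoidality and preservation of the symmetry follow from the description of the PROP structure of $\FrCsp$ in Proposition~\ref{prop:semantic-props}, since the tensor of legs matches juxtaposition of generators.

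Next I would establish fullness directly from Cospan Universality (Proposition~\ref{prop:csp-universal}). An arbitrary morphism of $\FrCsp$ is a cospan $S_k(m) \to A \leftarrow S_k(m')$ with $A$ finitely generated. Choosing a presentation $A \cong S_k(n)/I$ and replacing the cospan by an isomorphic one, each leg becomes $f^\#_I$ or $g^\#_I$ for suitable polynomial tuples. The cospan-form diagram built from $I$, $f$ and $g$ then maps to it.

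Faithfulness is Proposition~\ref{prop:csp-complete}. I would recall its structure because this is where the real content lies. By Proposition~\ref{prop:cospan-form}, two diagrams with equal semantics can each be rewritten into cospan form, with data $(I,f,g)$ and $(J,f',g')$. Equal semantics means there is an algebra isomorphism $\alpha\colon S_k(n')/J \to S_k(n)/I$ commuting with the legs. Conjugating the quotient gadget by polynomial representatives $\sigma,\tau$ of $\alpha,\alpha^{-1}$, via Corollary~\ref{cor:conj-by-iso} and Lemma~\ref{lem:iso-inverse-is-op}, transports one normal form onto the other up to leg polynomials that agree modulo the ideal. Lemma~\ref{lem:diagram-sheaf-ring} then identifies those leg polynomials. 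Lemma~\ref{lem:diagram-respects-kernel} absorbs the redundant ideal generators that appear during the conjugation.

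The main obstacle is this bookkeeping in the faithfulness step. One must ensure that the ideal carried along after conjugation is exactly $J$, up to Lemma~\ref{lem:ideal-basis}, rather than a larger or smaller one. With this settled, $\sem$ is a bijective-on-homsets, identity-on-objects strict symmetric monoidal functor, hence a PROP isomorphism, and soundness, universality and completeness are exactly the three verified properties.
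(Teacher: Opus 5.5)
Your proposal is correct and follows the paper's proof: the paper simply combines Proposition~\ref{prop:csp-sound}, Proposition~\ref{prop:csp-universal} and Proposition~\ref{prop:csp-complete}, and your extra checks (functoriality of the (co)graph embeddings on the $\mathbb L_{CAlg_k}$ fragments, fixing objects, monoidality) spell out what the paper leaves implicit. The ``bookkeeping obstacle'' you flag is already handled inside Proposition~\ref{prop:csp-complete} by Corollary~\ref{cor:conj-by-iso}, which rewrites the $J$-gadget directly as the $\sigma$-conjugate of the $I$-gadget, so nothing further needs settling at the level of this theorem.
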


\begin{proof}
    Propositions \ref{prop:csp-sound}, \ref{prop:csp-universal}, and \ref{prop:csp-complete} together imply theorem \ref{thm:csp-iso}. 
\end{proof}

\section{Graphical Algebraic Geometry}
\label{sec:structured-spans-of-alg-varieties} 

Our focus now turns from algebra to geometry. 
As mentioned in Section~\ref{sec:background}, this is where we will make heavy use of the Nullstellensatze to bridge the gap from algebra to geometry. 

In this section $k$ will always denote an algebraically closed field, and $\Fq$ will always denote a finite field of size $q$. All proofs are contained in Appendix~\ref{sec:append-sp} unless otherwise specified. 

Recall from equation \ref{eq:semantic-target-equivalences} that as a consequence of the Nullstellensatze, there are two functors 
\begin{equation}
    \begin{split}
        &\FrCsp \xrightarrow{\Spec (-)_{red}} \,\, \FrSpk \\
        &\FrCspq \xrightarrow{\Spec (-)_{qred}} \,\, \FrSpq
    \end{split}
\end{equation}
We now seek diagrammatic presentations for $\FrSpk$ and $\FrSpq$. Since we already have a presentation 
$$\LangCsp \cong \FrCsp,$$ 
so we seek to find $\LangSpk$ and $\LangSpq$ that fit into the following commutative diagrams: 
% https://q.uiver.app/#q=WzAsOCxbMCwwLCJcXExhbmdDc3AiXSxbMSwwLCJcXEZyQ3NwIl0sWzEsMSwiXFxGclNwayJdLFswLDEsIlxcTGFuZ1NwayJdLFsyLDAsIlxcTGFuZ0NzcCJdLFszLDAsIlxcRnJDc3AiXSxbMywxLCJcXEZyU3BxIl0sWzIsMSwiXFxMYW5nU3BxIl0sWzAsMSwiXFxzZW0iXSxbMSwyLCJcXFNwZWMgKC0pX3tyZWR9IiwwLHsic3R5bGUiOnsiaGVhZCI6eyJuYW1lIjoiZXBpIn19fV0sWzMsMiwiXFxzZW1fayIsMix7InN0eWxlIjp7ImJvZHkiOnsibmFtZSI6ImRhc2hlZCJ9fX1dLFswLDMsIiIsMix7InN0eWxlIjp7ImJvZHkiOnsibmFtZSI6ImRhc2hlZCJ9LCJoZWFkIjp7Im5hbWUiOiJlcGkifX19XSxbNCw1LCJcXHNlbSJdLFs0LDcsIiIsMCx7InN0eWxlIjp7ImJvZHkiOnsibmFtZSI6ImRhc2hlZCJ9LCJoZWFkIjp7Im5hbWUiOiJlcGkifX19XSxbNyw2LCJcXHNlbV9xIiwyLHsic3R5bGUiOnsiYm9keSI6eyJuYW1lIjoiZGFzaGVkIn19fV0sWzUsNiwiXFxTcGVjKC0pX3txcmVkfSIsMCx7InN0eWxlIjp7ImhlYWQiOnsibmFtZSI6ImVwaSJ9fX1dXQ==
\[\begin{tikzcd}[ampersand replacement=\&,cramped,column sep=small]
	\LangCsp \& \FrCsp \& \LangCsp \& \FrCsp \\
	\LangSpk \& \FrSpk \& \LangSpq \& \FrSpq
	\arrow["\sem", from=1-1, to=1-2]
	\arrow[dashed, two heads, from=1-1, to=2-1]
	\arrow["{\Spec (-)_{red}}", two heads, from=1-2, to=2-2]
	\arrow["\sem", from=1-3, to=1-4]
	\arrow[dashed, two heads, from=1-3, to=2-3]
	\arrow["{\Spec(-)_{qred}}", two heads, from=1-4, to=2-4]
	\arrow["{\sem_k}"', dashed, from=2-1, to=2-2]
	\arrow["{\sem_q}"', dashed, from=2-3, to=2-4]
\end{tikzcd}\]
such that $\sem_k$ and $\sem_q$ are both isomorphisms of PROPs. 

\subsection{GAG over Algebraically Closed Fields} 
\label{sec:alg-closed} 

For algebraically closed fields, Hilbert's Nullstellensatz tells us that affine varieties correspond to reduced coordinate rings. So we introduce the following rewrite rule $\eqref{ax:reduced}$, which intuitively should be thought of as ``reducing'' all rings. This intuition is made formal in lemma \ref{lem:diagram-reduction}. 
\begin{definition}
    Let $\LangSpk$ be the quotient of $\LangCsp$ defined by the following additional rewrite rule: 
    \begin{equation}
        \label{ax:reduced}
        \input{./figures/nilpotent-lhs.tikz} = \begin{tikzpicture}
	\begin{pgfonlayer}{nodelayer}
		\node [style=X dot] (0) at (0.5, 0) {};
		\node [style=none] (1) at (-0.5, 0) {};
	\end{pgfonlayer}
	\begin{pgfonlayer}{edgelayer}
		\draw (1.center) to (0);
	\end{pgfonlayer}
\end{tikzpicture}

        \tag{red}
    \end{equation}
\end{definition}

\begin{restatable}{prop}{reductionSound}
    \label{prop:reductionSound}
    The PROP morphism 
    \begin{equation}
        \LangCsp \xrightarrow{\Spec \sem_{red}} \FrSpk
    \end{equation}
    factors through $\LangSpk$. In other words, rule $\eqref{ax:reduced}$ is sound for affine varieties over an algebraically closed field. 
\end{restatable}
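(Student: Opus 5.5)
The plan is to reduce soundness of \eqref{ax:reduced} to a single computation in the cospan semantics of $\LangCsp$, followed by one application of the reduction functor. The composite $\Spec\sem_{red} \coloneq \Spec \circ (-)_{red} \circ \sem$ is a PROP morphism $\LangCsp \to \FrSpk$. Indeed, $\sem$ is one by Proposition~\ref{prop:csp-sound}, $(-)_{red}$ induces a functor on structured cospans by Proposition~\ref{prop:induced-functor-of-spans}, since it is a left adjoint and hence preserves pushouts, and $\Spec$ is an equivalence by Proposition~\ref{prop:alg-closed-anti-equiv} combined with Proposition~\ref{prop:duality-of-co-span}. Since $\LangSpk$ is the quotient of $\LangCsp$ by the congruence generated by \eqref{ax:reduced}, closed under composition and tensor, the factorisation follows once the two sides of \eqref{ax:reduced} have equal images. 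Functoriality and monoidality then propagate this equality to every derived equation.

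I read the left-hand side of \eqref{ax:reduced} as the constraint $x^n = 0$ on one wire: the wire is fed into an $n$-fold multiplication, that is, the polynomial $x^n$ in scalable notation, and then into the zero effect. The right-hand side is the constraint $x = 0$. The first step is to compute $\sem$ on both sides. Using Lemma~\ref{lem:principle-ideal-flex} and Proposition~\ref{prop:quotient-gadget}, I would rewrite the left-hand side as the quotient gadget for the principal ideal $(x^n)$ followed by deletion. By Proposition~\ref{prop:csp-universal}, its semantics is then
\[
S_k(1) \twoheadrightarrow k[x]/(x^n) \leftarrow S_k(0) = k,
\]
with the right leg the unit map. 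Alternatively, I would compute this directly as the pushout of $k[y] \xrightarrow{y \mapsto x^n} k[x]$ along $k[y] \to k$, $y\mapsto 0$, via equation~\eqref{eq:explicit-pushout-of-calg}. The right-hand side is similarly $S_k(1) \twoheadrightarrow k[x]/(x) \leftarrow k$.

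Next, I would apply $(-)_{red}$ legwise. The nilradical of $k[x]/(x^n)$ is $(x)/(x^n)$, because $x$ is nilpotent there and the quotient by $(x)$ is the field $k$. Hence $(k[x]/(x^n))_{red} \cong k[x]/(x) \cong k$. Under this isomorphism the left leg is the canonical projection $x\mapsto 0$ and the right leg is the identity of $k$. These agree with the legs of the right-hand cospan, which is already reduced. So the two images are isomorphic cospans in ${}_F\Csp(\CAlgfgredk)$, and applying $\Spec$ gives equal spans $k \leftarrow \{0\} \rightarrow k^0$ in $\FrSpk$.

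The step I expect to need the most care is the first one: pinning down exactly which diagram \eqref{ax:reduced} depicts and computing its cospan semantics rigorously from the generator interpretations, rather than from intuition. The reduction computation itself is routine. If the rule is stated for a general nilpotent $f^n$ rather than $x^n$, I would use the same argument with the ideals $(f^n)$ and $(f)$, which have the same radical, noting that their reduced quotients coincide as quotients of $k[x_1,\dots,x_m]$. By Hilbert's Nullstellensatz (Proposition~\ref{prop:hilbert-nullstellensatz}), these are exactly the coordinate rings $k[x_1,\dots,x_m]/\sqrt{(f)}$ of the common variety $V(f)$.
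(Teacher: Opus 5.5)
Your proposal is correct and is essentially the paper's proof: the paper also computes the left-hand side of \eqref{ax:reduced} as the pushout cospan $k[x]\rightarrow k[x]/(x^2)\leftarrow k$, reduces it to $k[x]\xrightarrow{x\mapsto 0} k = k$, and identifies this with the semantics of the right-hand side. The rule in the paper is the $x^2$ case, which your $x^n$ argument covers verbatim, and your justification that checking this single equation suffices for the factorisation makes explicit what the paper leaves implicit.
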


Thus we obtain a PROP morphism 
\begin{equation}
    \LangSpk \xrightarrow{\sem_k} \FrSpk 
\end{equation}
It remains to show that this morphism is faithful: in other words, that our rewrite rules are complete. 

\begin{restatable}{lemma}{diagramReduction}
    \label{lem:diagram-reduction}
    Let $I \subset k[x_1, \dots, x_n]$ be an ideal. The following is derivable from the rewrite rules of $\LangSpk$: 
    \begin{equation}
        \label{eq:reduction-rewrite} 
         = \begin{tikzpicture}
	\begin{pgfonlayer}{nodelayer}
		\node [style=none] (0) at (-1, -0.5) {};
		\node [style=none] (1) at (1, -0.5) {};
		\node [style=Z dot] (2) at (0, -0.5) {};
		\node [style=idealket] (3) at (0, 0.5) {$\sqrt{I}$};
	\end{pgfonlayer}
	\begin{pgfonlayer}{edgelayer}
		\draw (0.center) to (1.center);
		\draw (3) to (2);
	\end{pgfonlayer}
\end{tikzpicture}

    \end{equation}
\end{restatable}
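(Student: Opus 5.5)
We need to show that the quotient gadget for $I$ equals the gadget for $\sqrt{I}$ in $\LangSpk$. The rule \eqref{ax:reduced} presumably says that the gadget imposing $x^2 = 0$ on a wire (multiply the wire with itself and send the result into the zero counit) equals the bare zero counit, i.e.\ it imposes $x = 0$. The strategy is to show that for every $f \in \sqrt{I}$ we may adjoin $f$ to the generating set of $I$. Since $\sqrt{I}$ is finitely generated, say $\sqrt I = (f_1,\dots,f_m, h_1,\dots,h_r)$ where $I=(f_1,\dots,f_m)$, we can then add the $h_j$ one at a time. Lemma~\ref{lem:ideal-basis} and Lemma~\ref{lem:principle-ideals-commute} then identify the resulting diagram with the gadget for $\sqrt I$, independently of the chosen basis.

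The key step is the following. Suppose $h^N \in I$ for some $N$. We want to show that the quotient gadget for $I$ equals the gadget for $I$ composed with the principal-ideal gadget for $h$.

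\begin{itemize}
\item First reduce to the case $N = 2^s$ by enlarging $N$; note $h^{2^s}\in I$ once $2^s \ge N$.
\item Induct on $s$ using the identity $(h^{2^{s-1}})^2 = h^{2^s}$. Concretely, we show that for any polynomial $g$, the gadget for the ideal $(g^2)$ composed with $I$'s gadget can be upgraded to the gadget for $(g)$.
\item By Lemma~\ref{lem:ideal-basis}, $I$'s gadget already equals $I$'s gadget with $g^2 = h^{2^s}$ adjoined.
\item The principal gadget for $g^2$ is, by the polynomial rules of Figure~\ref{fig:scalable-rewrites} (\textsc{poly-mult}, \textsc{poly-copy}), the same as computing $g$ once, copying, multiplying, and capping with zero, i.e.\ the diagram ``$g$ followed by the left-hand side of \eqref{ax:reduced}''.
\item Apply \eqref{ax:reduced} inside this diagram to replace it by ``$g$ followed by zero'', which is the principal gadget for $g$.
\end{itemize}

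One care point: the principal gadget as drawn (Lemma~\ref{lem:principle-ideal-flex}) hangs off a copy (white) node on the wire, so the local subdiagram computing $g^2$ must first be isolated as a polynomial box $g$ followed by the squaring gadget. I would establish this via \textsc{poly-comp}, writing $g^2$ as the composite of $g$ with $y\mapsto y^2$, and using that copy nodes fuse (\textsc{z-fusion}).

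Iterating from $g = h^{2^{s-1}}$ down to $g = h$ gives the gadget for $I$ with $h^{2^{s-1}},\dots,h$ adjoined. Lemma~\ref{lem:ideal-basis} removes the intermediate powers, since they lie in the ideal generated by $h$, leaving $I + (h)$. Doing this for each $h_j$ yields $\sqrt I$.

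I expect the main obstacle to be the diagrammatic bookkeeping in the fourth and fifth bullets. One must check that the left-hand side of \eqref{ax:reduced} really matches ``square then zero'' after composing with the box $g$, including correct orientation, since the gadget's legs may point leftwards (Lemma~\ref{lem:principle-ideal-flex} lets us flip them). If the reduced rule is instead phrased via $x^n$ for general $n$, or only $x^2$, the power-of-two reduction above handles either case.
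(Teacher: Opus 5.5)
Your proposal is correct and follows essentially the paper's route. Both arguments reduce to adjoining a single $f\in\sqrt{I}$ to the generators of $I$, relate $f$ to a power lying in $I$ via completeness of the forward fragment, apply \eqref{ax:reduced}, and clean up with Lemma~\ref{lem:ideal-basis}. Your repeated-squaring step is in fact more careful than the written proof: the paper applies \eqref{ax:reduced} only once to a general power, and states the relation backwards as $g^n=f$ with $g\in I$ rather than $f^n\in I$.
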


Leveraging Hilbert's Nullstellensatz, we see that lemma \ref{lem:diagram-reduction} alone is sufficient to prove completeness for $\LangSpk$ (See Appendix~\ref{sec:append-sp} for details in proof): 

\begin{restatable}{prop}{spkComplete}
    \label{prop:spk-complete}
    The semantic map $\LangSpk \xrightarrow{\sem_k} \FrSpk$ is faithful. In other words, $\LangSpk$ is complete for $\FrSpk$. 
\end{restatable}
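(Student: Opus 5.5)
We argue through cospan normal forms together with Lemma~\ref{lem:diagram-reduction} and Hilbert's Nullstellensatz (Proposition~\ref{prop:hilbert-nullstellensatz}). Take two diagrams $D_1, D_2$ of $\LangSpk$ with $\interp{D_1}_k = \interp{D_2}_k$. Every rule of $\LangCsp$ remains valid in $\LangSpk$, so Proposition~\ref{prop:cospan-form} puts each $D_i$ in cospan form, with ideal $I_i \subset k[x_1,\dots,x_{n_i}]$ and polynomial tuples $f_i, g_i$. Lemma~\ref{lem:diagram-reduction} then replaces each quotient gadget $I_i$ by $\sqrt{I_i}$. We may therefore assume both ideals are radical, so each $A_i = S_k(n_i)/I_i$ is reduced.

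With radical ideals, Proposition~\ref{prop:cospan-universal} shows that $\interp{D_i}$ in $\FrCsp$ is the cospan $S_k(m) \xrightarrow{(f_i)^\#_{I_i}} A_i \xleftarrow{(g_i)^\#_{I_i}} S_k(m')$, whose apex is already reduced. Hence the functor $(-)_{red}$ acts as the identity on these representatives. Since $\Spec$ is an anti-equivalence between $\CAlgfgredk$ and $\Affk$ (Proposition~\ref{prop:alg-closed-anti-equiv}), the induced functor $_F\Csp(\CAlgfgredk) \simeq \FrSpk$ is an equivalence and in particular faithful. Equality of the span semantics therefore gives equality in $_F\Csp(\CAlgfgredk)$. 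Concretely, there is an algebra isomorphism $\alpha: A_1 \to A_2$ commuting with both legs of the cospans.

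The condition that both cospans lie in $_F\Csp(\CAlgfgredk) \subset \FrCsp$ means the reduced cospans are also equal as morphisms of $\FrCsp$. By Proposition~\ref{prop:csp-complete}, the two cospan-form diagrams built from radical ideals are equal in $\LangCsp$, hence in the quotient $\LangSpk$. Chaining the equalities $D_1 = (\text{radical form of } D_1) = (\text{radical form of } D_2) = D_2$ gives faithfulness.

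The main point to check is the middle step: that equality of span semantics really yields equality of the reduced cospans in $\FrCsp$, rather than only up to some coarser relation. This requires $\sem_k$ to be exactly $\Spec$ applied to $(-)_{red}$ of $\sem$. It also uses the fact that morphisms of $\FrCsp$ are isomorphism classes of cospans, which match isomorphism classes of spans under the equivalence. If that bookkeeping is awkward, I would instead rerun the completeness argument of Proposition~\ref{prop:csp-complete} directly, using Lemmas~\ref{lem:diagram-sheaf-ring}, \ref{lem:diagram-respects-kernel} and \ref{lem:iso-inverse-is-op} on the radical representatives.
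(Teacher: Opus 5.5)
Your proposal is correct and follows essentially the same route as the paper: bring both diagrams to cospan form, use the $\Spec$ anti-equivalence to turn equality of span semantics into an isomorphism of the reduced cospans (with apexes $S_k(u)/\sqrt{I}$ and $S_k(v)/\sqrt{J}$), apply completeness of $\LangCsp$ (Proposition~\ref{prop:csp-complete}) to the radical cospan forms, and chain the result with Lemma~\ref{lem:diagram-reduction}. The only difference is that you invoke Lemma~\ref{lem:diagram-reduction} before the semantic argument rather than after, which is immaterial. The bookkeeping you flag works as you expect: $\CAlgfgredk$ is full in $\CAlgfgk$, so an isomorphism of reduced cospans is already an equality of morphisms in $\FrCsp$.
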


Fullness, on the other hand, is inherited from $\LangCsp$. Thus: 

\begin{theorem}
    The semantic map $\LangSpk \xrightarrow{\sem_k} \FrSpk$ is an isomorphism of PROPs; $\LangSpk$ is a universal, sound, and complete language for $\FrSpk$. 
\end{theorem}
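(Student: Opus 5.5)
The theorem combines three properties of $\sem_k$: soundness, universality (fullness) and completeness (faithfulness). We obtain each from results already proved rather than from a new argument.

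\emph{Soundness.} By Proposition~\ref{prop:reductionSound}, the composite $\LangCsp \xrightarrow{\sem} \FrCsp \xrightarrow{\Spec(-)_{red}} \FrSpk$ factors through the quotient $\LangCsp \twoheadrightarrow \LangSpk$. So $\sem_k$ is a well-defined PROP morphism. The only input beyond Theorem~\ref{thm:csp-iso} is that rule \eqref{ax:reduced} is sound, and that is exactly this proposition.

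\emph{Universality.} In the commutative square, $\Spec(-)_{red}$ is full. Any span $k^n \leftarrow V \rightarrow k^m$ is, via the anti-equivalence of Proposition~\ref{prop:alg-closed-anti-equiv}, the image of the reduced cospan $k[x_1,\dots,x_n] \to k[V] \leftarrow k[y_1,\dots,y_m]$. That cospan already lies in $\FrCsp$, since a reduced finitely generated algebra is in particular finitely generated. By Proposition~\ref{prop:csp-universal} this cospan equals $\sem$ of some diagram in cospan form. Pushing that diagram to $\LangSpk$ gives a preimage under $\sem_k$, because the square commutes. Hence $\sem_k$ is full. The quotient map on diagrams is the identity on generators, so nothing is lost in this step.

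\emph{Completeness.} This is Proposition~\ref{prop:spk-complete}. If one unpacks it, the argument runs as follows. Take two diagrams with the same image under $\sem_k$ and put both in cospan form using Proposition~\ref{prop:cospan-form}. Apply Lemma~\ref{lem:diagram-reduction} to replace each ideal $I$ by $\sqrt I$. The two resulting cospans with reduced apexes now have isomorphic images in $\FrSpk$. Because $\Spec$ is an equivalence on reduced algebras (this uses Hilbert's Nullstellensatz, Proposition~\ref{prop:hilbert-nullstellensatz}), they are already equal as morphisms of $\FrCsp$. Theorem~\ref{thm:csp-iso} then rewrites one diagram into the other in $\LangCsp$, and hence also in $\LangSpk$.

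Of the three steps, completeness would be the main obstacle. Its hard part is Lemma~\ref{lem:diagram-reduction}, and in particular deriving $f^r\in I \Rightarrow f$ may be adjoined to the ideal gadget from \eqref{ax:reduced}. Since we may assume it, the theorem follows immediately: a full, faithful, identity-on-objects PROP morphism is a PROP isomorphism.
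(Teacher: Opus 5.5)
Your proposal is correct and follows the paper's route. Soundness is Proposition~\ref{prop:reductionSound}. Fullness is inherited from $\LangCsp$ through Proposition~\ref{prop:csp-universal}, since every span of varieties is the image of the reduced cospan of coordinate rings. Faithfulness is Proposition~\ref{prop:spk-complete}, whose proof is exactly your unpacking: cospan form, then Lemma~\ref{lem:diagram-reduction} to pass to $\sqrt I$, then the Nullstellensatz anti-equivalence, then completeness of $\LangCsp$.

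You state Lemma~\ref{lem:diagram-reduction} in the right direction ($f^r\in I$ lets you adjoin $f$). Deriving it from \eqref{ax:reduced} means first enlarging $f^r\in I$ to $f^{2^s}\in I$ and then applying the rule $s$ times, because \eqref{ax:reduced} only removes squares.
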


\subsection{GAG over Finite Fields} 
\label{sec:spans-over-ff}

The recipe for obtaining a language for spans of rational loci of affine varieties over finite fields is almost identical to the one used for spans of affine varieties over closed fields. We begin by introducing a rewrite rule \eqref{ax:q-reduced} which intuitively $q$-reduces all rings: 

\begin{definition}
    Let $\LangSpq$ be the quotient of $\LangCsp$ by the following additional rewrite rule: 
    \begin{equation}
        \label{ax:q-reduced} 
        \input{./figures/qred-lhs.tikz} =  
        \tag{qred}
    \end{equation}
\end{definition}

\begin{restatable}{prop}{qredSound}
    \label{prop:qred-sound}
    The PROP morphism 
    \begin{equation}
        \LangCsp \xrightarrow{\Spec [-]_{qred}} \,\, \FrSpq
    \end{equation}
    factors through $\LangSpq$. In other words, the rule \ref{ax:q-reduced} is sound with respect to $\Fq$-rational loci. 
\end{restatable}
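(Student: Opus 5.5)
We need to show that the composite $\Spec(-)_{qred}\circ\sem$ identifies the two sides of \eqref{ax:q-reduced}. Since $\LangSpq$ is $\LangCsp$ quotiented by this one extra equation, and $\Spec(-)_{qred}\circ\sem$ is already a PROP morphism, the universal property of the quotient reduces the proposition to checking that single rule. We do not have the figure itself. We read the left-hand side as the one-wire diagram that copies a wire, computes $x^q$ via $q-1$ multiplications, subtracts $x$ (scaling by $-1$ and adding), and plugs the result into a co-zero. That is the principal-ideal gadget of Lemma~\ref{lem:principle-ideal-flex} with $f = x^q - x$. We read the right-hand side as the bare identity wire.

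\textbf{Step 1: compute the cospan semantics.} We first compute $\sem$ of both sides in $\FrCsp$. By Proposition~\ref{prop:quotient-gadget}, with the ideal $I=(x^q-x)$ via Definition~\ref{def:ring-quot-gadget}, the left-hand side denotes
\[
\Fq[x]\twoheadrightarrow \Fq[x]/(x^q-x)\twoheadleftarrow \Fq[x],
\]
with both legs the canonical projection. The identity wire denotes the cospan $\Fq[x]=\Fq[x]=\Fq[x]$. If instead the rule is stated as a gadget composed onto the wire inside some context, the same computation applies after pushing out along $\pi_I$, using the explicit pushout formula \eqref{eq:explicit-pushout-of-calg}.

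\textbf{Step 2: apply $(-)_{qred}$.} Next we apply $(-)_{qred}$ to both cospans, which by functoriality (Proposition~\ref{prop:induced-functor-of-spans}) acts legwise and on the apex. By Remark~\ref{rem:basis-of-qradical}, $\Gamma_q(\Fq[x])=(x^q-x)$, so the apex $\Fq[x]$ becomes $\mathcal R_q^1=\Fq[x]/(x^q-x)$. The apex $\Fq[x]/(x^q-x)$ is generated by the class of $x$, and the $q$-radical is again generated by $x^q-x$, which is already zero there. So it is its own $q$-reduction, namely $\mathcal R_q^1$. The two boundary objects are the $q$-reduced free algebras, since the structuring functor is $F_q$ and objects are fixed. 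The legs become the identity map on $\mathcal R_q^1$ in both cases. Hence both sides become the identity cospan on $\mathcal R_q^1$, up to 2-isomorphism.

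\textbf{Step 3: transport to spans.} Applying $\Spec$ via Proposition~\ref{prop:finite-field-anti-equiv} and the duality of Proposition~\ref{prop:duality-of-co-span} yields equal spans of $\Fq$-rational loci. Geometrically, both are the identity span on $\Fq$, because $x^q=x$ holds at every point of $\Fq$.

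The main obstacle is bookkeeping rather than mathematics. We must make sure that the functor $\FrCspq\to{}_{F_q}\Csp(\CAlgfgqredq)$ correctly identifies the boundary objects $\Fq[x]$ with $\mathcal R_q^1$, so that the two images genuinely live in the same hom-set and agree there. This uses the fact that $(-)_{qred}$ commutes with the inclusions of free algebras, as stated before \eqref{eq:functors-btx-cospan-semantic-targets}. If the actual rule has the gadget on a scalable $n$-wire or involves a general polynomial, we instead argue with $\Gamma_q^n$ and Remark~\ref{rem:basis-of-qradical} in exactly the same way.
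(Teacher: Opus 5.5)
Your strategy is the paper's: reduce to the single added equation, compute its cospan semantics, apply $(-)_{qred}$ to apex and legs, and transport with $\Spec$. The problem is that you have verified the wrong equation.

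Rule \eqref{ax:q-reduced} says that the forward polynomial box for $x^q$ equals the bare wire. That box is copy followed by $q-1$ multiplications, with no subtraction and no co-zero. Its semantics is the cograph $\Fq[x] = \Fq[x] \xleftarrow{y\mapsto x^q} \Fq[y]$. The only computation needed is this: after $q$-reduction the leg $y\mapsto x^q$ becomes $y\mapsto x$, because $x^q=x$ in $\mathcal R^1_q$. The result is the identity cospan $\mathcal R^1_q = \mathcal R^1_q = \mathcal R^1_q$, which is also the image of the wire. No quotient gadget appears, and your observation that $\Fq[x]/(x^q-x)$ is already $q$-reduced plays no role. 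Your closing remark that a different rule could be handled ``in exactly the same way'' is therefore not quite accurate: here the reduction acts on a leg, not on an apex.

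The equation you checked, that the $(x^q-x)$-gadget equals the wire, is the case $n=1$ of Lemma~\ref{lem:diagram-qreduction}. The paper derives that lemma \emph{from} \eqref{ax:q-reduced}. So, as written, your argument proves soundness of a different rule.

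It can be rescued, but only with a step you did not supply: showing that \eqref{ax:q-reduced} is derivable in $\LangCsp$ from your equation. Apply Lemma~\ref{lem:diagram-sheaf-ring} with $I=(x^q-x)$, $f=x^q$, $g=x$. This gives that the gadget composed with the $x^q$ box equals the gadget alone. Then erase the gadget using your equation. Since $\Spec(-)_{qred}\circ\sem$ respects all rules of $\LangCsp$ and your equation, it then respects \eqref{ax:q-reduced}. Together with Lemma~\ref{lem:diagram-qreduction}, this also shows the two rules define the same quotient $\LangSpq$.

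Alternatively, redo your Step~2 for the actual rule; that is the one-line computation above.
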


Thus we obtain a PROP morphism 
\begin{equation}
    \LangSpq \xrightarrow{\sem_q} \,\, \FrSpq
\end{equation}
Now to show that this morphism is faithful: again, we start by showing that the extra rule \eqref{ax:q-reduced} implies that ``rings are $q$-reduced'' in our language:

\begin{restatable}{lemma}{diagramQReduction}
    \label{lem:diagram-qreduction}
    The following is derivable from the rewrite rules of $\LangSpq$: 
    $\begin{tikzpicture}
	\begin{pgfonlayer}{nodelayer}
		\node [style=Z dot] (0) at (0, -0.5) {};
		\node [style=idealket] (1) at (0, 0.5) {$\Gamma_q^n$};
		\node [style=none] (2) at (-1, -0.5) {};
		\node [style=none] (3) at (1, -0.5) {};
	\end{pgfonlayer}
	\begin{pgfonlayer}{edgelayer}
		\draw (2.center) to (3.center);
		\draw (0) to (1);
	\end{pgfonlayer}
\end{tikzpicture}
 =  $
\end{restatable}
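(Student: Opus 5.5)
The claim is that the quotient gadget for $\Gamma_q^n$ on an $n$-wire equals the bare $n$-wire (the right-hand side of the displayed equation is presumably the identity wire). The plan is to reduce everything to the single-wire case and then unpack the definition of the gadget.

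\textbf{Step 1: unfold the gadget.} By Definition~\ref{def:ring-quot-gadget}, with the generating set $\{x_i^q - x_i\}_{i=1}^n$ from Remark~\ref{rem:basis-of-qradical}, the gadget for $\Gamma_q^n$ is a sequential composite of $n$ principal gadgets. The $i$-th one is the gadget of Lemma~\ref{lem:principle-ideal-flex} for the polynomial $x_i^q - x_i$. Concretely, the $n$-wire is copied, $x_i^q - x_i$ is computed, and the result is fed into the zero co-unit. By Lemma~\ref{lem:principle-ideals-commute} the order of these $n$ gadgets does not matter.

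\textbf{Step 2: make each gadget act on one wire.} Each polynomial $x_i^q - x_i$ depends only on the $i$-th variable. So the copy of the whole $n$-wire feeding it can be simplified using \polycp, \polydel, \zunit, and the scalable-notation definitions of Figure~\ref{fig:scalable-notation}. The result is a copy on wire $i$ alone, with the other wires passing through untouched. The $n$-wire gadget for $\Gamma_q^n$ then becomes the tensor product over $i$ of single-wire gadgets for $\Gamma_q^1 = (x^q - x)$.

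\textbf{Step 3: discharge each single-wire gadget.} Each single-wire gadget should be exactly the left-hand side of \eqref{ax:q-reduced}, or equal to it after a routine rewrite: expressing $x^q - x$ as a diagram via \polyadd/\polymult, or flipping the gadget via Lemma~\ref{lem:principle-ideal-flex}. Applying \eqref{ax:q-reduced} to each factor gives the identity wire. The tensor product of identities is the identity on $n$ wires, which completes the argument.

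\textbf{Main obstacle.} The hard step is Step 3, matching the figure \texttt{qred-lhs} to the gadget for $x^q - x$. If the axiom is stated as $x^q$ and $x$ being equated through a copy, rather than as $x^q - x$ going into the zero co-unit, I would need a derivation between the two forms. I would first try the same technique as in Lemma~\ref{lem:diagram-reduction}, presumably \kinv applied with scalar $-1$ together with \xfrob to convert an addition followed by the co-zero into an equality constraint. Step 2 also needs care, but it is routine scalable bookkeeping.
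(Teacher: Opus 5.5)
Your skeleton (unfold the gadget along the generators $x_i^q - x_i$ of Remark~\ref{rem:basis-of-qradical}, then invoke \eqref{ax:q-reduced}) is the one the paper uses. However, Step 3 rests on a misreading of the axiom, and the step that actually carries the proof is missing.

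The rule \eqref{ax:q-reduced} is not a statement about the constraint $x^q = x$, neither as the gadget for $x^q - x$ nor as $x^q$ and $x$ equated through a copy. Its left-hand side is the one-in, one-out forward diagram of the power map $x \mapsto x^q$, with cospan semantics $k[x] = k[x] \xleftarrow{y \mapsto x^q} k[y]$ (this is what its soundness check computes). Its right-hand side is the bare wire. So the single-wire gadget for $x^q - x$ is not an instance of this left-hand side. Nor can it be rewritten into it within $\LangCsp$: the apices of the two cospans, $k[x]/(x^q - x)$ and $k[x]$, are not isomorphic. Hence ``applying \eqref{ax:q-reduced} to each factor gives the identity wire'' does not happen in one move. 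Your contingency plan via \kinv and \xfrob converts between two forms of the axiom, neither of which is the real one.

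What you need instead is to apply \eqref{ax:q-reduced} \emph{inside} the unfolded gadget. Draw $x^q - x$ as a copy, then $x^q$ on one branch and the scalar $-1$ on the other, then an addition feeding the co-zero. The axiom turns the $x^q$ branch into a bare wire, leaving the gadget of the zero polynomial $x - x$. The missing step is showing that this gadget is the identity:
\begin{itemize}
\item \kadd merges the copy, the two branches (scalars $1$ and $-1$) and the addition into the scalar $0$;
\item \zeroInK rewrites this scalar as a delete followed by a zero;
\item \zunit together with \xbone removes the resulting copy--delete pair and the zero--co-zero bone.
\end{itemize}
This chain is the whole of the paper's proof, which runs it directly on the unfolded gadget. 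Your Step 2 (splitting into single-wire factors) is harmless but not needed.
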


Together with the finite field Nullstellensatz, lemma \ref{lem:diagram-qreduction} is sufficient for proving completeness (see Appendix~\ref{sec:append-sp} for details in proof): 

\begin{restatable}{prop}{spqComplete} 
    \label{prop:spq-complete}
    The semantic map $\sem_q$ is faithful. In other words, the rules of $\LangSpq$ are complete. 
\end{restatable}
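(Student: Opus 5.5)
The plan is to run the completeness argument of Proposition~\ref{prop:spk-complete} again, with Hilbert's Nullstellensatz replaced by the finite field Nullstellensatz (Proposition~\ref{prop:ff-nullstellensatz}) and Lemma~\ref{lem:diagram-reduction} replaced by Lemma~\ref{lem:diagram-qreduction}. Take two diagrams $D_1, D_2$ of $\LangSpq$ with $\sem_q(D_1) = \sem_q(D_2)$. Every rule of $\LangCsp$ still holds in $\LangSpq$, so Proposition~\ref{prop:cospan-form} rewrites each $D_i$ into cospan form. This gives an ideal $I_i \subset \Fq[x_1,\dots,x_{n_i}]$ and polynomial tuples $f_i, g_i$. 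The aim is then to show that $D_1$ and $D_2$ have the same semantics already in $\FrCspq$. Once that holds, Proposition~\ref{prop:csp-complete} rewrites one into the other using only $\LangCsp$ rules, and these are available in $\LangSpq$.

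The first step is to \emph{$q$-reduce} each cospan form syntactically. Using the copy structure and Corollary~\ref{cor:idem}, I will merge a copy of the gadget for $\Gamma_q^{n_i}$ onto the quotient gadget for $I_i$; Lemma~\ref{lem:diagram-qreduction} says this gadget equals the bare wire, so the merge is legitimate. Lemmas~\ref{lem:principle-ideals-commute} and~\ref{lem:ideal-basis} then combine the two gadgets into the single gadget for $I_i + \Gamma_q^{n_i}$. By Proposition~\ref{prop:ff-nullstellensatz}, this ideal is exactly $I_{\Fq}(V^{(\Fq)}(I_i))$. Hence each $D_i$ equals a cospan form whose apex $A_i = \Fq[x]/(I_i+\Gamma_q^{n_i})$ is $q$-reduced; by Remark~\ref{rem:basis-of-qradical} it is even the coordinate ring of the rational locus. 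Applying $(-)_{qred}$ to a cospan whose apex is already $q$-reduced changes nothing. So for these reduced forms, $\sem_q(D_i)$ is just $\Spec$ applied to the $\FrCspq$ cospan $S(m)\to A_i\leftarrow S(m')$.

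The second step transports the semantic equality back to algebra. The hypothesis says there is an isomorphism of spans of rational loci between $\Spec A_1$ and $\Spec A_2$ that commutes with the legs. By the equivalence of Proposition~\ref{prop:finite-field-anti-equiv}, this corresponds to an algebra isomorphism $\alpha: A_2 \to A_1$ compatible with the legs $f^\#_{i}, g^\#_{i}$. Hence the two reduced cospans are equal as morphisms of $\FrCspq$, and Proposition~\ref{prop:csp-complete} finishes the proof. If one prefers to make this explicit, Lemma~\ref{lem:iso-inverse-is-op} and Corollary~\ref{cor:conj-by-iso} absorb $\alpha$, and Lemma~\ref{lem:diagram-sheaf-ring} adjusts the representing tuples modulo the ideal.

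The main obstacle I expect is the bookkeeping in the first step. The difficulty is justifying that inserting the $\Gamma_q^{n}$ gadget next to an arbitrary quotient gadget is allowed. This requires Lemma~\ref{lem:diagram-qreduction} to hold for every number of wires $n$; I would handle that by tensoring the one-wire statement, since $\Gamma_q^n$ is generated coordinatewise. It also requires care about the legs: the equivalence of Proposition~\ref{prop:finite-field-anti-equiv} is an equivalence with $q$-reduced algebras, so the legs out of $\Fq[y]$ must be checked to factor through $\mathcal R_q^m$ correctly. Here the left adjointness of $(-)_{qred}$ (Proposition~\ref{prop:qred-is-left-adjoint}) guarantees that maps into a $q$-reduced algebra are unaffected by the reduction.
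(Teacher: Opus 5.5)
Your proposal is correct and matches the paper's proof: bring both diagrams to cospan form, use Lemma~\ref{lem:diagram-qreduction} to replace each ideal $I$ by $I+\Gamma_q$, and pull the span isomorphism back through the equivalence of Proposition~\ref{prop:finite-field-anti-equiv} to an isomorphism of cospans of algebras (precomposing the legs with $S_{\Fq}(m)\twoheadrightarrow\mathcal R_q^m$), then conclude by completeness of $\LangCsp$. The differences are only cosmetic: you do the syntactic $q$-reduction before the semantic comparison rather than after, and your appeal to the finite field Nullstellensatz is not needed, since the equivalence $\Spec$ alone does the work.
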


\begin{restatable}{theorem}{spq-iso}
    The semantic map $\sem_q$ is an isomorphism of PROPs. 
\end{restatable}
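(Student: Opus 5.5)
The statement is the conjunction of three properties of $\sem_q$: it is well-defined (soundness), full (universality) and faithful (completeness). The plan is to assemble these from results already established rather than prove anything new.

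\emph{Soundness.} By Proposition~\ref{prop:qred-sound}, the composite $\LangCsp \xrightarrow{\sem} \FrCspq \xrightarrow{\Spec(-)_{qred}} \FrSpq$ factors through the quotient $\LangCsp \twoheadrightarrow \LangSpq$. This factorization is exactly the PROP morphism $\sem_q$. It is identity-on-objects and strict symmetric monoidal, because both $\sem$ and $\Spec(-)_{qred}$ are; the latter preserves pushouts as a left adjoint (Proposition~\ref{prop:qred-is-left-adjoint}) composed with the anti-equivalence of Proposition~\ref{prop:finite-field-anti-equiv}.

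\emph{Universality.} Here I argue that $\Spec(-)_{qred}: \FrCspq \to \FrSpq$ is full. Take any span $\Fq^n \leftarrow W \rightarrow \Fq^m$ with $W$ a rational locus. Via Proposition~\ref{prop:finite-field-anti-equiv} it corresponds to a cospan $\mathcal R_q^n \to \Fq[W] \leftarrow \mathcal R_q^m$ of $q$-reduced algebras. Write $\Fq[W] = S_{\Fq}(r)/I$ with $I \supseteq \Gamma_q^r$. Each leg $\mathcal R_q^n \to \Fq[W]$ lifts along $S_{\Fq}(n) \twoheadrightarrow \mathcal R_q^n$ to a morphism out of the free algebra. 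This produces a cospan in $\FrCspq$ whose $q$-reduction is the given one, since $(S_{\Fq}(r)/I)_{qred} = S_{\Fq}(r)/I$. The cospan is $\sem$ of a cospan-form diagram by Proposition~\ref{prop:csp-universal}. Since the square $\LangCsp \to \LangSpq \to \FrSpq$ commutes, $\sem_q$ is full.

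\emph{Completeness.} This is Proposition~\ref{prop:spq-complete}. A fully faithful, identity-on-objects strict symmetric monoidal functor between PROPs is a PROP isomorphism by Definition~\ref{def:PROP}, which concludes the proof.

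The only step with any content is universality. Two points need checking there. First, every object of the span category really is $\Fq^n$, so that the legs land in free objects; this is where the choice of structuring functor $G_q$ is used. Second, the lift of a leg through the quotient exists: any map from $\mathcal R_q^n$ into a $q$-reduced algebra is determined by the images of the generators $x_i$, and these images can be taken as polynomials. I expect both checks to be routine.
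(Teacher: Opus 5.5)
Your proposal is correct and matches the paper's argument: soundness comes from Proposition~\ref{prop:qred-sound}, fullness is inherited from $\LangCsp$ via Proposition~\ref{prop:csp-universal}, and faithfulness is Proposition~\ref{prop:spq-complete}. The paper states the fullness of $\Spec(-)_{qred}$ on structured cospans only through the two-headed arrow in its commutative square. Your explicit check of it (precompose each leg with $S_{\Fq}(n)\twoheadrightarrow\mathcal R_q^n$, and note that $\Fq[W]$ is already $q$-reduced) fills in that step.
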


\subsection{Matrix Semantics for GAG}

Given a structured span of affine varieties over a field $k$ (not necessarily algebraically closed) of the form
   $k^n \xleftarrow{\varphi} V \xrightarrow{\psi} k^m$
we can ``repackage'' the data of this structured span as an indexed family of affine varieties
\begin{equation}
    V_{\mathbf x \mathbf y} := \varphi^{-1}(\mathbf x) \cap \psi^{-1}(\mathbf y)
\end{equation}
each of which is a subvariety of $V$.

In the particular case where $k = \Fq$ is a finite field, and $V$ is an $\Fq$-rational locus, the family of affine varieties $V_{\mathbf x \mathbf y}$ consists of finite sets, which we can view as entries in an $\mathbb N$-matrix: 

\begin{definition}
    Let $M_q: \,\, \FrSpq \to \Mat_\mathbb N$ be the functor which sends a structured span of $\Fq$-rational loci
    \begin{equation}
        \Fq^n \xleftarrow{\varphi} V \xrightarrow{\psi} \Fq^m
    \end{equation}
    to the $q^n \times q^m$-matrix $M_q(\xleftarrow{\varphi} \xrightarrow{\psi})$ with entries
    \begin{equation}
        M_q(\xleftarrow{\varphi} \xrightarrow{\psi})_{\mathbf x \mathbf y} = |\varphi^{-1}(\mathbf x) \cap \psi^{-1}(\mathbf y)|.
    \end{equation}
\end{definition}

The following is immediate (proofs are given in \ref{append:spq-matn}):

\begin{restatable}{prop}{MSound}
    The functor $M_q$ is well-defined and fully faithful. 
\end{restatable}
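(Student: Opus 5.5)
The plan is to work entirely through the equivalence $\Affqq \simeq \mathsf{FinSet}$ of Proposition~\ref{prop:affqq-is-finset}. Under it, a structured span $\Fq^n \xleftarrow{\varphi} V \xrightarrow{\psi} \Fq^m$ is just a span of finite sets, and every function between rational loci is polynomial. All the properties of $M_q$ then reduce to counting fibres of a map $V \to \Fq^n \times \Fq^m$.

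\textbf{Well-definedness.} The first step is to check that $M_q$ is well defined on isomorphism classes of spans. A 2-isomorphism $\theta: V \to V'$ commuting with both legs is a bijection that restricts to bijections $\varphi^{-1}(\mathbf x)\cap\psi^{-1}(\mathbf y) \to \varphi'^{-1}(\mathbf x)\cap\psi'^{-1}(\mathbf y)$, so the entries agree. Each entry is a finite natural number because $V$ is a finite set.

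\textbf{Functoriality.} Identities are the spans $\Fq^n = \Fq^n = \Fq^n$, whose fibre over $(\mathbf x,\mathbf y)$ is a singleton when $\mathbf x=\mathbf y$ and empty otherwise, giving the identity matrix. For composition I use the pullback description from Corollary~\ref{prop:affine-pullback}: in the finite-set picture, $V\times_{\Fq^m} W = \{(v,w) : \psi(v)=\varphi'(w)\}$. The fibre of the composite over $(\mathbf x,\mathbf z)$ decomposes as the disjoint union over $\mathbf y \in \Fq^m$ of $V_{\mathbf x\mathbf y}\times W_{\mathbf y\mathbf z}$. Its cardinality is therefore $\sum_{\mathbf y} |V_{\mathbf x\mathbf y}|\,|W_{\mathbf y\mathbf z}|$, which is exactly the matrix product. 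The one point needing care is that the pullback in $\Affqq$ really is the set-theoretic fibre product, i.e.\ that its $\Fq$-points are these pairs. This follows from the equivalence with $\mathsf{FinSet}$, or directly from the explicit pushout formula \eqref{eq:explicit-pushout-of-calg} after $q$-reduction. The same style of argument shows that $M_q$ sends monoidal products to Kronecker products, if that is wanted.

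\textbf{Faithfulness and fullness.} For faithfulness, suppose two spans give the same matrix. Then $V = \bigsqcup_{\mathbf x,\mathbf y} V_{\mathbf x\mathbf y}$ and likewise $V' = \bigsqcup_{\mathbf x,\mathbf y} V'_{\mathbf x\mathbf y}$, with equal fibre sizes. Choosing bijections fibrewise yields a bijection $V\to V'$ commuting with the legs. By Proposition~\ref{prop:affqq-is-finset}, this bijection and its inverse are polynomial maps, so it is an isomorphism of spans.

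For fullness, given a matrix $N \in \Mat_\mathbb N(q^n,q^m)$, let $V$ be the finite set $\bigsqcup_{\mathbf x,\mathbf y}\{1,\dots,N_{\mathbf x\mathbf y}\}$ with the evident projections to $\Fq^n$ and $\Fq^m$. Embed $V$ injectively into some $\Fq^r$ with $q^r \ge |V|$; it is then an $\Fq$-rational locus, and the projections are polynomial maps, again by Proposition~\ref{prop:affqq-is-finset}. By construction its matrix is $N$.

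I expect no real obstacle here. The only step requiring attention is identifying pullbacks and isomorphisms in $\Affqq$ with their set-theoretic counterparts, and that is exactly the content of Proposition~\ref{prop:affqq-is-finset}.
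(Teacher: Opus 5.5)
Your proposal is correct and follows essentially the same route as the paper: you get identities and composition from the fibre decomposition of the set-theoretic pullback, faithfulness by gluing fibrewise bijections (which are polynomial because every function between rational loci is), and fullness by building a finite locus with prescribed fibre sizes. The differences are cosmetic. The paper realizes the fullness witness as a subset of $\Fq^n\times\Fq^m\times\Fq^N$ so that the legs are literal coordinate projections, whereas you embed an abstract disjoint union into some $\Fq^r$ and appeal to Proposition~\ref{prop:affqq-is-finset} for the legs; you also add an explicit check that $M_q$ respects span isomorphism, which the paper leaves implicit.
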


Since $\Affqq \simeq \mathsf{FinSet}$, this is just a restriction of the equivalence, well-known in folklore, between $\Span(\mathsf{FinSet}) \simeq \Mat_\mathbb N$ in disguise. 

But since we already know that $\LangSpq$ is a sound, universal, and complete language for $\FrSpq$ from Section~\ref{sec:spans-over-ff}, we have:

\begin{corollary}
    The diagrammatic language $\LangSpq$ serves as a sound, universal, and complete language for a full subcategory of $\Mat_\mathbb N$ consisting of the objects $q^n$ with $n \in \mathbb N$.
\end{corollary}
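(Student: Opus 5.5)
The corollary is a composition of two functors we already control. By Theorem~(the $\sem_q$ isomorphism) we have a PROP isomorphism $\sem_q: \LangSpq \to \FrSpq$. By the preceding proposition, $M_q: \FrSpq \to \Mat_\mathbb N$ is well defined and fully faithful. The plan is to show that the composite $M_q \circ \sem_q$ is a faithful, full, identity-on-objects (up to relabelling) functor onto the stated full subcategory. Its image should be the full subcategory $\mathcal M_q \subset \Mat_\mathbb N$ whose objects are the natural numbers $q^n$. Since $\mathcal M_q$ is closed under tensor ($q^n \cdot q^m = q^{n+m}$), it is itself a PROP under the reindexing $n \mapsto q^n$.

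\textbf{Objects and monoidal structure.} First I check that $M_q$ sends the object $\Fq^n$ to $q^n$. Next I check that $M_q$ is strict monoidal, in the sense that it sends the product of spans to the Kronecker product of matrices. This holds because $|(\varphi_1\times\varphi_2)^{-1}(\mathbf x_1,\mathbf x_2)\cap(\psi_1\times\psi_2)^{-1}(\mathbf y_1,\mathbf y_2)|$ factors as a product of the two cardinalities. I also check that $M_q$ sends the braiding to the corresponding permutation matrix. With the indexing of $\mathcal M_q$ by $n$ (rows and columns indexed lexicographically by $\Fq^n$), this makes $M_q$ a PROP morphism $\FrSpq \to \mathcal M_q$.

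\textbf{Composite.} $M_q \circ \sem_q$ is then a PROP morphism $\LangSpq \to \mathcal M_q$. Its three properties follow from the two results above:
\begin{itemize}
\item it is \emph{faithful}, because $\sem_q$ is an isomorphism and $M_q$ is faithful;
\item it is \emph{full}, because $\sem_q$ is surjective on hom-sets and $M_q$ is full onto the full subcategory on objects $q^n$;
\item it is \emph{sound} by construction, since it is a well-defined functor out of the quotient $\LangSpq$.
\end{itemize}
Hence it is a PROP isomorphism $\LangSpq \cong \mathcal M_q$, which is exactly a sound, universal and complete presentation.

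\textbf{Main obstacle.} The only nontrivial point is fullness of $M_q$ onto all $\mathbb N$-matrices of size $q^n\times q^m$. That is already contained in the fully faithful proposition. Still, I would sanity-check it via Proposition~\ref{prop:affqq-is-finset}: given an $\mathbb N$-matrix $N$, take the finite set $V=\{(\mathbf x,\mathbf y,i): i< N_{\mathbf x\mathbf y}\}$. This set embeds in some $\Fq^r$ as a rational locus, and the projections to $\mathbf x$ and $\mathbf y$ are polynomial maps. The resulting span realizes $N$.

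A secondary subtlety is monoidal strictness versus mere equivalence. Because matrices are indexed by $\Fq^n$ with a fixed ordering, the Kronecker product matches the tensor of spans on the nose, so no coherence issues arise.
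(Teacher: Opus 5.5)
Your proposal is correct and follows the same route as the paper. The paper obtains the corollary by composing the PROP isomorphism $\sem_q$ (the completeness and isomorphism theorem for $\LangSpq$) with the fully faithful functor $M_q$. Your extra checks are details the paper leaves implicit but does not change the route: that $M_q$ is strict monoidal (products of spans go to Kronecker products, braidings to permutation matrices), and that the reindexing $n \mapsto q^n$ makes the full subcategory a PROP.
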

We use $\semqMat: \LangSpq \to \Mat_\mathbb N$ to denote this new semantic.

\section{From GAG to Counting CSP} 
\label{sec:GAG-to-CSP}
We now briefly illustrate the direct connection between GAG and CSP. Recall that a constraint satisfaction problem over a domain $D$ and a constraint language $\Gamma$ (typically a set of finite arity relations over $D$) is a pair $\langle X, C\rangle$, where: 
\begin{itemize}
    \item $X$ is a set of variables $x_1, \dots, x_n$, 
    \item $C = \{C_1, \dots, C_m\}$ is a set of constraints, each of which is a pair $C_j = \langle s_j, g_j \rangle$ where $s_j \subset X$ is the scope of the constraint, and $g_j \in \Gamma$ is the content of the constraint. 
\end{itemize}
An assignment $\sigma: X \to D$ is said to satisfy the CSP if for every $j$, the tuple $(\sigma(x_i))_{i \in s_j}$ is in the relation $g_j$. The problem of determining whether there exists an assignment $\sigma$ that satisfies $\langle X, C\rangle$ is called the decision problem $\mathsf{CSP}(\Gamma)$; while the problem of counting the number of assignments that satisfy $\langle X, C\rangle$ is called the counting problem $\#\mathsf{CSP}(\Gamma)$. 

Now suppose that our domain of values is a finite field $\Fq$, and our constraint language $\Gamma$ consists of polynomial maps $g: \Fq^n \to \Fq$, read as the requirement $g = 0$. Of course, every finite-arity relation over $\Fq$ can be expressed as an affine variety, and therefore our $\Gamma$ is just the set of all relations over $D$. The finite field GAG gives an easy representation of instances of $\#\mathsf{CSP}(\Gamma)$ in the following way: 

\begin{restatable}{prop}{GAGasCSP}
    \label{prop:gag-as-csp}
    Let $\langle X, C\rangle$ be a CSP instance over the domain $\Fq$, with set of variables $X = \{x_1, \dots, x_n\}$, and polynomial constraints $g_1, \dots, g_m \in k[x_1, \dots, x_n]$. Let $I$ be the ideal generated by $g_1, \dots, g_m$. Then the span semantic of the ideal gadget is
    \begin{equation}
        \label{eq:ideal-gadget-as-csp-instance}
    \input{./figures/ring-quot-gadget-numbered.tikz} \overset{\sem_q}\mapsto \left(\Fq^n \hookleftarrow V(I)^{\Fq} \hookrightarrow \Fq^n \right)
    \end{equation}
    where, recall, $V(I)^{\Fq} := \{\mathbf x \in \Fq^n: g_1(\mathbf x) = \dots = g_j(\mathbf x) = 0\}$. In other words, it is the set of assignments satisfying $\langle X, C\rangle$. 

    The same diagram composed with no wires on either side has the following matrix semantic: 
    \begin{equation}
        \label{eq:count-csp}
        \begin{tikzpicture}
	\begin{pgfonlayer}{nodelayer}
		\node [style=Z dot] (2) at (0, -0.5) {};
		\node [style=idealket] (3) at (0, 0.5) {$I$};
	\end{pgfonlayer}
	\begin{pgfonlayer}{edgelayer}
		\draw (3) to (2);
	\end{pgfonlayer}
\end{tikzpicture}
 \overset{\semqMat[-]}\mapsto \left|V(I)^{(\Fq)}\right|
    \end{equation}
    In other words, it is the number of assignments that satisfy $\langle X, C\rangle$. 
\end{restatable}

\begin{theorem}
    The problem of rewriting arbitrary GAG diagrams into each other is \#P-hard.
\end{theorem}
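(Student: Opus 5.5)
We prove \#P-hardness by a polynomial-time Turing reduction from \#SAT, using Proposition~\ref{prop:gag-as-csp} together with completeness of $\LangSpq$ (Proposition~\ref{prop:spq-complete}). We interpret ``the problem of rewriting GAG diagrams into each other'' as the decision problem: given two diagrams $D_1, D_2$ of $\LangSpq$ with the same type, decide whether $D_1 = D_2$ is derivable. By soundness and completeness, this holds exactly when $\semqMat[D_1] = \semqMat[D_2]$.

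First we encode a \#SAT instance. Take a CNF formula $\phi$ on $n$ variables and work over $\mathbb F_2$ (the construction extends to any $\Fq$). Each clause $c = \ell_1 \vee \dots \vee \ell_r$ becomes the polynomial $g_c = \prod_i (1 - \ell_i)$, where a negative literal $\neg x$ is encoded as $1 - x$. Then $g_c(\mathbf a)=0$ exactly when $\mathbf a$ satisfies $c$. Let $I=(g_c)_c$. We build the closed diagram $D_\phi$ directly from the generators: one copy-spider per variable, the product circuit $g_c$ for each clause, and each $g_c$ capped by the ``$=0$'' gadget as in Definition~\ref{def:ring-quot-gadget}. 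This avoids computing Gröbner bases, so the diagram has size polynomial in $|\phi|$. By \eqref{eq:count-csp}, $\semqMat[D_\phi] = |V(I)^{(\mathbb F_2)}| = \#\phi$.

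Next we build comparison diagrams. For each $N \le 2^n$ we need a closed diagram $C_N$ of size polynomial in $n$ with $\semqMat[C_N]=N$. Start from the counting gadget, i.e.\ the closed diagram (white dot--wire--white dot) for one free variable, which has semantics $q=2$. Tensoring gives powers of $2$. To obtain an arbitrary $N$, write it in binary, $N = \sum_j b_j 2^j$. A sum of scalars is available in span semantics as a disjoint union of varieties. We implement it by adding a selector variable $s$ together with constraints of the form $s\cdot(\text{constraint set }A) \wedge (1-s)\cdot(\text{constraint set }B)$, so that the solutions split as $|A|+|B|$. Iterating this gives $C_N$ with $O(n^2)$ generators. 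A simpler alternative is to take the CSP counting the assignments $\mathbf y\in\mathbb F_2^n$ with $\mathbf y < N$ in binary (a comparator circuit), and then apply Proposition~\ref{prop:gag-as-csp} again.

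Finally we run the reduction. With oracle queries ``is $D_\phi \otimes C_{M} = C_N \otimes C_{M'}$?'' we could compare directly, but equality queries alone would force a linear search over $2^n$ values. So we use threshold comparisons instead, encoded as equalities. The query ``$\#\phi \ge N$'' is equivalent to the rewritability of the diagram for $\#(\phi \wedge \text{extra})$ against a fixed diagram. Concretely, add a fresh variable block $\mathbf y$ and the comparator constraints. Then $\#\phi \cdot N$ and similar products are expressible, but an inequality is not directly an equality. The cleaner route is to reduce from the decision version of a \#P-hard problem under Turing reductions: we binary-search $\#\phi$ using queries about the satisfiability-count of $\phi' = \phi \vee (\mathbf y < N)$ on fresh variables.

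I expect this last step to be the main obstacle: turning equality-of-counts queries into something that determines $\#\phi$ with polynomially many queries. If binary search via equalities fails, the fallback is to assert hardness via the standard result that deciding $\#\phi = N$ (i.e.\ the problem $\mathsf{C_=P}$) is \#P-hard under polynomial-time Turing reductions (counting queries against a $\mathsf{C_=P}$-type oracle). In that case it suffices to reduce that problem to the single rewrite query $D_\phi = C_N$, which the first two steps already accomplish.
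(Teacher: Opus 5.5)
Your Steps 1 and 2 are fine. They give polynomial-size closed diagrams with $\semqMat[D_\phi]=\#\phi$ and $\semqMat[C_N]=N$. With completeness of $\LangSpq$, this is a many-one reduction from exact counting (``is $\#\phi=N$?'') to your decision problem.

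The gap is Step 3, which is where \#P-hardness has to come from, and it fails. The proposed binary search never produces a threshold query. Over fresh variables $\mathbf y\in\mathbb F_2^p$ one has $\#\bigl(\phi\vee(\mathbf y<N)\bigr)=2^nN+(2^p-N)\,\#\phi$, an affine function of $\#\phi$. An equality test on it is therefore again just a test of the form $\#\phi=N'$.

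The fallback is not a standard result. In fact, under your formulation, Step 3 cannot be completed without settling an open question. By soundness and completeness, $D_1=D_2$ is derivable iff the matrices $\semqMat[D_1]$ and $\semqMat[D_2]$ coincide. Write $a_{\mathbf x\mathbf y},b_{\mathbf x\mathbf y}$ for their entries. Each entry is a \#P function of the input: it counts the $\Fq$-labellings of the internal wires consistent with every generator. Equality of all entries is then the single zero test $\sum_{\mathbf x,\mathbf y}(a_{\mathbf x\mathbf y}-b_{\mathbf x\mathbf y})^2=0$ of a $\mathsf{GapP}$ function. So your decision problem is $\mathsf{C}_{=}\mathsf{P}$-complete. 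Any polynomial-time Turing reduction from \#SAT to it would show $\#\mathsf{P}\subseteq\mathsf{FP}^{\mathsf{C}_{=}\mathsf{P}}$, which is not known to hold.

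The paper avoids this by reading ``the problem of rewriting'' as a search problem, not a yes/no test on two given diagrams. It fixes a family of polynomials $g_j$ with exactly $j$ roots. It then reduces $\#\mathsf{CSP}(\Gamma)$ to rewriting the closed ideal-gadget diagram for $I$ into the form white dot--$g_j$--grey dot for some $j$. By \eqref{eq:count-csp} and completeness, the $j$ reached must equal $|V(I)^{(\Fq)}|$. So one rewriting task yields the count, and no comparisons are needed.

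Your first two steps already supply everything this requires. Your $C_N$ can serve as the canonical targets, and they have size polynomial in $\log N$, a point the paper leaves implicit. Switching to the search formulation therefore closes your argument. With the decision formulation, you can honestly claim only $\mathsf{C}_{=}\mathsf{P}$-hardness.
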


\begin{proof}
    This follows from equation \eqref{eq:count-csp} in proposition \ref{prop:gag-as-csp}.
    We can fix a family of polynomials $\{g_j\}_{j \in \mathbb N}$ such that each $g_j$ is a polynomial of some number $n_j$ of variables with exactly $j$ roots in $\Fq^{n_j}$.

    The problem $\#\mathsf{CSP}(\Gamma)$ reduces to rewriting a given diagram $\begin{tikzpicture}
	\begin{pgfonlayer}{nodelayer}
		\node [style=Z dot] (0) at (-0.5, 0) {};
		\node [style=bra] (1) at (0.5, 0) {$I$};
	\end{pgfonlayer}
	\begin{pgfonlayer}{edgelayer}
		\draw (1) to (0);
	\end{pgfonlayer}
\end{tikzpicture}
$ into the form $\begin{tikzpicture}
	\begin{pgfonlayer}{nodelayer}
		\node [style=rmat] (0) at (0, 0) {$g_j$};
		\node [style=X dot] (1) at (1, 0) {};
		\node [style=Z dot] (2) at (-1, 0) {};
	\end{pgfonlayer}
	\begin{pgfonlayer}{edgelayer}
		\draw (1) to (2);
	\end{pgfonlayer}
\end{tikzpicture}
$ for some $j \in \mathbb N$. Since $\#\mathsf{CSP}(\Gamma)$ is \#P-complete \cite{jerrum_counting_2017}, so is the problem of rewriting arbitrary GAG diagrams into each other.
\end{proof}

\begin{remark}
    In the case $q = 2$, polynomial constraints are equivalent to boolean constraints, and so $\#\mathsf{CSP}(\Gamma)$ is $\#\mathsf{SAT}$.
\end{remark}

\begin{remark}
    Conversely, since every diagram in GAG can be rewritten into cospan form,
    so every \textit{closed} diagram (a diagram without input or output wires) can be rewritten into: 
    \begin{equation}
        \input{./figures/closed-span-form.tikz} \overset{\polydel}= \begin{tikzpicture}
	\begin{pgfonlayer}{nodelayer}
		\node [style=Z dot] (0) at (-1, -0.5) {};
		\node [style=Z dot] (1) at (1, -0.5) {};
		\node [style=Z dot] (2) at (0, -0.5) {};
		\node [style=idealket] (3) at (0, 0.5) {$I$};
	\end{pgfonlayer}
	\begin{pgfonlayer}{edgelayer}
		\draw [in=180, out=0] (0) to (1);
		\draw (3) to (2);
	\end{pgfonlayer}
\end{tikzpicture}
 \overset{\zfusion}= 
    \end{equation} 
    So determining the semantic of an arbitrary \textit{closed} diagram in GAG is equivalent to $\#\mathsf{CSP}(\Gamma)$. 
\end{remark}

\section{From GAG to Qudit ZH Calculus}
\label{sec:from-GAG-to-ZH}
The ZH calculus is a graphical language for quantum computation, and it has a natural correspondence with the Toffoli-Hadamard gate set.
In this section, we elucidate the connection between algebraic geometry and the ZH calculus.
We review the Galois-Qudit ZH calculus introduced in \cite{gao_qudit_2024} and show that the ZH calculus can be obtained as a fairly minimal extension of finite-field GAG:
adding a single state and one scalar suffices to recover the ZH calculus in arbitrary dimension.

\subsection{Qudit ZH Calculus}
Here we give a brief overview of the qudit ZH Calculus as introduced in \cite{gao_qudit_2024}.

\begin{definition}
    \cite{ketkar_nonbinary_2005}
    A Galois-Qudit is a quantum system represented by a Hilbert space of dimension $q = p^t$ for some prime $p$ and some natural number $t$, with a computational basis labelled as $|j\rangle$ with $j \in \Fq$.
\end{definition}

For a finite field $\Fq$, the \textbf{Galois-Qudit ZH Calculus} over $\Fq$ is the free\footnote{No complete set of rewrite rules is known for the qudit ZH. For the sake of full generality, we work with the free dagger-PROP generated by the ZH generators.} dagger-PROP generated by three families of generators:
\begin{equation*}
    \input{./figures/zh-generators.tikz}
\end{equation*}
We will denote this free dagger-PROP as $\ZH$. 
It admits a semantics functor to the PROP $\ZHtargetMats$.
A morphism from $n$ to $m$ in $\ZHtargetMats$ is a matrix of the form $q^{\frac{k}{2}} A$, where $A$ is a $q^m \times q^n$ matrix with entries in the ring $\mathbb Z[\omega]$, $\omega$ is a primitive $p$th root of $1$, and $k$ is an integer.  
\begin{equation}
    \ZH \xrightarrow{\ZHsem} \ZHtargetMats
\end{equation}
This functor is full~\cite{gao_qudit_2024} and is defined on the generators as follows:
\begin{align*}
    \input{./figures/Z-spider.tikz} &\quad \mapsto\quad  \sum_{i \in \mathbb F_q} |i \rangle^{\otimes m} \langle i |^{\otimes n} \\
    \input{./figures/H-spider.tikz} &\quad \mapsto\quad  \frac{1}{\sqrt{q}}\sum_{\substack{j_1, ..., j_m \in \mathbb F_q \\ i_1, ..., i_n \in \mathbb F_q}} \!\!\! \omega^{\tr(j_1...j_m i_1...i_n)} | j_1 \rangle ... | j_m \rangle \langle i_1 | ... \langle i_n | \\
    \begin{tikzpicture}
	\begin{pgfonlayer}{nodelayer}
		\node [style=X phase dot] (0) at (-0.75, 0) {$j$};
		\node [style=none] (1) at (0.5, 0) {};
	\end{pgfonlayer}
	\begin{pgfonlayer}{edgelayer}
		\draw (0) to (1.center);
	\end{pgfonlayer}
\end{tikzpicture}
 &\quad \mapsto\quad  \sqrt{q}|j\rangle \quad j \in \Fq
\end{align*}
where $\tr = \tr_{q/p}$ is the field trace function.

\subsection{ZH Calculus as an Extension of GAG}
We now extend the language $\LangSpq$ for finite field GAG to cover the entirety of the Galois-Qudit ZH Calculus over $\Fq$.
\begin{definition}
    The dagger-PROP $\GagOne$ is generated by $\LangSpq$ together with the following additional generators.
    \begin{equation}
        \begin{tikzpicture}
	\begin{pgfonlayer}{nodelayer}
		\node [style=Z phase dot] (0) at (-0.75, 0) {$1$};
		\node [style=none] (1) at (0.5, 0) {};
	\end{pgfonlayer}
	\begin{pgfonlayer}{edgelayer}
		\draw (0) to (1.center);
	\end{pgfonlayer}
\end{tikzpicture}
 \qquad \qquad \begin{tikzpicture}
	\begin{pgfonlayer}{nodelayer}
		\node [style=scalar] (0) at (0.5, 0) {$q^{-\frac{1}{2}}$};
	\end{pgfonlayer}
\end{tikzpicture}

    \end{equation}
    Here, $$ is a state in the Fourier basis as we define below.
\end{definition}
\begin{definition}
    We define the semantic functor $\semZOne: \GagOne \to \ZHtargetMats$.
    For each generator of $\LangSpq$, we set $\semZOne$ to be the composition of $\semqMat$ with the canonical inclusion $\Mat_\mathbb N \hookrightarrow \ZHtargetMats$.
    For the additional generators, we define:
    \begin{align}
        \semZOne[\  \ ] &= \quad \sum_{i \in \Fq} \omega^{\tr(i)} |i\rangle \\
        \semZOne[\  \ ] &= \quad q^{-\frac{1}{2}}
    \end{align}
    Hence, the following commutes:
    \[\begin{tikzcd}
        {\GagOne} & \ZHtargetMats \\
        \LangSpq & {\Mat_{\mathbb N}}
        \arrow["\semZOne", from=1-1, to=1-2]
        \arrow[hook, from=2-1, to=1-1]
        \arrow["\semqMat"', from=2-1, to=2-2]
        \arrow[hook, from=2-2, to=1-2]
    \end{tikzcd}\]
\end{definition}

The dagger-PROP $\GagOne$ covers the entirety of the Galois-Qudit ZH Calculus.
To see this, we describe how its generators can be constructed in $\GagOne$ and show that the semantics agree.
\begin{definition}
    We have the functor $\iota_{\ZH}: \ZH \to \GagOne$ defined on the generators as follows:
    \begin{align*}
    \input{./figures/z-to-gag.tikz}\\[0.6em]
    \input{./figures/h-to-gag.tikz}\\[0.6em]
    \input{./figures/basis-to-gag.tikz}
    \end{align*}
\end{definition}
\begin{restatable}{prop}{iotaZHSound}\label{prop:iota-zh-sound}
    The functor $\iota_{\ZH}$ is consistent with the semantics of the ZH Calculus.
    In other words, the following diagram commutes:
    \[\begin{tikzcd}[ampersand replacement=\&]
        \GagOne      \& \ZHtargetMats \\
        \ZH \&
        \arrow["\semZOne", from=1-1, to=1-2]
        \arrow["\iota_{\ZH}", from=2-1, to=1-1]
        \arrow["\ZHsem"', from=2-1, to=1-2]
    \end{tikzcd}\]
\end{restatable}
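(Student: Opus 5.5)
Both $\ZH$ and $\GagOne$ are generated freely as dagger-PROPs, and all three functors in the triangle are strict symmetric monoidal dagger functors. It therefore suffices to check commutativity on the three families of generators of $\ZH$: the Z-spiders, the H-spiders, and the computational-basis states $|j\rangle$. For each generator we take its image under $\iota_{\ZH}$, compute $\semZOne$ of that image, and compare with $\ZHsem$. Compatibility with the dagger should come for free: $\semqMat$ sends the dagger in $\LangSpq$ (the transpose of spans) to matrix transpose, which equals conjugate transpose on $\mathbb N$-matrices. For the new generators we define $\semZOne$ on daggers by conjugate transpose.

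\textbf{Z-spiders.} The image of a Z-spider should be a plain white copy/delete spider in $\LangSpq$, perhaps with a scalar correction. Its span semantic is the diagonal $\Fq \hookrightarrow \Fq^{n+m}$. Under $M_q$ this has entries $1$ exactly when all indices agree, which is $\sum_i |i\rangle^{\otimes m}\langle i|^{\otimes n}$. Any scalar factors, such as $q^{\pm 1/2}$ or closed circles whose value is $q$, are handled by noting that closed GAG diagrams count points, as in Proposition~\ref{prop:gag-as-csp}.

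\textbf{Basis states.} $|j\rangle$ is translated as the constant polynomial $j$ (that is, $j$ times the one) followed by a cograph embedding, together with a scalar $\sqrt q$. The span $\{\ast\} \to \Fq$ picking out $j$ gives $|j\rangle$. The factor $\sqrt q$ must come from the translation, for instance as an inverse of $q^{-1/2}$ realised by a closed white loop (value $q$) composed with $q^{-1/2}$. This is a direct check.

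\textbf{H-spiders (main obstacle).} The image presumably feeds the $n+m$ legs through the multiplication gate to form the product $x_1\cdots x_{n+m}$, and then plugs the result into the Fourier state $|1\rangle$ (semantics $\sum_i \omega^{\tr(i)}|i\rangle$), with a $q^{-1/2}$ scalar attached. The key step is to compute the composite matrix. The multiplication part is the graph of the function $\mu\colon \Fq^{n+m}\to\Fq$. Its matrix, with the output index $i$ placed against the inputs, has entry $1$ when $i = \prod_\ell x_\ell$. Contracting with the Fourier state, either transposed or via the codiagonal, gives the coefficient $\omega^{\tr(\prod x_\ell)}$ on each basis vector $|x_1\ldots x_{n+m}\rangle$. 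The legs are then bent into $n$ inputs and $m$ outputs using the cups and caps built from white spiders. These have semantics $\sum_i |ii\rangle$ by the Z-spider case, so bending is simply index relabelling. Multiplying by $q^{-1/2}$ gives exactly $\ZHsem$ of the H-spider.

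The care needed lies in matching orientations, since polynomials are drawn in the direction opposite to the algebra morphisms and the graph/cograph embeddings swap direction. It also lies in confirming that the scalars in the translation produce precisely $1/\sqrt q$. Edge cases such as the arity-zero H-spider, where the empty product is $1$ and the value is $\omega^{\tr(1)}/\sqrt q$, should be checked separately. Once the generators agree, functoriality gives commutativity of the whole triangle.
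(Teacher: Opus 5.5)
Your proposal is correct and follows the paper's own proof, which is the same generator-by-generator semantic check. It treats Z-spiders as plain white spiders with diagonal span semantics, H-spiders as the multiplication gate contracted with the Fourier state and scaled by $q^{-1/2}$, and basis states as the one-state scaled by $j$ together with the scalar $q^{-1/2}\cdot q=\sqrt q$. Your remarks on dagger compatibility and on contracting the Fourier state without conjugating it are extra care that the paper omits. Only the freeness of $\ZH$ is used here; $\GagOne$ carries the $\LangSpq$ relations and is not free, but this does not affect the argument.
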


We will now show that by adding two rules to $\GagOne$, we can simulate any amplitude computable in the ZH calculus using the rewrite rules of GAG and a constant overhead.
Consider the following rewrite rules.
\begin{equation}\label{eq:gag-one-copy-delete}
    \input{./figures/gag-one-copy-delete.tikz}
\end{equation}
\begin{restatable}{prop}{copyDeleteZOneSound}\label{prop:copy-delete-z1-sound}
    The rewrite rules in equation \eqref{eq:gag-one-copy-delete} are sound with respect to the semantics $\semZOne$.
\end{restatable}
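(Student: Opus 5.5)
Since $\semZOne$ is a PROP morphism, it is enough to check each rule in \eqref{eq:gag-one-copy-delete} separately. For each rule I will compute the matrix denoted by both sides and show that the two matrices agree. Every generator of $\LangSpq$ appearing in a rule is interpreted through $\semqMat$, i.e.\ as the counting matrix $M_q$ of its span. Concretely: copy is $\sum_i |ii\rangle\langle i|$, addition is $\sum_{i,j}|i+j\rangle\langle ij|$, scaling by $c$ is $\sum_i |ci\rangle\langle i|$, deletion is $\sum_i\langle i|$, and zero is $|0\rangle$, together with their transposes for the opposite generators. The new generators are interpreted by their stated values, namely $\sum_i \chi(i)|i\rangle$ with $\chi(i):=\omega^{\tr(i)}$, and the scalar $q^{-1/2}$.

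The whole computation rests on three facts about the additive character $\chi:\Fq\to\mathbb Z[\omega]$, which I will establish first.
\begin{enumerate}
  \item Multiplicativity: $\chi(a+b)=\chi(a)\chi(b)$ and $\chi(0)=1$. This follows from additivity of $\tr=\tr_{q/p}$.
  \item Vanishing sum: $\sum_{i\in\Fq}\chi(i)=0$. The trace is a surjective $\mathbb F_p$-linear map onto $\mathbb F_p$, so each value in $\mathbb F_p$ is hit exactly $q/p$ times. The sum therefore equals $(q/p)\sum_{t\in\mathbb F_p}\omega^t=0$.
  \item Orthogonality: $\sum_i\chi(ci)=q\,\delta_{c,0}$. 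This is the previous fact applied to the character $i\mapsto\chi(ci)$, which is nontrivial whenever $c\neq 0$ because the trace form is nondegenerate.
\end{enumerate}

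With these facts, each rule is a short calculation.
\begin{itemize}
  \item \ZOneCopy. Pushing the Fourier state through the co-addition (transpose of addition) yields $\sum_{a,b}\chi(a+b)|ab\rangle$, or the analogous expression for copy, depending on the rule. By multiplicativity this factors as $\big(\sum_a\chi(a)|a\rangle\big)\otimes\big(\sum_b\chi(b)|b\rangle\big)$, which is the tensor product of two Fourier states.
  \item \ZOneDel. Plugging the state into the delete effect gives $\sum_i\chi(i)=0$. This must match the semantics of the right-hand side, which I expect to be a zero-scalar diagram such as one built from the $\zeroInK$-type generators, possibly with $q^{-1/2}$ factors.
  \item \ZOneTrace. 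Plugging the state into the zero effect $\langle 0|$ gives $\chi(0)=1$, the empty diagram. If the rule instead involves a Fourier state meeting its dagger or a scaled copy, I will use orthogonality to evaluate the contraction to $q\delta$. Any leftover power of $q$ will be matched against the explicit $q^{-1/2}$ scalars in the rule, using $\semZOne$ of the scalar generator.
\end{itemize}

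I expect the main obstacle to be bookkeeping rather than mathematics. One difficulty is reading off exactly which composite each pictured side denotes, in particular the direction conventions. The diagrams are drawn in the Lawvere direction while the cograph and graph embeddings swap legs, so I must confirm whether a grey node is addition or its transpose. The other difficulty is tracking the normalisation powers of $q$. To handle this, I will first compute the span of the $\LangSpq$ part of each side via $\sem_q$, for example the span $\Fq^2\leftarrow\{(a,b)\}\to\Fq$ with $(a,b)\mapsto a+b$. I will then convert it to a counting matrix using $M_q$, and only afterwards contract with the new generators.
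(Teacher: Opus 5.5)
Your strategy of evaluating both sides of each rule under $\semZOne$, with the $\LangSpq$ generators read as counting matrices, is the paper's. Your copy-rule computation is exactly the paper's: apply co-addition $\sum_{a,b}|a\rangle|b\rangle\langle a+b|$ to the Fourier state, then factor using $\omega^{\tr(a+b)}=\omega^{\tr(a)}\omega^{\tr(b)}$.

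The problem is the delete rule. You take it to be the Fourier state plugged into the white delete effect, with a zero scalar on the right. In \eqref{eq:gag-one-copy-delete} the Fourier state is instead plugged into the zero effect (the transpose of the zero generator), whose semantics is $\langle 0|$. The right-hand side is the empty diagram, with semantics $1$. The check is therefore just $\langle 0|\sum_i\omega^{\tr(i)}|i\rangle=\omega^{\tr(0)}=1$. This is precisely the computation you filed under a hypothetical trace rule, so the needed verification is in your proposal, but attached to the wrong rule. The check you attach to the delete rule gives $\sum_i\omega^{\tr(i)}=0$, which against the actual empty-diagram right-hand side would read $0\neq 1$.

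You could have pinned this down from the paper itself. The proof of Proposition~\ref{lem:pulling-z1-out} uses the delete rule backwards to introduce a Fourier state into a diagram containing none. That forces the right-hand side to have semantics $1$; a zero-scalar right-hand side would annihilate the diagram. There is no separate trace rule in \eqref{eq:gag-one-copy-delete}, so your vanishing-sum and orthogonality facts are not needed. Only additivity of the trace and $\tr(0)=0$ are used.

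Finally, your hedge ``or the analogous expression for copy'' in the copy case would not work. The white copy sends the Fourier state to $\sum_i\omega^{\tr(i)}|i\rangle|i\rangle$, which does not factor as a tensor product. Only the grey co-addition copies Fourier states.
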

By working in the quotient $\GagOne/_{\eqref{eq:gag-one-copy-delete}}$,
we can reduce any diagram in $\GagOne/_{\eqref{eq:gag-one-copy-delete}}$ to a diagram in $\LangSpq$ and a single $$, up to a finite scalar.
\begin{restatable}{prop}{pullingZOneOut}\label{lem:pulling-z1-out}
    For any diagram $D$ in $\GagOne/_{\eqref{eq:gag-one-copy-delete}}$, there exists $k\in \mathbb{N}$ and a diagram $D'$ containing only the generators of $\LangSpq$ such that
    \begin{equation}\label{fig:gag-one-rules}
        \input{./figures/gag-pulling-z1-out.tikz}
    \end{equation}
\end{restatable}

\begin{proof}
    The idea is to pull all the occurrences of $$ to the bottom of the diagram, and then apply rule $\ZOneCopy$ repeatedly to merge all the occurrences of $$ into one. The detailed proof is in Appendix~\ref{append:zh}.
\end{proof}

This shows that just a single copy of the state in the Fourier basis is sufficient to extend the expressive power of finite-field GAG to cover the entirety of the Galois-Qudit ZH Calculus.
Using this, we can compute amplitudes in the ZH calculus using finite-field GAG.
Computing an amplitude in the ZH calculus corresponds to evaluating the semantics of a scalar diagram.
We have the following result.
\begin{restatable}{theorem}{zhAmplitudeViaGAG}
    Let $D$ be a scalar diagram in the ZH calculus over $\Fq$.
    The scalar $\ZHsem[D]$ can be computed using exactly $q$ queries to an oracle $\mathcal{O}$ that evaluates the semantics of scalar diagrams in $\LangSpq$.
\end{restatable}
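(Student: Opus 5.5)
The plan is to reduce the ZH amplitude to a Fourier-weighted sum of $q$ GAG amplitudes, using the translation $\iota_{\ZH}$ and Proposition~\ref{lem:pulling-z1-out}.

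\textbf{Step 1 (translate into $\GagOne$).} Given a scalar ZH diagram $D$, I form $\iota_{\ZH}(D)$. By Proposition~\ref{prop:iota-zh-sound}, $\ZHsem[D]=\semZOne[\iota_{\ZH}(D)]$. Since the rules \eqref{eq:gag-one-copy-delete} are sound (Proposition~\ref{prop:copy-delete-z1-sound}), $\semZOne$ factors through the quotient $\GagOne/_{\eqref{eq:gag-one-copy-delete}}$. So I may work with $\iota_{\ZH}(D)$ as a diagram there.

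\textbf{Step 2 (isolate the Fourier state).} I apply Proposition~\ref{lem:pulling-z1-out}. This gives some $k$ and a diagram $D'$ built only from generators of $\LangSpq$ with
\[
  \semZOne[\iota_{\ZH}(D)] = q^{-k/2}\,\semZOne[D'\circ \mathrm{F}],
\]
where $\mathrm{F}$ is the single Fourier-basis state. Because $D$ is a scalar, $D'$ is an effect from one wire to zero wires. By the commuting square defining $\semZOne$, $\semZOne[D']=\semqMat[D']$ is a row vector $(c_i)_{i\in\Fq}$ with entries in $\mathbb N$. Expanding $\semZOne[\mathrm F]=\sum_{i\in\Fq}\omega^{\tr(i)}|i\rangle$ by linearity then gives
\[
  \ZHsem[D]=q^{-k/2}\sum_{i\in\Fq}\omega^{\tr(i)}\,c_i .
\]

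\textbf{Step 3 (each $c_i$ is one oracle query).} For each $i\in\Fq$, I build the state $|i\rangle$ inside $\LangSpq$. I take the constant one of $\mathbb L_{CAlg_k}$ followed by scaling by $i$; this is the constant polynomial $i$. Under the cograph/graph embedding its span semantic is $\{\ast\}\to\Fq$ with image $i$. Hence $M_q$ sends it to the standard basis vector $e_i$, with no extra scalar. Then $D'\circ|i\rangle$ is a closed $\LangSpq$ diagram with $\semqMat[D'\circ|i\rangle]=c_i$, so one oracle call returns $c_i$. Making exactly $q$ queries, one per $i\in\Fq$, and combining the answers classically with the known weights $\omega^{\tr(i)}$ and the prefactor $q^{-k/2}$ yields $\ZHsem[D]$.

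\textbf{Main obstacle.} The technical content is Proposition~\ref{lem:pulling-z1-out}, which I take as given. The step needing care is the bookkeeping of scalars. I must check two things. First, the basis state built in $\LangSpq$ has matrix semantic exactly $e_i$; it must not be the ZH basis generator, which carries a factor $\sqrt q$. Second, the exponent $k$ (and the $q^{-1/2}$ generators absorbed into it) is determined syntactically by the rewriting, so it is known without an oracle call. If a negative power of $q^{1/2}$ appears, the same argument applies with $q^{k/2}$, $k\in\mathbb Z$.
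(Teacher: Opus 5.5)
Your proposal is correct and follows the paper's proof: translate $D$ via $\iota_{\ZH}$, use Proposition~\ref{lem:pulling-z1-out} to isolate a single Fourier state, expand that state in the computational basis, and evaluate each of the $q$ resulting closed $\LangSpq$ diagrams with one oracle call. Your explicit weights $\omega^{\tr(i)}$, and your construction of $|i\rangle$ as the constant-one generator followed by scaling by $i$, make precise what the paper's displayed computation leaves inside its figures.
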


\section{Conclusion and Future Directions} 

We have introduced a family of diagrammatic languages, which we refer to collectively as graphical algebraic geometry (GAG). This family of languages extends the existing literature on graphical linear algebra (GLA) by introducing the nonlinear multiplication operator. We construct three examples of languages within this family, and show that they are sound, universal, and complete, for their respective semantics in (co)spans of commutative algebras and affine varieties. 

We showed that closed diagrams in GAG correspond exactly to counting constraint satisfaction problems, thereby showing that rewriting GAG diagrams into each other is \#P-hard. We show also that GAG forms the nonlinear classical ``backbone'' of the Galois-qudit ZH calculus. Indeed, we show that all quantum processes in the qudit ZH calculus can be represented by consuming a single Fourier basis state, up to a finite number of scalars, meaning that a GAG oracle would be able to simulate the ZH processes with a constant number of queries. These connections serve to illustrate the potential versatility of GAG. 

There are several directions of future work. On the theoretic side, there is an open question of constructing graphical algebraic geometry over $\mathbb R$. Indeed, for a real-closed field, one would be able to express inequality constraints of the form $g(x) \geq 0$ using the generators of GAG; but finding a complete set of rewrite rules for such languages may prove difficult, and will involve the Positivstellensatze. Such a language would have exciting applications in the verification of hybrid systems, where the safety space is often given as a semialgebraic set in $\mathbb R^n$, and dynamics are given by polynomial mappings \cite{prajna_safety_2004}. On the practical side, there are two obvious directions. The first is to explore the connection between GAG and CSP: in particular, are there existing computational techniques that can optimize GAG rewriting, or bound the complexity of rewriting between diagrams of a certain class? The second is to use the characterization of ZH as an extension of GAG to ``grok'' quantum algebraic geometry codes, in the same way graphical linear algebra has been used to ``grok'' CSS codes \cite{kissinger_phase-free_2022}.

\section*{Acknowledgements} 

We would like to thank Soichiro Fujii for pointing out initially the possibility of building the graphical language separately for the algebraic setting and for the geometric setting. We would like to thank Lia Yeh, Cole Comfort, Robert Booth, Nathan Corbyn, and Owen Lynch for many helpful discussions. DG thanks Simon Harrison for his generous support for the Wolfson Harrison UK Research Council Quantum Foundation Scholarship. RS is supported by the Clarendon Fund Scholarship. This work is additionally supported by the Engineering and Physical Sciences Research Council grant number EP/Z002230/1, ``(De)constructing quantum software (DeQS)''. 

\bibliographystyle{ACM-Reference-Format}
\bibliography{references}

\appendix

\section{Derived Rules for Scalable Notation in $\mathbb L_{CAlg_k}$} 
\label{append:scalable-rules}

\scalableRewrites* 

\begin{proof}
    We begin by proving that rewrite rule \polycp is derivable in $\mathbb L_{CAlg_k}$, that is,
    \begin{equation}
        \input{./figures/p-copy-lhs.tikz} = \input{./figures/p-copy-rhs.tikz}
    \end{equation}

    Since, by theorem \ref{prop:forward-fragment-universal-complete}, the polynomials are in correspondence with diagrams in $\mathbb L_{CAlg_k}$, we may prove this proposition by structural induction on diagrams.

    First we check that the equation holds the generators. The equation is immediate if $\begin{tikzpicture}
	\begin{pgfonlayer}{nodelayer}
		\node [style=rmat] (0) at (0, 0) {$f$};
		\node [style=none] (1) at (-1, 0) {};
		\node [style=none] (2) at (1, 0) {};
	\end{pgfonlayer}
	\begin{pgfonlayer}{edgelayer}
		\draw (1.center) to (2.center);
	\end{pgfonlayer}
\end{tikzpicture}
 = $. 
    
    If $ = \input{./figures/copy.tikz}$, then we have: 
    \begin{equation}
    \begin{split}
        LHS &= \input{./figures/copy-copies-1.tikz} \\
        &\overset{\zassoc}{=} \input{./figures/copy-copies-2.tikz} \\
        &\overset{\zassoc}{=} \input{./figures/copy-copies-3.tikz}  \\
        &\overset{\zsymm}{=} \input{./figures/copy-copies-4.tikz} \\
        &\overset{\zassoc}{=} \input{./figures/copy-copies-5.tikz} = RHS
    \end{split}
    \end{equation}
    If $ = $, then we have (recalling that the spider with legs labelled by $0$ is the empty diagram): 
    \begin{equation}
    \begin{split}
        LHS =  \overset{\zunit}{=} \begin{tikzpicture}
	\begin{pgfonlayer}{nodelayer}
		\node [style=Z dot] (0) at (0, 0) {};
		\node [style=Z dot] (1) at (1, 0.5) {};
		\node [style=Z dot] (2) at (1, -0.5) {};
		\node [style=none] (3) at (-1, 0) {};
	\end{pgfonlayer}
	\begin{pgfonlayer}{edgelayer}
		\draw (3.center) to (0);
		\draw [bend left=15] (0) to (1);
		\draw [bend right=15] (0) to (2);
	\end{pgfonlayer}
\end{tikzpicture}
 = RHS
    \end{split}
    \end{equation}
    If $ = \input{./figures/add.tikz}$, then we have: 
    \begin{equation}
        \begin{split}
            LHS = \input{./figures/add-copy.tikz} \overset{\addcp}{=} \input{./figures/copy-add.tikz} = RHS
        \end{split}
    \end{equation}
    If $ = $, then we have: 
    \begin{equation}
        LHS = \input{./figures/zero-copy.tikz} \overset{\zerocp}{=} \begin{tikzpicture}
	\begin{pgfonlayer}{nodelayer}
		\node [style=X dot] (0) at (-0.5, 0.5) {};
		\node [style=X dot] (1) at (-0.5, -0.5) {};
		\node [style=none] (2) at (0.5, 0.5) {};
		\node [style=none] (3) at (0.5, -0.5) {};
	\end{pgfonlayer}
	\begin{pgfonlayer}{edgelayer}
		\draw (0) to (2.center);
		\draw (1) to (3.center);
	\end{pgfonlayer}
\end{tikzpicture}
 = RHS
    \end{equation}
    If $ = \input{./figures/mult.tikz}$, then we have: 
    \begin{equation}
        LHS = \input{./figures/mult-copy.tikz} \overset{\multcp}{=} \input{./figures/copy-mult.tikz} = RHS
    \end{equation}
    If $ = $, then we have: 
    \begin{equation}
        LHS = \input{./figures/one-copy.tikz} \overset{\onecp}{=} \begin{tikzpicture}
	\begin{pgfonlayer}{nodelayer}
		\node [style=multop] (0) at (-0.5, 0.5) {};
		\node [style=multop] (1) at (-0.5, -0.5) {};
		\node [style=none] (2) at (0.5, 0.5) {};
		\node [style=none] (3) at (0.5, -0.5) {};
	\end{pgfonlayer}
	\begin{pgfonlayer}{edgelayer}
		\draw (0) to (2.center);
		\draw (1) to (3.center);
	\end{pgfonlayer}
\end{tikzpicture}
 = RHS
    \end{equation}
    If $ = $, then we have: 
    \begin{equation}
        LHS = \input{./figures/k-copy-lhs.tikz} \overset{\kcopy}{=} \input{./figures/k-copy-rhs.tikz} = RHS
    \end{equation}
    Thus the proposition holds for all the generators of $\mathbb L_{CAlg_k}$. Now for the inductive step: suppose the proposition holds for $$ and $\begin{tikzpicture}
	\begin{pgfonlayer}{nodelayer}
		\node [style=rmat] (0) at (0, 0) {$g$};
		\node [style=none] (1) at (-1, 0) {};
		\node [style=none] (2) at (1, 0) {};
		\node [style=none] (3) at (-1, 0.5) {$n'$};
		\node [style=none] (4) at (1, 0.5) {$m'$};
	\end{pgfonlayer}
	\begin{pgfonlayer}{edgelayer}
		\draw (1.center) to (2.center);
	\end{pgfonlayer}
\end{tikzpicture}
$. Then, in the case $m = n'$, consider the diagram $\begin{tikzpicture}
	\begin{pgfonlayer}{nodelayer}
		\node [style=rmat] (0) at (-1, 0) {$f$};
		\node [style=rmat] (1) at (1, 0) {$g$};
		\node [style=none] (2) at (-2, 0) {};
		\node [style=none] (3) at (2, 0) {};
	\end{pgfonlayer}
	\begin{pgfonlayer}{edgelayer}
		\draw (2.center) to (3.center);
	\end{pgfonlayer}
\end{tikzpicture}
$. we have 
    \begin{equation}
        \begin{split}
            LHS &= \input{./figures/poly-comp-copy-1.tikz} \\
            &\overset{\text{I.H.}}{=} \input{./figures/poly-comp-copy-2.tikz} \\
            &\overset{\text{I.H.}}{=} \input{./figures/poly-comp-copy-3.tikz} = RHS
        \end{split}
    \end{equation}
    Finally, for the diagram $\input{./figures/poly-tensor.tikz}$ we have 
    \begin{equation}
        LHS = \input{./figures/poly-tensor-copy-1.tikz} \overset{\text{I.H.}}{=} \input{./figures/poly-tensor-copy-2.tikz} = RHS
    \end{equation}
    Here, I.H. denotes the inductive hypothesis. This completes the induction. 

    We next prove that rule \polydel is derivable in $\mathbb L_{CAlg_k}$, that is,
    \begin{equation}
        \begin{tikzpicture}
	\begin{pgfonlayer}{nodelayer}
		\node [style=rmat] (0) at (0, 0) {$f$};
		\node [style=Z dot] (1) at (1, 0) {};
		\node [style=none] (2) at (-1, 0) {};
	\end{pgfonlayer}
	\begin{pgfonlayer}{edgelayer}
		\draw (2.center) to (1);
	\end{pgfonlayer}
\end{tikzpicture}
 = 
    \end{equation}

    We again use structural induction on the diagrams of $\mathbb L_{CAlg_k}$. The equation is tautological if either $ = $ or $ = $ (recalling that the spider with legs labelled by $0$ is the empty diagram). 

    If $ = \input{./figures/copy.tikz}$, we have 
    \begin{equation}
        LHS =  \overset{\zunit}{=}  = RHS
    \end{equation}
    If $ = \input{./figures/add.tikz}$ or $$ or $\input{./figures/mult.tikz}$ or $$ or $$, the desired proposition is just a straightforward application of axioms \adddel, \zerodel, \multdel, \onedel, \kdel respectively. So the proposition holds for all generators of $\mathbb L_{CAlg_k}$. Now for the inductive step: suppose the proposition holds for $$ and $$. Then, in the case $m = n'$, consider the diagram $$. we have 
    \begin{equation}
        LHS = \begin{tikzpicture}
	\begin{pgfonlayer}{nodelayer}
		\node [style=rmat] (0) at (-0.75, 0) {$f$};
		\node [style=rmat] (1) at (0.5, 0) {$g$};
		\node [style=Z dot] (2) at (1.5, 0) {};
		\node [style=none] (3) at (-1.5, 0) {};
	\end{pgfonlayer}
	\begin{pgfonlayer}{edgelayer}
		\draw (3.center) to (2);
	\end{pgfonlayer}
\end{tikzpicture}
 \overset{\text{I.H.}}{=} \begin{tikzpicture}
	\begin{pgfonlayer}{nodelayer}
		\node [style=rmat] (0) at (-0.5, 0) {$f$};
		\node [style=Z dot] (2) at (0.5, 0) {};
		\node [style=none] (3) at (-1.25, 0) {};
	\end{pgfonlayer}
	\begin{pgfonlayer}{edgelayer}
		\draw (3.center) to (2);
	\end{pgfonlayer}
\end{tikzpicture}
 \overset{\text{I.H.}}{=}  = RHS
    \end{equation}
    Finally, for the diagram $\input{./figures/poly-tensor.tikz}$ we have 
    \begin{equation}
        LHS = \input{./figures/poly-tensor-delete-1.tikz} \overset{\text{I.H.}}{=} \begin{tikzpicture}
	\begin{pgfonlayer}{nodelayer}
		\node [style=Z dot] (0) at (0.5, 0.5) {};
		\node [style=Z dot] (1) at (0.5, -0.5) {};
		\node [style=none] (2) at (-0.5, 0.5) {};
		\node [style=none] (3) at (-0.5, -0.5) {};
	\end{pgfonlayer}
	\begin{pgfonlayer}{edgelayer}
		\draw (2.center) to (0);
		\draw (3.center) to (1);
	\end{pgfonlayer}
\end{tikzpicture}
 = RHS
    \end{equation}
    This completes the induction. 

    Finally, the rules \polyadd, \polymult, and \polycomp are direct consequences of the fact that $\mathbb L_{CAlg_k}$ is the Lawvere theory of commutative algebras over $k$. 
\end{proof}

\section{Proofs about $\LangCsp$}
\label{append:csp-proofs}

\subsection{Proofs about Soundness}
\cspsound*

\begin{proof}
    First to prove \kinv: for $l \in k^{\times}$,
    \begin{equation}
        \begin{split}
            \left[\begin{tikzpicture}
	\begin{pgfonlayer}{nodelayer}
		\node [style=kscalar] (0) at (0, 0) {$l$};
		\node [style=none] (1) at (-1.5, 0) {};
		\node [style=none] (2) at (1.5, 0) {};
	\end{pgfonlayer}
	\begin{pgfonlayer}{edgelayer}
		\draw (1.center) to (2.center);
	\end{pgfonlayer}
\end{tikzpicture}
\right] &= \left(k[x] = k[x] \xleftarrow{y \mapsto lx} k[y]\right) \\
            &= \left(k[x] \xrightarrow{x \mapsto l^{-1}y} k[y] = k[y]\right) \\
            &= \left[\begin{tikzpicture}
	\begin{pgfonlayer}{nodelayer}
		\node [style=kscalarop] (0) at (0, 0) {$l^{-1}$};
		\node [style=none] (1) at (-1.5, 0) {};
		\node [style=none] (2) at (1.5, 0) {};
	\end{pgfonlayer}
	\begin{pgfonlayer}{edgelayer}
		\draw (1.center) to (2.center);
	\end{pgfonlayer}
\end{tikzpicture}
\right]
        \end{split}
    \end{equation}
    where the second equality is the span isomorphism
    % https://q.uiver.app/#q=WzAsNCxbMSwwLCJrW3hdIl0sWzAsMSwia1t4XSJdLFsyLDEsImtbeV0iXSxbMSwyLCJrW3ldIl0sWzAsMSwiIiwwLHsic3R5bGUiOnsiaGVhZCI6eyJuYW1lIjoibm9uZSJ9fX1dLFsxLDMsImxeey0xfXkiLDJdLFszLDIsIiIsMCx7InN0eWxlIjp7ImhlYWQiOnsibmFtZSI6Im5vbmUifX19XSxbMiwwLCJseCIsMl0sWzAsMywibF57LTF9eSIsMV1d
    \[\begin{tikzcd}
        & {k[x]} \\
        {k[x]} && {k[y]} \\
        & {k[y]}
        \arrow[equals, from=1-2, to=2-1]
        \arrow["{l^{-1}y}"{description}, from=1-2, to=3-2]
        \arrow["{l^{-1}y}"', from=2-1, to=3-2]
        \arrow["lx"', from=2-3, to=1-2]
        \arrow[equals, from=3-2, to=2-3]
    \end{tikzcd}\]
    Next to prove \xbone:
    \begin{equation}
        \begin{split}
            \left[\begin{tikzpicture}
	\begin{pgfonlayer}{nodelayer}
		\node [style=X dot] (0) at (-0.5, 0) {};
		\node [style=X dot] (1) at (0.5, 0) {};
	\end{pgfonlayer}
	\begin{pgfonlayer}{edgelayer}
		\draw (0) to (1);
	\end{pgfonlayer}
\end{tikzpicture}
\right] &= \left(k = k \xleftarrow{x\mapsto 0} k[x] \xrightarrow{x\mapsto 0} k = k\right) \\ &= \left(k = k = k\right) = \left[\begin{tikzpicture}
	\begin{pgfonlayer}{nodelayer}
		\node [style=empty diagram] (0) at (0, 0) {};
	\end{pgfonlayer}
\end{tikzpicture}
\right]
        \end{split}
    \end{equation}
    here the second equality is the pushout
    % https://q.uiver.app/#q=WzAsNCxbMSwwLCJrIl0sWzAsMSwiayJdLFsyLDEsImsiXSxbMSwyLCJrW3hdIl0sWzMsMSwieFxcbWFwc3RvIDAiXSxbMCwxLCIiLDAseyJzdHlsZSI6eyJoZWFkIjp7Im5hbWUiOiJub25lIn19fV0sWzAsMiwiIiwyLHsic3R5bGUiOnsiaGVhZCI6eyJuYW1lIjoibm9uZSJ9fX1dLFswLDMsIiIsMSx7InN0eWxlIjp7Im5hbWUiOiJjb3JuZXIifX1dLFszLDIsInhcXG1hcHN0byAwIiwyXV0=
    \[\begin{tikzcd}
        & k \\
        k && k \\
        & {k[x]}
        \arrow[equals, from=1-2, to=2-1]
        \arrow[equals, from=1-2, to=2-3]
        \arrow["\lrcorner"{anchor=center, pos=0.125, rotate=-45}, draw=none, from=1-2, to=3-2]
        \arrow["{x\mapsto 0}", from=3-2, to=2-1]
        \arrow["{x\mapsto 0}"', from=3-2, to=2-3]
    \end{tikzcd}\]

    Next to prove\zfrob:
    \begin{equation}
        \begin{split}
            &\left[\input{./figures/z-frob-lhs.tikz}\right] \\
            &= \left(k[x_1, x_2] \xleftarrow{\substack{x_1 \leftmapsto y_1 \\ x_1 \leftmapsto y_2 \\ x_2 \leftmapsto y_3}} k[y_1, y_2, y_3] \xrightarrow{\substack{y_1 \mapsto z_1 \\ y_2 \mapsto z_2 \\ y_3 \mapsto z_2}} k[z_1, z_2] \right) \\
        &= \left(k[x_1, x_2] \rightarrow \frac{k[x_1, x_2, z_1, z_2]}{(x_1 - z_1, x_1 - z_2, x_2 - z_2)} \leftarrow k[z_1, z_2]\right)\\
        &= \left(k[x_1, x_2] \xrightarrow{\substack{x_1 \mapsto y \\ x_2 \mapsto y}} k[y] \xleftarrow{\substack{y \leftmapsto z_1 \\ y \leftmapsto z_2}} k[z_1, z_2]\right) = \left[\input{./figures/z-frob-mid.tikz}\right] \\
        &= \left(k[x_1, x_2] \xleftarrow{\substack{x_1 \leftmapsto y_1 \\ x_2 \leftmapsto y_2 \\ x_2 \leftmapsto y_3}} k[y_1, y_2, y_3] \xrightarrow{\substack{y_1 \mapsto z_1 \\ y_2 \mapsto z_1 \\ y_3 \mapsto z_2}} k[z_1, z_2]\right) \\
        &= \left[\input{./figures/z-frob-rhs.tikz}\right]\\
        \end{split}
    \end{equation}

    Next for the soundness of \xfrob: first, by a pushout we obtain that
    \begin{equation}
        \label{eq:xfrob-soundness-1}
        \begin{split}
            &\left[\input{./figures/x-frob-mid.tikz}\right] \\
            &= \left(k[x_1, x_2] \xleftarrow{x_1 + x_2 \leftmapsto y} k[y] \xrightarrow{y \mapsto z_1 + z_2} k[z_1, z_2]\right) \\
            &= \left(k[x_1, x_2] \rightarrow \frac{k[x_1, x_2, z_1, z_2]}{(x_1 + x_2 - z_1 - z_2)} \leftarrow k[z_1, z_2]\right)
        \end{split}
    \end{equation}
    Now consider the apex of the cospan in the final line of \ref{eq:xfrob-soundness-1}. We can parametrize it in two ways:
    % https://q.uiver.app/#q=WzAsNSxbMSwwLCJrW3VfMSwgdV8yLCB1XzNdIl0sWzEsMSwiXFxmcmFje2tbeF8xLCB4XzIsIHpfMSwgel8yXX17KHhfMSArIHhfMiAtIHpfMSAtIHpfMil9Il0sWzAsMSwia1t4XzEsIHhfMl0iXSxbMiwxLCJrW3pfMSwgel8yXSJdLFsxLDIsImtbdl8xLCB2XzIsIHZfM10iXSxbMSwwLCJcXGFscGhhIl0sWzEsNCwiXFxiZXRhIiwyXSxbMiwwLCJcXHN1YnN0YWNre3hfMSBcXG1hcHN0byB1XzEgKyB1XzIgXFxcXCB4XzIgXFxtYXBzdG8gdV8zfSJdLFsyLDFdLFsyLDQsIlxcc3Vic3RhY2t7eF8xIFxcbWFwc3RvIHZfMSBcXFxcIHhfMiBcXG1hcHN0byB2XzIgKyB2XzN9IiwyXSxbMywxXSxbMywwLCJcXHN1YnN0YWNre3pfMSBcXG1hcHN0byB1XzEgXFxcXCB6XzIgXFxtYXBzdG8gdV8yICsgdV8zfSIsMl0sWzMsNCwiXFxzdWJzdGFja3t6XzEgXFxtYXBzdG8gdl8xICsgdl8yIFxcXFwgel8yIFxcbWFwc3RvIHZfM30iXV0=
    \[\begin{tikzcd}
        & {k[u_1, u_2, u_3]} \\
        {k[x_1, x_2]} & {\frac{k[x_1, x_2, z_1, z_2]}{(x_1 + x_2 - z_1 - z_2)}} & {k[z_1, z_2]} \\
        & {k[v_1, v_2, v_3]}
        \arrow["\begin{array}{c} \substack{x_1 \mapsto u_1 + u_2 \\ x_2 \mapsto u_3} \end{array}", from=2-1, to=1-2]
        \arrow[from=2-1, to=2-2]
        \arrow["\begin{array}{c} \substack{x_1 \mapsto v_1 \\ x_2 \mapsto v_2 + v_3} \end{array}"', from=2-1, to=3-2]
        \arrow["\alpha", from=2-2, to=1-2]
        \arrow["\beta"', from=2-2, to=3-2]
        \arrow["\begin{array}{c} \substack{z_1 \mapsto u_1 \\ z_2 \mapsto u_2 + u_3} \end{array}"', from=2-3, to=1-2]
        \arrow[from=2-3, to=2-2]
        \arrow["\begin{array}{c} \substack{z_1 \mapsto v_1 + v_2 \\ z_2 \mapsto v_3} \end{array}", from=2-3, to=3-2]
    \end{tikzcd}\]
    where $\alpha, \beta$ are the obvious maps that make the above diagram commute: 
    \begin{equation}
        \alpha:
        \begin{split}
            &x_1 \mapsto u_1 + u_2 \\
            &x_2 \mapsto u_3 \\
            &z_1 \mapsto u_1 \\
            &z_2 \mapsto u_2 + u_3
        \end{split}
        \qquad
        \beta:
        \begin{split}
            &x_1 \mapsto v_1 \\
            &x_2 \mapsto v_2 + v_3 \\
            &z_1 \mapsto v_1 + v_2 \\
            &z_2 \mapsto v_3
        \end{split}
    \end{equation}
    Which are both isomorphisms of algebras. This gives:
    \begin{equation}
        \begin{split}
            &\left[\input{./figures/x-frob-mid.tikz}\right] \\
            &= \left(k[x_1, x_2] \rightarrow \frac{k[x_1, x_2, z_1, z_2]}{(x_1 + x_2 - z_1 - z_2)} \leftarrow k[z_1, z_2]\right) \\
            &= \left(k[x_1, x_2] \xrightarrow{\substack{x_1 \mapsto u_1 + u_2 \\ x_2 \mapsto u_3}} k[u_1, u_2, u_3] \xleftarrow{\substack{u_1 \leftmapsto z_1 \\ u_2 + u_3 \leftmapsto z_2}} k[z_1, z_2] \right) \\
            &= \left[\input{./figures/x-frob-rhs.tikz}\right] \\
            &= \left(k[x_1, x_2] \xrightarrow{\substack{x_1 \mapsto v_1 \\ x_2 \mapsto v_2 + v_3}} k[v_1, v_2, v_3] \xleftarrow{\substack{v_1 + v_2 \leftmapsto z_1 \\ v_3 \leftmapsto z_2}} k[z_1, z_2]\right) \\
            &= \left[\input{./figures/x-frob-lhs.tikz}\right]
        \end{split}
    \end{equation}

    Now for the soundness of \cccup:
    \begin{equation}
        \begin{split}
            &\left[\input{./figures/x-cup.tikz}\right] \\
            &= \left(k \xleftarrow{0 \leftmapsto x} k[x] \xrightarrow{x \mapsto y_1 - y_2} k[y_1, y_2]\right) \\
            &= \left(k \rightarrow \frac{k[y_1, y_2]}{(y_1 - y_2)} \leftarrow k[y_1, y_2]\right)\\
            &= \left(k \rightarrow k[x] \xleftarrow{\substack{x \leftmapsto y_1\\ x \leftmapsto y_2}} k[y_1, y_2]\right) \\
            &= \left[\input{./figures/z-cup.tikz}\right]
        \end{split}
    \end{equation}

    The proof for the soundness of \cccap is almost identical:
    \begin{equation}
        \begin{split}
            &\left[\input{./figures/x-cap.tikz}\right] \\
            &= \left(k[x_1, x_2] \xleftarrow{x_1 - x_2 \leftmapsto y} k[y] \xrightarrow{y \mapsto 0} k\right) \\
            &=\left(k[x_1, x_2] \rightarrow \frac{k[x_1, x_2]}{(x_1 - x_2)} \leftarrow k\right) \\
            &= \left(k[x_1, x_2] \xrightarrow{\substack{x_1 \mapsto y \\ x_2 \mapsto y}} k[y] \leftarrow k\right) = \left[\input{./figures/z-cap.tikz}\right]
        \end{split}
    \end{equation}

    Now for the soundness of \ktrp:
    \begin{equation}
        \begin{split}
            &\left[\input{./figures/k-trans.tikz}\right] \\
            &= \left(k[x_2] \rightarrow k\left[\substack{x_1\\x_2}\right] \xleftarrow{\substack{x_1 \leftmapsto y_1 \\ \kappa x_1 \leftmapsto y_2 \\ x_2 \leftmapsto y_3}} k\left[\substack{y_1\\y_2\\y_3}\right] \xrightarrow{\substack{y_1 \mapsto z_1 \\ y_2 \mapsto z_2 \\ y_3 \mapsto z_2}} k\left[\substack{z_1\\z_2}\right] \leftarrow k[z_1]\right) \\
            &= \left(k[x_2] \rightarrow \frac{k[x_1, x_2, z_1, z_2]}{(x_1 - z_1, \kappa x_1 - z_2, x_2 - z_2)} \leftarrow k[z_1] \right) \\
            &= \left(k[x_2] \xrightarrow{x_2 \mapsto \kappa z_1} k[z_1] \leftarrow k[z_1]\right) = \left[\begin{tikzpicture}
	\begin{pgfonlayer}{nodelayer}
		\node [style=kscalarop] (0) at (0, 0) {$\kappa$};
		\node [style=none] (1) at (-1.5, 0) {};
		\node [style=none] (2) at (1.5, 0) {};
	\end{pgfonlayer}
	\begin{pgfonlayer}{edgelayer}
		\draw (1.center) to (2.center);
	\end{pgfonlayer}
\end{tikzpicture}
\right]
        \end{split}
    \end{equation}

    For the soundness of \mtrp:
    \begin{equation}
        \begin{split}
            &\left[\input{./figures/mult-trans.tikz}\right] \\
            &= \left(k[x_3] \rightarrow k\left[\substack{x_1 \\x_2 \\x_3}\right] \xleftarrow{\substack{x_1 \leftmapsto y_1 \\ x_1 x_2 \leftmapsto y_2 \\ x_3 \leftmapsto y_3 \\ x_2 \leftmapsto y_4}} k \left[\substack{y_1 \\ y_2 \\y_3 \\ y_4}\right] \xrightarrow{\substack{y_1 \mapsto z_1 \\ y_2 \mapsto z_2 \\ y_3 \mapsto z_2 \\ y_4 \mapsto z_3}} k\left[\substack{z_1 \\ z_2 \\ z_3}\right] \leftarrow k\left[\substack{z_1 \\ z_3}\right]\right) \\
            &= \left(k[x_3] \rightarrow \frac{k[x_1, x_2, x_3, z_1, z_2, z_3]}{(x_1 - z_1, x_1 x_2 - z_2, x_3 - z_2, x_2 - z_3)} \leftarrow k\left[\substack{z_1 \\ z_3}\right]\right) \\
            &= \left(k[x_3] \xrightarrow{x_3 \mapsto z_1 z_3} k[z_1, z_3] \leftarrow k[z_1, z_3]\right) \\
            &= \left[\input{./figures/mult-op.tikz}\right]
        \end{split}
    \end{equation}

    For the soundness of \onetrp:
    \begin{equation}
        \begin{split}
            &\left[\input{./figures/one-trans.tikz}\right] \\
            &= \left(k[x] \xleftarrow{\substack{x \leftmapsto y_1 \\ 1 \leftmapsto y_2}} k[y_1, y_2] \xrightarrow{\substack{y_1 \mapsto z \\ y_2 \mapsto z}} k[z] \leftarrow k\right) \\
            &= \left(k[x] \rightarrow \frac{k[x, z]}{(x - z, 1 - z)} \leftarrow k\right) \\
            &= \left(k[x] \xrightarrow{x \mapsto 1} k \leftarrow k\right) = \left[\begin{tikzpicture}
	\begin{pgfonlayer}{nodelayer}
		\node [style=mult] (0) at (0.5, 0) {};
		\node [style=none] (1) at (-0.5, 0) {};
	\end{pgfonlayer}
	\begin{pgfonlayer}{edgelayer}
		\draw (1.center) to (0);
	\end{pgfonlayer}
\end{tikzpicture}
\right]
        \end{split}
    \end{equation}

    % And finally, for the soundness of \ref{ax:reduced}:
    % \begin{equation}
    % \begin{split}
    %     \left[\tikzfig{nilpotent-lhs}\right] &= \left(k = k \overset{x^2}{\longrightarrow} k \overset{0}{\longleftarrow} *= *\right)\\
    %     &= \left(k \overset{x}{\longleftarrow} \{x\in k:x^2 = 0\} \overset{!}{\longrightarrow} *\right) \\
    %     &= \left(k \overset{0}{\longleftarrow} * = *\right) = \left[\tikzfig{zero-op}\right]
    % \end{split}
    % \end{equation}
    This completes the proof that every rewrite rule contained in Figure~\ref{fig:rules-for-csp} is sound with respect to the semantics functor $\sem: \LangCsp \to \FrCsp$.
\end{proof}

\subsection{Proofs of Derived Rules}

\derivedRulesOfCsp*

\begin{proof}
    First we show how to derive \ref{eq:z-special}. The reasoning in this proof is almost identical to the reasoning in the proofs contained in Appendix A.2.2 of \cite{zanasi_interacting_2018}. We first make the following derivation: for any $\kappa \in k$,
    \begin{equation}
        \label{eq:knot-yank}
        \begin{split}
            &\input{./figures/knot-yank-lhs.tikz} \overset{\ktrp}{=} \input{./figures/knot-yank-1.tikz} \\
            &\overset{\substack{\zfrob\\ \zunit}}{=} \input{./figures/knot-yank-2.tikz} \overset{\cccup}{=} \input{./figures/knot-yank-3.tikz} \\
            &\overset{\kassoc}{=} \input{./figures/knot-yank-4.tikz} \overset{\xfrob}{=} \input{./figures/knot-yank-rhs.tikz}
        \end{split}
    \end{equation}
    and with this we can make the desired derivation:
    \begin{equation}
        \begin{split}
            & \overset{\xbone}{=} \begin{tikzpicture}
	\begin{pgfonlayer}{nodelayer}
		\node [style=none] (0) at (-1, 0.5) {};
		\node [style=none] (1) at (1, 0.5) {};
		\node [style=X dot] (2) at (-0.5, -0.5) {};
		\node [style=X dot] (3) at (0.5, -0.5) {};
	\end{pgfonlayer}
	\begin{pgfonlayer}{edgelayer}
		\draw (0.center) to (1.center);
		\draw (2) to (3);
	\end{pgfonlayer}
\end{tikzpicture}
 \overset{\zunit}{=} \input{./figures/z-special-2.tikz} \\
            &\overset{\zeroInK}{=} \input{./figures/z-special-3.tikz} \overset{\kadd}{=} \input{./figures/z-special-4.tikz} \\
            &\overset{\zassoc}{=} \input{./figures/z-special-5.tikz} \\
            &\overset{\ref{eq:knot-yank}}{=} \input{./figures/z-special-6.tikz} \\
            &\overset{\substack{\zassoc \\ \xassoc}}{=} \input{./figures/z-special-7.tikz} \\
            &\overset{\substack{\kadd\\ \zeroInK}}{=} \input{./figures/z-special-8.tikz} \\
            &\overset{\substack{\zunit \\ \xunit}}{=} \input{./figures/z-special-lhs.tikz}
        \end{split}
    \end{equation}
    as required.

    Next we show how to derive \ref{eq:poly-trans} by structural induction on $\mathbb L_{CAlg_k}$. First we show that the proposition holds for $ = \begin{tikzpicture}
	\begin{pgfonlayer}{nodelayer}
		\node [style=none] (0) at (-1, -0.25) {};
		\node [style=none] (1) at (1, -0.25) {};
		\node [style=none] (2) at (-1, 0.25) {$n$};
	\end{pgfonlayer}
	\begin{pgfonlayer}{edgelayer}
		\draw (0.center) to (1.center);
	\end{pgfonlayer}
\end{tikzpicture}
$. When $n = 1$ this is easily derived as:
    \begin{equation}
        \label{eq:id-yank}
        LHS = \input{./figures/id-yank-lhs.tikz} \overset{\zfrob}{=} \input{./figures/id-yank-1.tikz} \overset{\zunit}{=}  = RHS
    \end{equation}
    whereas for $n > 1$ we can induct on $n$:
    \begin{equation}
        \label{eq:n-id-yank}
        \begin{split}
            &LHS = \input{./figures/n-id-yank-lhs.tikz} \overset{\ref{fig:scalable-notation}}{=} \input{./figures/n-id-yank-1.tikz} \\
            &= \input{./figures/n-id-yank-2.tikz} \\
            &\overset{\text{I.H.}}{=} \input{./figures/n-id-yank-3.tikz} \\
            &= \input{./figures/n-id-yank-4.tikz} \\
            &\overset{\ref{fig:scalable-notation}}{=}  = RHS
        \end{split}
    \end{equation}
    where the equality labelled ``I.H.'' is an application of the inductive hypothesis.

    We then check that the proposition holds for $f$ ranging over every generator of $\mathbb L_{CAlg_k}$.

    If $ = \input{./figures/copy.tikz}$, then
    \begin{equation}
        \begin{split}
            &LHS = \input{./figures/copy-trans-lhs.tikz} \overset{\substack{\zfrob \\ \zunit}}{=} \input{./figures/copy-trans-1.tikz} \\
            &\overset{\substack{\zfrob \\ \zunit}}{=} \input{./figures/copy-trans-2.tikz} \overset{\substack{\zfrob \\ \zunit}}{=} \input{./figures/copy-trans-3.tikz} \\
            &\overset{\zsymm^\dag}{=} \input{./figures/copy-op.tikz} = RHS
        \end{split}
    \end{equation}
    If $ = \input{./figures/add.tikz}$, the derivation is similar:
    \begin{equation}
        \begin{split}
            &LHS = \input{./figures/add-trans-lhs.tikz} \overset{\substack{\cccup \\ \cccap \\ \kinv}}{=} \input{./figures/add-trans-1.tikz} \\
            &\overset{\substack{\xfrob \\ \xunit}}{=} \input{./figures/add-trans-2.tikz} \overset{\xsymm}{=} \input{./figures/add-trans-3.tikz} \\
            &\overset{\kadd}{=} \input{./figures/add-trans-4.tikz} \overset{\kassoc}{=} \input{./figures/add-op.tikz}
        \end{split}
    \end{equation}
    For $ = $ or $$, the proposition is a straightforward application of their respective (co)unitality laws \zunit and \xunit. Finally, for $ = \input{./figures/mult.tikz}$, $$, or $$, the proposition is verbatim identical to the laws \mtrp, \onetrp, or \ktrp respectively.

    Now for the inductive step: suppose the proposition holds for $$ and $$. Then, in the case $m = n'$, consider the diagram $$. we have
    \begin{equation}
        \begin{split}
            &LHS = \input{./figures/poly-comp-trans-lhs.tikz} \\
            &\overset{\ref{eq:n-id-yank}}{=} \input{./figures/poly-comp-trans-1.tikz} \overset{I.H.}{=} \begin{tikzpicture}
	\begin{pgfonlayer}{nodelayer}
		\node [style=lmat] (0) at (1, 0) {$f$};
		\node [style=lmat] (1) at (-1, 0) {$g$};
		\node [style=none] (2) at (2, 0) {};
		\node [style=none] (3) at (-2, 0) {};
	\end{pgfonlayer}
	\begin{pgfonlayer}{edgelayer}
		\draw (2.center) to (3.center);
	\end{pgfonlayer}
\end{tikzpicture}
 = RHS
        \end{split}
    \end{equation}
    And for the diagram $\input{./figures/poly-tensor.tikz}$, we have
    \begin{equation}
        \begin{split}
            &LHS = \input{./figures/poly-tensor-trans-lhs.tikz} = \input{./figures/poly-tensor-trans-1.tikz}  \\
            &\overset{I.H.}{=} \input{./figures/poly-tensor-trans-2.tikz} = \input{./figures/poly-tensor-op.tikz} = RHS
        \end{split}
    \end{equation}
    Where the second equality follows from the naturality of the braidings, and the fourth equality follows from both the naturality and the involutivity of the braidings. This completes the structural induction and proof.
\end{proof}

\subsection{Proofs about Ideal Gadgets}
\label{append:ideal-lemmas}

In this appendix, we prove that the following lemmas hold in $\LangCsp$.

\principleidealflex*

\begin{proof}
    \begin{align*}
        LHS &= \input{./figures/polyzero-going-left.tikz} \\
        &\overset{\ref{eq:n-id-yank}}{=} \input{./figures/ideal-flex-1.tikz} \\
        &\overset{\ref{eq:poly-trans}, \ref{eq:poly-trans}^\dag}{=} \input{./figures/ideal-flex-2.tikz} \\
        &\overset{\zfrob}{=} \input{./figures/polyzero-going-right.tikz} = RHS
    \end{align*}
\end{proof}

\principleidealscommute*

\begin{proof}
    \begin{equation}
        \begin{split}
            &LHS = \input{./figures/polyzero-commute-lhs.tikz} \overset{\zassoc}{=} \input{./figures/polyzero-commute-1.tikz} \\
            &\overset{\zsymm}{=} \input{./figures/polyzero-commute-2.tikz} \overset{S.M.C.}{=} \input{./figures/polyzero-commute-rhs.tikz} = RHS.
        \end{split}
    \end{equation}
    Here the equality labelled S.M.C. follows from the naturality of the braidings in a symmetric monoidal category.
\end{proof}

\idealbasis*

\begin{proof}
    First, by the completeness of $\mathbb L_{CAlg_k}$, we know that the following equation is derivable:
    \begin{equation}
        \label{eq:mult-by-zero}
        \input{./figures/mult-by-zero-lhs.tikz} = \begin{tikzpicture}
	\begin{pgfonlayer}{nodelayer}
		\node [style=kscalar] (0) at (0, 0) {$0$};
		\node [style=none] (1) at (-1, 0) {};
		\node [style=none] (2) at (1, 0) {};
	\end{pgfonlayer}
	\begin{pgfonlayer}{edgelayer}
		\draw (1.center) to (2.center);
	\end{pgfonlayer}
\end{tikzpicture}
 = \begin{tikzpicture}
	\begin{pgfonlayer}{nodelayer}
		\node [style=Z dot] (0) at (-0.5, 0) {};
		\node [style=X dot] (1) at (0.5, 0) {};
		\node [style=none] (2) at (-1.5, 0) {};
		\node [style=none] (3) at (1.5, 0) {};
	\end{pgfonlayer}
	\begin{pgfonlayer}{edgelayer}
		\draw (2.center) to (0);
		\draw (1) to (3.center);
	\end{pgfonlayer}
\end{tikzpicture}

    \end{equation}
    Note, the second equality here is verbatim rule \zeroInK; while the first equality can be easily derived by substituting
    \begin{align*}
         \overset{\onedel}{=} \begin{tikzpicture}
	\begin{pgfonlayer}{nodelayer}
		\node [style=multop] (0) at (-0.5, 0) {};
		\node [style=Z dot] (1) at (0.5, 0) {};
	\end{pgfonlayer}
	\begin{pgfonlayer}{edgelayer}
		\draw (0) to (1);
	\end{pgfonlayer}
\end{tikzpicture}
 \,\,  \overset{\zeroInK}{=} \begin{tikzpicture}
	\begin{pgfonlayer}{nodelayer}
		\node [style=multop] (0) at (-1, 0) {};
		\node [style=kscalar] (1) at (0, 0) {$0$};
		\node [style=none] (2) at (1, 0) {};
	\end{pgfonlayer}
	\begin{pgfonlayer}{edgelayer}
		\draw (0) to (2.center);
	\end{pgfonlayer}
\end{tikzpicture}

    \end{align*}

    Now, since $f \in (f_1, \dots, f_m)$, there exist polynomials $g_1, \dots, g_m \in k[x_1, \dots, x_n]$ such that $f = g_1 f_1 + \dots + g_m f_m$. Then by the completeness of $\mathbb L_{CAlg_k}$, we know that the following is derivable by its rewrite rules:
    \begin{equation}
        \label{eq:ideal-element}
         = \input{./figures/ideal-element-constr.tikz}
    \end{equation}
    Thus we can rewrite the right-hand-side of \eqref{eq:ideal-basis-with-redundancy} as follows:
    \begin{multline}
            RHS = \input{./figures/polyzero-ideal-basis-with-redundancy.tikz} \\
            \overset{\ref{eq:ideal-element}}{=} \input{./figures/polyzero-ideal-basis-1.tikz} \\
            \overset{\zfrob}{=} \input{./figures/polyzero-ideal-basis-2.tikz} \\
            \overset{\polycp}{=} \input{./figures/polyzero-ideal-basis-3.tikz} \\
            \overset{\zxbialg}{=} \input{./figures/polyzero-ideal-basis-4.tikz} \\
    %     \end{split}
    % \end{equation}
    % This allows us to apply equation \ref{eq:mult-by-zero} to obtain
    % \begin{equation}
    %     \begin{split}
            \overset{\ref{eq:mult-by-zero}}{=} \input{./figures/polyzero-ideal-basis-5.tikz} \\
        \overset{\polydel}{=} \input{./figures/polyzero-ideal-basis-6.tikz} \\
        \overset{\zunit}{=} \input{./figures/polyzero-ideal-basis-7.tikz}
        \overset{\xunit, \xbone}{=} \input{./figures/polyzero-ideal-basis-8.tikz}  \\
        \overset{\zassoc}{=} \input{./figures/polyzero-ideal-basis.tikz} = LHS
    \end{multline}
\end{proof}

% \begin{prop}
%     (Lemma \ref{lem:diagram-reduced-ring}) Let $I \subset k[x_1, \dots, x_n]$ be an ideal, and let $f \in (k[x_1, \dots, x_n])^m$ be a polynomial such that $f \in (\sqrt{I})^m$. Then
%     $$\tikzfig{ring-quot-gadget} = \tikzfig{ring-quot-gadget-with-radical-redundancy}$$
% \end{prop}

% \begin{proof}
%     Since $f \in (\sqrt{I})^m$, for each component $f_j$, there exists some $g_j \in I$ and some $n_j \in \mathbb N$ such that $g_j^{n_j} = f_j$. Thus by the completeness of $\mathbb L_{CAlg_k}$, the following is derivable:
%     \begin{equation}
%         \label{eq:fj-is-gj-pow}
%         \tikzfig{fj-mat} = \tikzfig{gj-pow}
%     \end{equation}
%     So we have
%     \begin{align*}
%         RHS &= \tikzfig{ring-quot-gadget-with-radical-redundancy}
%         \overset{\zassoc}{=} \tikzfig{reduced-1}
%         \overset{\ref{eq:fj-is-gj-pow}}{=} \tikzfig{reduced-2} \\
%         &\overset{\ref{ax:reduced}}{=} \tikzfig{reduced-3}
%         \overset{\ref{lem:ideal-basis}}{=} \tikzfig{ring-quot-gadget} = LHS
%     \end{align*}
% \end{proof}

\quotientgadgets*

\begin{proof}
    We first show this for the case where $I$ is a principal ideal: say $I = (f)$. We view the gadget as a composition of two diagrams like so:
    \begin{equation}
        \input{./figures/polyzero.tikz} = \input{./figures/polyzero-as-spans.tikz}
    \end{equation}
    Note that the left-half is a diagram in $\mathbb L_{CAlg_k}$, while the right-half is a diagram in $\mathbb L_{CAlg_k}^{op}$. So
    \begin{multline}
        \label{deriv:semantic-of-principal-ideal}
        \left[\input{./figures/polyzero.tikz}\right] \\
        = \left(k[x_1, \dots, x_n] \xleftarrow{\substack{f(\mathbf x) \leftmapsto y_0 \\ x_j \leftmapsto y_j}} k[y_0, \dots, y_n] \xrightarrow{\substack{y_0 \mapsto 0 \\ y_j \mapsto z_j}} k[z_1, \dots, z_n]\right) \\
        = \left(k[x_1, \dots, x_n] \rightarrow \frac{k[x_1, \dots, x_n, z_1, \dots, z_n]}{(f(\mathbf x), x_j - z_j)} \leftarrow k[z_1, \dots, z_n]\right) \\
        = \left(k[x_1, \dots, x_n] \rightarrow \frac{k[x_1, \dots, x_n]}{(f(\mathbf x))} \leftarrow k[x_1, \dots, x_n]\right)
    \end{multline}
    where, in the derivation \ref{deriv:semantic-of-principal-ideal}, the index $j$ is always quantified universally over $1, \dots, n$.

    Now we perform the inductive step: suppose that $I, J \subset k[x_1, \dots, x_n]$ are ideals for which the proposition holds. Then the composition
    $\input{./figures/ring-quot-gadget-comp.tikz}$
    is the gadget corresponding to the ideal generated by the union of the elements in $I$ and $J$ (since we have taken a union of the generators). In other words it is the gadget corresponding to $I+J$. We compute:
    \begin{equation}
        \begin{split}
            &\left[\input{./figures/ring-quot-gadget-comp.tikz}\right] \\
        &= \left[\begin{tikzpicture}
	\begin{pgfonlayer}{nodelayer}
		\node [style=Z dot] (1) at (0, -0.5) {};
		\node [style=none] (2) at (1, -0.5) {};
		\node [style=none] (3) at (-1, -0.5) {};
		\node [style=idealket] (5) at (0, 0.5) {$J$};
	\end{pgfonlayer}
	\begin{pgfonlayer}{edgelayer}
		\draw (3.center) to (2.center);
		\draw (5) to (1);
	\end{pgfonlayer}
\end{tikzpicture}
\right] \circ \left[\right] \\
        &= \left(k\left[\substack{x_1 \\ \vdots \\ x_n}\right] \twoheadrightarrow k\left[\substack{x_1 \\ \vdots \\ x_n}\right]/I \twoheadleftarrow k\left[\substack{x_1 \\ \vdots \\ x_n}\right] \twoheadrightarrow k\left[\substack{x_1 \\ \vdots \\ x_n}\right]/J \twoheadleftarrow k\left[\substack{x_1 \\ \vdots \\ x_n}\right]\right) \\
        &= \left(k\left[\substack{x_1 \\ \vdots \\ x_n}\right] \twoheadrightarrow \frac{k[x_1, \dots, x_n]}{I + J} \twoheadleftarrow k\left[\substack{x_1 \\ \vdots \\ x_n}\right]\right)
        \end{split}
    \end{equation}
    Recall that, since $k[x_1, \dots, x_n]$ is a Noetherian ring, every ideal is a finite sum of principal ideals. This completes the induction and the proof.
\end{proof}

\subsection{Proofs about Universality and Completeness}
\cspUniversal*

\begin{proof}
    Recall from proposition \ref{prop:quotient-gadget} that the semantic of the quotient gadget in the middle of the diagram on the left-hand-side is
    \begin{equation}
        \left[\right] = \left(S_k(n) \twoheadrightarrow S_k(n)/I \twoheadleftarrow S_k(n)\right)
    \end{equation}
    Thus we have
    \begin{equation}
        \begin{split}
            &\left[\input{./figures/span-form.tikz}\right] \\
            &= \left(S_k(m) \xrightarrow{f^\#}S_k(n) \twoheadrightarrow S_k(n)/I \twoheadleftarrow S_k(n) \xleftarrow{g^\#} S_k(m')\right) \\
            &= \left(S_k(m) \xrightarrow{f^\#_I} A  \xleftarrow{g^\#_I} S_k(m')\right)
        \end{split}
    \end{equation}
    as required.
\end{proof}

\cospanForm*

\begin{proof}
    We prove by structural induction. First we show that every generator of $\LangCsp$ can be rewritten into cospan form. By $\dag$-symmetry, we only need to show this for the generators coming from $\mathbb L_{CAlg_k}$.

    But this is immediate: for any $ \in \mathbb L_{CAlg_k}$, we may rewrite it as
    \begin{equation}
        \label{eq:graphical-graph-embedding}
         \overset{\zunit, \xbone}{=} \input{./figures/f-mat-span-constr.tikz} \overset{\ref{def:ring-quot-gadget}}{=} \input{./figures/f-mat-span.tikz}
    \end{equation}

    Thus all the generators of $\LangCsp$ can be rewritten into cospan form. It remains now to show that compositions and monoidal products of cospan form diagrams can be rewritten into cospan form.

    Take any pair of composable diagrams in cospan form. Their composition can be rewritten thus:
    \begin{equation}
        \begin{split}
            &\input{./figures/span-compose-1.tikz}  \\
        &\overset{\zfusion}{=} \input{./figures/span-compose-2.tikz} \\
        &\overset{\cccap, \ref{eq:poly-trans}}{=} \input{./figures/span-compose-3.tikz} \\
        &\overset{\text{S.M.C.}}{=} \input{./figures/span-compose-4.tikz}  \\
        &\overset{\ref{def:ring-quot-gadget}}= \input{./figures/span-compose-5.tikz}
        \end{split}
    \end{equation}
    Where the equality labelled ``S.M.C.'' is an application of the coherence axioms of symmetric monoidal categories. Note that this corresponds precisely to the pushout operation taken in the composition of cospans.

    Finally, take any pair of arbitrary diagrams in cospan form. Their monoidal product can be rewritten as:
    \begin{align*}
        \input{./figures/span-tensor-1.tikz} \overset{S.M.C.}{=} \input{./figures/span-tensor-2.tikz}
    \end{align*}
    which is again in cospan form. Thus by induction all diagrams in $\LangCsp$ can be rewritten into cospan form.
\end{proof}

\diagramSheafRing*

\begin{proof}
    Suppose $h = f - g \in I^m$. By Proposition~\ref{prop:forward-fragment-universal-complete}
    \begin{equation}
     = \input{./figures/g-plus-h-mat.tikz}
    \end{equation}
    is a rewrite in $\mathbb L_{CAlg_k}$. So we compute:
    \begin{equation}
        \begin{split}
        &\input{./figures/ring-on-variety-lhs.tikz} = \input{./figures/ring-on-variety-1.tikz} \\
        &\overset{\ref{lem:ideal-basis}}{=} \input{./figures/ring-on-variety-2.tikz}=
        \overset{\zassoc}{=} \input{./figures/ring-on-variety-3.tikz} \\
        &\overset{\polycp}{=} \input{./figures/ring-on-variety-4.tikz}
        \overset{\zerocp}{=} \input{./figures/ring-on-variety-5.tikz} \\
        &\overset{\xunit}{=} \input{./figures/ring-on-variety-6.tikz}
        \overset{\ref{lem:ideal-basis}}{=} \input{./figures/ring-on-variety-rhs.tikz}
        \end{split}
    \end{equation}
\end{proof}

\diagramRespectsKernel*

\begin{proof}
    Suppose $g_1, \dots, g_m$ is a basis $J = (g_1, \dots, g_m)$. Then for each $g_j$, we have $f^\#_I(g_j) = 0$. Using the definition of the $\#$, we have in other words $(g_j \circ f)^\#_I = 0$. So $g_j \circ f \in I$. Thus by lemma \ref{lem:diagram-sheaf-ring}, we have a rewrite
    \begin{equation}
        \label{eq:regular-map-comp-nil}
        \input{./figures/regular-map-comp-nil-lhs.tikz} = \input{./figures/regular-map-comp-nil-rhs.tikz}
    \end{equation}
    So we can derive:
    \begin{equation}
        \begin{split}
            &\input{./figures/regular-map.tikz} \\
            &\overset{\ref{def:ring-quot-gadget}}= \input{./figures/regular-map-1.tikz} \\
        &\overset{\polycp}= \input{./figures/regular-map-2.tikz} \\
        &\overset{\zfrob, \ref{cor:idem}}= \input{./figures/regular-map-3.tikz} \\
        &\overset{\ref{eq:regular-map-comp-nil}}= \input{./figures/regular-map-4.tikz} \\
        &\overset{\zfrob, \ref{cor:idem}}= \input{./figures/regular-map-5.tikz} \\
        &\overset{\zunit, \xbone} = \input{./figures/regular-map-without-im.tikz}
        \end{split}
    \end{equation}
\end{proof}

\isoInverseIsOp*

\begin{proof}
    We know that
    \begin{equation}
        \begin{split}
            \pi_I \circ \sigma^\# &= \alpha \circ \pi_J \\
            \pi_J \circ \tau^\# &= \alpha^{-1} \circ \pi_I
        \end{split}
    \end{equation}
    from which we can easily derive that
    \begin{equation}
        \begin{split}
            \pi_I \circ \sigma^\# \circ \tau^\# &= \pi_I \\
            \pi_J \circ \tau^\# \circ \sigma^\# &= \pi_J
        \end{split}
    \end{equation}
    which is to say,
    \begin{equation}
        \begin{split}
            (\tau \circ \sigma)^\#_I &= (id)^\#_I \\
            (\sigma \circ \tau)^\#_J &= (id)^\#_J
        \end{split}
    \end{equation}
    So by lemma \ref{lem:diagram-sheaf-ring}, the following are derivable in $\LangCsp$:
    \begin{equation}
        \label{eq:regular-map-comp-id}
        \begin{split}
            &\input{./figures/regular-map-comp-id-forward.tikz} =  \\
            &\input{./figures/regular-map-comp-id-backward.tikz} = 
        \end{split}
    \end{equation}
    Now we can show that $\input{./figures/sigma-op.tikz}$ is the left inverse of $\sigma$:
    \begin{align}
        \label{eq:trans-is-left-inv}
        \begin{split}
        & \overset{\zunit, \polydel}= \input{./figures/inverse-1.tikz} \\
        &\overset{\ref{eq:z-special}}= \input{./figures/inverse-2.tikz}
        \overset{\polycp}= \input{./figures/inverse-3.tikz}  \\
        &\overset{\zfrob, \ref{eq:poly-trans}}= \input{./figures/inverse-4.tikz}
        \overset{\ref{cor:idem}, \zfrob}= \input{./figures/inverse-5.tikz} \\
        &\overset{\ref{eq:regular-map-comp-id}}= \input{./figures/inverse-6.tikz} \\
        &\overset{\zfrob, \polycp}= \input{./figures/inverse-7.tikz} \\
        &\overset{\zfrob, \ref{eq:poly-trans}}= \input{./figures/inverse-8.tikz} \\
        &\overset{\polycp}= \input{./figures/inverse-9.tikz} \\
        &\overset{\ref{eq:z-special}}= \input{./figures/inverse-10.tikz} \\
        &\overset{\polydel, \zunit}= \input{./figures/inverse-11.tikz}
        \end{split}
    \end{align}
    And consequently we have the rewrite we need:
    \begin{align*}
        &\input{./figures/sigma-inverse.tikz} \overset{\ref{lem:diagram-respects-kernel}}= \input{./figures/sigma-inverse-with-im.tikz} \\
        &\overset{\ref{eq:trans-is-left-inv}}= \input{./figures/sigma-inverse-with-pair-1.tikz} \\
        &\overset{\ref{lem:diagram-respects-kernel}}= \input{./figures/sigma-inverse-with-pair-2.tikz} \\
        &\overset{\ref{eq:regular-map-comp-id}}= \input{./figures/sigma-inverse-with-pair-3.tikz} \\
        &\overset{\ref{lem:diagram-respects-kernel}^{op}}= \input{./figures/sigma-op.tikz}
    \end{align*}
\end{proof}

\conjByIso*

\begin{proof}
    \begin{equation}
        \begin{split}
            \input{./figures/ring-quot-conj-sigma.tikz} &\overset{\ref{lem:iso-inverse-is-op}}{=} \input{./figures/ring-quot-conj-1.tikz} \\
            &\overset{\ref{eq:regular-map-comp-id}}= 
        \end{split}
    \end{equation}
\end{proof}

\cspComplete*

\begin{proof}
    This is in other words to say that our rewrite rules are complete. By proposition \ref{prop:cospan-form} any diagram in $\LangCsp$ can be rewritten into cospan form. So it suffices to prove that, whenever two diagrams in cospan form are mapped to isomorphic cospans by $\sem$, then they are rewritable into each other. Pick any two diagrams in cospan form
    $$\input{./figures/span-form.tikz} \text{ and } \input{./figures/span-form-alt.tikz}$$
    with the same wire types on both ends. Suppose there exists a cospan isomorphism
    % https://q.uiver.app/#q=WzAsNCxbMSwwLCJWKEkpIl0sWzAsMSwia15uIl0sWzIsMSwia15tIl0sWzEsMiwiVihKKSJdLFswLDEsImYiLDJdLFswLDIsImciXSxbMywxLCJmJyJdLFszLDIsImcnIiwyXSxbMCwzLCJcXHNpZ21hIiwxXV0=
    \begin{equation}
        \label{eq:csp-iso}
        \begin{tikzcd}
            & {S_k(u)/I} \\
            {k^n} && {k^m} \\
            & {S_k(v)/J}
            \arrow["f^\#_I"', from=2-1, to=1-2]
            \arrow["g^\#_I", from=2-3, to=1-2]
            \arrow["\alpha"{description}, from=3-2, to=1-2]
            \arrow["{f'^\#_J}", from=2-1, to=3-2]
            \arrow["{g'^\#_J}"', from=2-3, to=3-2]
        \end{tikzcd}
    \end{equation}
    between their images under the semantics functor.

    Since $\alpha: S_k(v)/J \to S_k(u)/I$ is an isomorphism, corollary \ref{cor:conj-by-iso} gives a rewrite
    $$ = \input{./figures/ring-quot-conj-sigma.tikz}$$
    where $\sigma$ is some tuple of polynomials such that $\sigma^\#_I = \alpha \circ \pi_J$ (say, pick $\sigma = (\alpha \circ \pi_J)^\flat_I$). Moreover, since diagram \ref{eq:csp-iso} commutes,
    \begin{equation}
        f^\#_I = \alpha \circ f'^\#_J = \alpha \circ \pi_J \circ f'^\# = \sigma^\#_I \circ f'^\# = (f' \circ \sigma)^\#_I
    \end{equation}
    and similarly we may derive $g^\#_I = (g' \circ \sigma)^\#_I$.
    So we have the following rewrites:
    \begin{equation}
        \begin{split}
            \input{./figures/span-form-alt.tikz} &\overset{\ref{cor:conj-by-iso}}= \input{./figures/completeness-1.tikz} \\&\overset{\ref{lem:diagram-sheaf-ring}}= \input{./figures/span-form.tikz}
        \end{split}
    \end{equation}
\end{proof}

\section{Proofs about $\LangSpk$ and $\LangSpq$} 
\label{sec:append-sp} 

\subsection{Proofs about $\LangSpk$} 

\reductionSound* 

\begin{proof}
    We compute directly: 
    \begin{equation}
        \begin{split}
            \Spec &\left[\input{./figures/nilpotent-lhs.tikz}\right]_{red} = \Spec \left(k[x] \xleftarrow{x^2} k[y] \xrightarrow{0} k\right)_{red} \\ 
            &\overset{pushout}{=} \Spec \left(k[x] \rightarrow k[x]/(x^2) \leftarrow k\right)_{red} \\
            &\overset{reduce}{=} \Spec\left(k[x] \xrightarrow{0} k = k\right) \\ 
            &= \Spec \left[\right]_{red}
        \end{split}
    \end{equation}    
    Thus rule \ref{ax:reduced} is sound with respect to the map $\Spec \sem_{red}$, and so the map factors through $\LangSpk$. 
\end{proof}

\diagramReduction* 

\begin{proof}
    Since $\sqrt{I}$ will have a finite basis, it suffices to prove that the following rewrite exists in $\LangSpk$ 
    \begin{equation}
         = \input{./figures/ring-quot-gadget-with-radical-redundancy.tikz}
    \end{equation}
    for any $f \in \sqrt{I}$.  

    Since $f \in \sqrt{I}$, there exists some $g \in I$ and some $n \in \mathbb N$ such that $g^{n} = f$. Thus by the completeness of $\mathbb L_{CAlg_k}$ alone, the following is derivable: 
    \begin{equation}
        \label{eq:f-is-g-pow}
         = \input{./figures/g-pow.tikz}
    \end{equation}
    So we have 
    \begin{align*}
        RHS &= \input{./figures/ring-quot-gadget-with-radical-redundancy.tikz} 
        \overset{\ref{eq:f-is-g-pow}}{=} \input{./figures/reduced-2.tikz} \\ 
        &\overset{\ref{ax:reduced}}{=} \input{./figures/reduced-3.tikz} 
        \overset{\ref{lem:ideal-basis}}{=}  = LHS
    \end{align*}
\end{proof}

\spkComplete*

\begin{proof}
    Pick any ideals $I \subset S_k(u)$ and $J \subset S_k(v)$, and any tuples of polynomials $f \in (k[x_1, \dots, x_n])^u$, $f' \in (k[x_1, \dots, x_n])^{v}$, $g \in (k[y_1, \dots, y_m])^{u}$, and $g' \in (k[y_1, \dots, y_m])^n$. Suppose that the following diagrams are sent to the same semantic by $\sem_k$: 
    \begin{equation}
        \left[\input{./figures/span-form.tikz}\right]_k = \left[\input{./figures/span-form-alt.tikz}\right]_k
    \end{equation}
    that is, there is a span isomorphism of varieties
    \begin{equation}
        \begin{split}
            \Spec &\left(S_k(n) \xrightarrow{f^\#_I} S_k(u)/I \xleftarrow{g^\#_I} S_k(m)\right)_{red} \\ 
            &= \Spec \left(S_k(n) \xrightarrow{f'^\#_J} S_k(v)/J \xleftarrow{g'^\#_J} S_k(m)\right)_{red}
        \end{split}
    \end{equation}
    Since $\Spec: \CAlgfgredk \to \Affk^{op}$ is an equivalence of categories (Proposition \ref{prop:alg-closed-anti-equiv}), this means 
    \begin{equation}
        \begin{split}
           \left(S_k(n) \xrightarrow{f^\#_I} S_k(u)/I \xleftarrow{g^\#_I} S_k(m)\right)_{red} \\ 
            = \left(S_k(n) \xrightarrow{f'^\#_J} S_k(v)/J \xleftarrow{g'^\#_J} S_k(m)\right)_{red}
        \end{split}
    \end{equation}
    which is to say, there is an isomorphism of cospans of algebras 
    \begin{equation}
        \begin{split}
           \left(S_k(n) \xrightarrow{f^\#_{\sqrt{I}}} S_k(u)/\sqrt{I} \xleftarrow{g^\#_{\sqrt{I}}} S_k(m)\right) \\ 
            = \left(S_k(n) \xrightarrow{f'^\#_{\sqrt{J}}} S_k(v)/\sqrt{J} \xleftarrow{g'^\#_{\sqrt{J}}} S_k(m)\right)
        \end{split}
    \end{equation}
    Thus, by the completeness of $\LangCsp$ for $\FrCsp$ (Proposition \ref{prop:csp-complete}), the following rewrite exists under just the rules of $\LangCsp$: 
    \begin{equation}
        \input{./figures/span-form-radical.tikz} = \input{./figures/span-form-radical-alt.tikz}
    \end{equation}
    Thus with lemma \ref{lem:diagram-reduction} we obtain the series of rewrites 
    \begin{equation}
        \begin{split}
            \input{./figures/span-form.tikz} &= \input{./figures/span-form-radical.tikz} \\ 
            &= \input{./figures/span-form-radical-alt.tikz} \\ 
            &= \input{./figures/span-form-alt.tikz}
        \end{split}
    \end{equation}
\end{proof}

\qredSound*

\begin{proof}
    We compute directly: 
    \begin{equation}
        \begin{split}
            &\Spec\left[\input{./figures/qred-lhs.tikz}\right]_{qred} \\ 
        &= \Spec\left(k[x] = k[x] \xleftarrow{x^q} k[y]\right)_{qred} \\ 
        &= \Spec\left(k[x]/(x^q - x) = k[x]/(x_q - x) \xleftarrow{x} k[y]/(y^q - y)\right) \\ 
        &= \Spec\left(k[x]/(x^q - x) = k[x]/(x_q - x) = k[x]/(x^q - x)\right) \\ 
        &= \Spec\left(k[x] = k[x] = k[x]\right)_{qred}\\
        &= \Spec\left[\right]_{qred}
        \end{split}
    \end{equation}
\end{proof}

\diagramQReduction* 

\begin{proof}
    Recall from \ref{rem:basis-of-qradical} that $\Gamma_q^n \subset k[x_1, \dots, x_n]$ is the ideal generated by $x_1^q - x_1, \dots, x_n^q - x_n$. Thus 
    \begin{equation}
        \begin{split}
             &= \input{./figures/qradical-gadget-constr.tikz} \\ 
            &\overset{\ref{ax:q-reduced}}{=} \input{./figures/qreduction-1.tikz} \\ 
            &\overset{\kadd}{=} \begin{tikzpicture}
	\begin{pgfonlayer}{nodelayer}
		\node [style=Z dot] (0) at (0, -1) {};
		\node [style=X dot] (4) at (0, 1) {};
		\node [style=none] (5) at (-2, -1) {};
		\node [style=none] (6) at (2, -1) {};
		\node [style=ukscalar] (7) at (0, 0) {$0$};
	\end{pgfonlayer}
	\begin{pgfonlayer}{edgelayer}
		\draw (5.center) to (6.center);
		\draw (0) to (4);
	\end{pgfonlayer}
\end{tikzpicture}
 \\ 
            &\overset{\zeroInK}{=} \input{./figures/qreduction-3.tikz} \\ 
            &\overset{\zunit, \xbone}{=} \ 
        \end{split}
    \end{equation}
\end{proof}

\spqComplete* 

\begin{proof}
    For ideals $I \subset S_k(u)$ and $J \subset S_k(v)$, and tuples of polynomials $f \in (k[x_1, \dots, x_n])^u$, $f' \in (k[x_1, \dots, x_n])^{v}$, $g \in (k[y_1, \dots, y_m])^{u}$, and $g' \in (k[y_1, \dots, y_m])^n$. Suppose that the following diagrams are sent to the same semantic by $\sem_q$: 
    \begin{equation}
        \left[\input{./figures/span-form.tikz}\right]_q = \left[\input{./figures/span-form-alt.tikz}\right]_q
    \end{equation}
    that is, there is a span isomorphism of varieties
    \begin{equation}
        \begin{split}
            \Spec &\left(S_{\Fq}(n) \xrightarrow{f^\#_I} S_{\Fq}(u)/I \xleftarrow{g^\#_I} S_{\Fq}(m)\right)_{qred} \\ 
            &= \Spec \left(S_{\Fq}(n) \xrightarrow{f'^\#_J} S_{\Fq}(v)/J \xleftarrow{g'^\#_J} S_{\Fq}(m)\right)_{qred}
        \end{split}
    \end{equation}
    Since $\Spec: \CAlgfgqredq \to {\Affqq}^{op}$ is an equivalence of categories (Proposition \ref{prop:finite-field-anti-equiv}), this means
    \begin{equation}
        \begin{split}
            \left(S_{\Fq}(n) \xrightarrow{f^\#_I} S_{\Fq}(u)/I \xleftarrow{g^\#_I} S_{\Fq}(m)\right)_{qred} \\
            = \left(S_{\Fq}(n) \xrightarrow{f'^\#_J} S_{\Fq}(v)/J \xleftarrow{g'^\#_J} S_{\Fq}(m)\right)_{qred}
        \end{split}
    \end{equation}
    which is to say, there is an isomorphism of cospans of algebras 
    \begin{equation}
        \begin{split}
            \left(\frac{S_{\Fq}(n)}{\Gamma_q^n} \xrightarrow{(f^\#_I)_{qred}} \frac{S_{\Fq}(u)}{I+\Gamma_q^u} \xleftarrow{(g^\#_I)_{qred}} \frac{S_{\Fq}(m)}{\Gamma_q^m}\right) \\ 
            = \left(\frac{S_{\Fq}(n)}{\Gamma_q^n} \xrightarrow{(f'^\#_J)_{qred}} \frac{S_{\Fq}(v)}{J+\Gamma_q^v} \xleftarrow{(g'^\#_J)_{qred}} \frac{S_{\Fq}(m)}{\Gamma_q^m}\right)
        \end{split}
    \end{equation}
    Composing both sides of the equation with projections maps we obtain isomorphisms of cospans 
    \begin{equation}
        \label{eq:qreduced-cospan-iso}
        \begin{split}
            \left(S_{\Fq}(n) \twoheadrightarrow \frac{S_{\Fq}(n)}{\Gamma_q^n} \xrightarrow{(f^\#_I)_{qred}} \frac{S_{\Fq}(u)}{I+\Gamma_q^u} \xleftarrow{(g^\#_I)_{qred}} \frac{S_{\Fq}(m)}{\Gamma_q^m} \twoheadleftarrow S_{\Fq}(m)\right) \\ 
            = \left(S_{\Fq}(n) \twoheadrightarrow \frac{S_{\Fq}(n)}{\Gamma_q^n} \xrightarrow{(f'^\#_J)_{qred}} \frac{S_{\Fq}(v)}{J+\Gamma_q^v} \xleftarrow{(g'^\#_J)_{qred}} \frac{S_{\Fq}(m)}{\Gamma_q^m} \twoheadleftarrow S_{\Fq}(m)\right)
        \end{split}
    \end{equation}
    Now recall, that for an algebra morphism $\varphi: A \to B$, the morphism $\varphi_{qred}: A/\Gamma_q(A) \to B/\Gamma_q(B)$ is the unique morphism making the following commute: 
    % https://q.uiver.app/#q=WzAsNCxbMCwwLCJBIl0sWzAsMSwiQS9cXEdhbW1hX3EoQSkiXSxbMSwwLCJCIl0sWzEsMSwiQi9cXEdhbW1hX3EoQikiXSxbMCwyLCJcXHZhcnBoaSJdLFsyLDMsIiIsMCx7InN0eWxlIjp7ImhlYWQiOnsibmFtZSI6ImVwaSJ9fX1dLFswLDEsIiIsMix7InN0eWxlIjp7ImhlYWQiOnsibmFtZSI6ImVwaSJ9fX1dLFsxLDMsIlxcdmFycGhpX3txcmVkfSIsMl1d
    \[\begin{tikzcd}
        A & B \\
        {A/\Gamma_q(A)} & {B/\Gamma_q(B)}
        \arrow["\varphi", from=1-1, to=1-2]
        \arrow[two heads, from=1-1, to=2-1]
        \arrow[two heads, from=1-2, to=2-2]
        \arrow["{\varphi_{qred}}"', from=2-1, to=2-2]
    \end{tikzcd}\]
    so in particular, equation \ref{eq:qreduced-cospan-iso} becomes 
    \begin{equation}
        \begin{split}
            \left(S_{\Fq}(n) \xrightarrow{f^\#_{I + \Gamma_q^u}} \frac{S_{\Fq}(u)}{I+\Gamma_q^u} \xleftarrow{g^\#_{I + \Gamma_q^u}} S_{\Fq}(m)\right) \\ 
            = \left(S_{\Fq}(n) \xrightarrow{f'^\#_{J + \Gamma_q^v}} \frac{S_{\Fq}(v)}{J+\Gamma_q^v} \xleftarrow{g'^\#_{J + \Gamma_q^v}} S_{\Fq}(m)\right)
        \end{split}
    \end{equation}
    Thus by the completeness of $\LangCsp$ for $\FrCsp$, the following rewrite can be derived from just the rewrite rules of $\LangCsp$: 
    \begin{equation}
        \input{./figures/span-form-qradical.tikz} = \input{./figures/span-form-qradical-alt.tikz}
    \end{equation}
    Thus, with lemma \ref{lem:diagram-qreduction}, we obtain the following series of rewrites: 
    \begin{equation}
        \begin{split}
            \input{./figures/span-form.tikz} &= \input{./figures/span-form-qradical.tikz} \\ 
            &= \input{./figures/span-form-qradical-alt.tikz} \\ 
            &= \input{./figures/span-form-alt.tikz}
        \end{split}
    \end{equation}
    Thus any two diagrams with the same semantic under $\sem_q$ are rewritable into each other with the rules of $\LangSpq$. So $\sem_q$ is faithful. 
\end{proof}

\subsection{Proofs about $\Mat_{\mathbb N}$ semantics for $\LangSpq$} 
\label{append:spq-matn}

\MSound* 

\begin{proof}
    First consider the identity span $\Fq^n \xleftarrow{id} \Fq^n \xrightarrow{id} \Fq^n$. For each pair $\mathbf x, \mathbf y \in \Fq^n$, it is clear that 
    \begin{equation}
        M_q\left(\xleftarrow{id} \xrightarrow{id}\right)_{\mathbf x \mathbf y} = \delta_{\mathbf x, \mathbf y}
    \end{equation}
    and so $M_q\left(\xleftarrow{id} \xrightarrow{id}\right)$ is the identity matrix. 

    Next suppose we have a composition of two structured spans of rational loci over $\Fq$: 
    % https://q.uiver.app/#q=WzAsNixbMiwwLCJWXFx1bmRlcnNldHtcXHBzaSwgXFxwaGknfSBcXHRpbWVzIFciXSxbMSwxLCJWIl0sWzMsMSwiVyJdLFswLDIsIlxcRnFebiJdLFsyLDIsIlxcRnFebSJdLFs0LDIsIlxcRnFee24nfSJdLFswLDEsIlxccGlfViIsMl0sWzAsMiwiXFxwaV9XIl0sWzEsMywiXFxwaGkiLDJdLFsxLDQsIlxccHNpIl0sWzIsNCwiXFxwaGknIiwyXSxbMiw1LCJcXHBzaSciXV0=
    \[\begin{tikzcd}[ampersand replacement=\&]
        \&\& {V\underset{\psi, \phi'} \times W} \\
        \& V \&\& W \\
        {\Fq^n} \&\& {\Fq^m} \&\& {\Fq^{n'}}
        \arrow["{\pi_V}"', from=1-3, to=2-2]
        \arrow["{\pi_W}", from=1-3, to=2-4]
        \arrow["\phi"', from=2-2, to=3-1]
        \arrow["\psi", from=2-2, to=3-3]
        \arrow["{\phi'}"', from=2-4, to=3-3]
        \arrow["{\psi'}", from=2-4, to=3-5]
    \end{tikzcd}\]
    The set of points in the pullback $V \underset{\psi, \psi'}\times W$ is the set of pairs $(\mathbf v, \mathbf w) \in V \times W$ such that $\psi \mathbf v = \psi' \mathbf w$. So we compute, for $\mathbf x \in \Fq^n, \mathbf z \in \Fq^{n'}$: 
    \begin{equation}
        \begin{split}
            &M \left(\xleftarrow{\phi}\xrightarrow{\psi}\xleftarrow{\phi'}\xrightarrow{\psi'}\right)_{\mathbf x \mathbf z} \\ 
            &= \left|\left\{(\mathbf v, \mathbf w) \in V \times W : \psi \mathbf v = \phi' \mathbf w, \phi \mathbf v = \mathbf x, \psi' \mathbf w = \mathbf z\right\}\right| \\ 
            &= \left|\bigcup_{\mathbf y \in \Fq^m}\left\{(\mathbf v, \mathbf w) \in V \times W : \substack{\phi \mathbf v = \mathbf x \\ \psi \mathbf v = \mathbf y}, \substack{\phi' \mathbf w = \mathbf y \\ \psi' \mathbf w = \mathbf z}\right\}\right|\\ 
            &= \sum_{\mathbf y \in \Fq^m} \left|\left\{\mathbf v \in V : \substack{\phi \mathbf v = \mathbf x \\ \psi \mathbf v = \mathbf y}\right\} \times \left\{\mathbf w \in W : \substack{\phi' \mathbf w = \mathbf y \\ \psi' \mathbf w = \mathbf z}\right\}\right| \\ 
            &= \sum_{\mathbf y \in \Fq^m} M_q\left(\xleftarrow{\phi} \xrightarrow{\psi}\right)_{\mathbf x \mathbf y} M_q\left(\xleftarrow{\phi'} \xrightarrow{\psi'}\right)_{\mathbf y \mathbf z}
        \end{split}
    \end{equation} 
    and therefore, 
    \begin{equation}
        M \left(\xleftarrow{\phi}\xrightarrow{\psi}\xleftarrow{\phi'}\xrightarrow{\psi'}\right) = M_q\left(\xleftarrow{\phi} \xrightarrow{\psi}\right) M_q\left(\xleftarrow{\phi'} \xrightarrow{\psi'}\right)
    \end{equation}
    As required. 

    We now prove fullness and faithfulness separately. 
\end{proof}

\begin{lemma}
    The functor $M_q$ is full. 
\end{lemma}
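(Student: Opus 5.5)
To show $M_q$ is full, I would take an arbitrary matrix $N \in \Mat_{\mathbb N}(q^n, q^m)$ with entries $N_{\mathbf x \mathbf y} \in \mathbb N$ and build a structured span of $\Fq$-rational loci realising it. The plan is to make the apex a finite set carrying exactly $N_{\mathbf x\mathbf y}$ points over each pair $(\mathbf x, \mathbf y)$. Then I would invoke Proposition~\ref{prop:affqq-is-finset}: any finite subset of some $\Fq^r$ is an $\Fq$-rational locus, and any function between such sets is a polynomial mapping.

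\textbf{Step 1: the apex.} Let $K = \max_{\mathbf x,\mathbf y} N_{\mathbf x\mathbf y}$; if $K=0$, use the empty variety. Choose $s$ with $q^s \ge K$ and fix an injection $\iota: \{1,\dots,K\} \hookrightarrow \Fq^s$. Define
\[ V = \{(\mathbf x, \mathbf y, \iota(i)) \in \Fq^{n+m+s} : 1 \le i \le N_{\mathbf x\mathbf y}\}. \]
This is a finite subset of $\Fq^{n+m+s}$, hence an $\Fq$-rational locus.

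\textbf{Step 2: the legs.} Take $\varphi$ and $\psi$ to be the coordinate projections onto the first $n$ and the next $m$ coordinates. These are restrictions of linear polynomial maps, so $\Fq^n \xleftarrow{\varphi} V \xrightarrow{\psi} \Fq^m$ is a morphism of $\FrSpq$. Because $\iota$ is injective, $\varphi^{-1}(\mathbf x)\cap\psi^{-1}(\mathbf y)$ has exactly $N_{\mathbf x\mathbf y}$ elements. Hence $M_q$ of this span equals $N$, which proves fullness.

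\textbf{The main obstacle.} The only delicate point is ensuring $V$ is genuinely an object of $\Affqq$. That is, it must be the rational locus of an ideal, not merely a subset. This is exactly what Proposition~\ref{prop:affqq-is-finset} supplies. If a direct argument were wanted instead, I would write $V$ as the zero set of $\Gamma_q^{n+m+s}$ together with the product of the polynomials $1-\prod_j (z_j - a_j)^{q-1}$ over all points $\mathbf a \notin V$. Each such factor vanishes at $\mathbf z$ exactly when $\mathbf z \neq \mathbf a$, so the product carves out $V$.
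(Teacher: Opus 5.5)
Your main argument is correct and is essentially the paper's proof. The paper likewise chooses subsets $V_{\mathbf x\mathbf y}\subseteq\Fq^N$ of the prescribed sizes and takes the apex $V=\bigcup_{\mathbf x,\mathbf y}\{\mathbf x\}\times\{\mathbf y\}\times V_{\mathbf x\mathbf y}$, with the two coordinate projections as legs. It then justifies that $V$ is a rational locus by the fact that every subset of $\Fq^r$ is one.

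The optional ``direct argument'' in your last paragraph, however, is wrong and should be dropped or repaired.

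\textbf{The factor.} $1-\prod_j(z_j-a_j)^{q-1}$ vanishes at $\mathbf z\in\Fq^r$ exactly when $z_j\neq a_j$ for \emph{every} $j$, not when $\mathbf z\neq\mathbf a$. For instance, with $q=2$ and $\mathbf a=(0,0)$ it is $1-z_1z_2$, which vanishes only at $(1,1)$. The polynomial vanishing exactly when $\mathbf z\neq\mathbf a$ is the point indicator $\delta_{\mathbf a}(\mathbf z)=\prod_j\bigl(1-(z_j-a_j)^{q-1}\bigr)$.

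\textbf{The product.} Even with $\delta_{\mathbf a}$, the zero set of a product is the \emph{union} of the zero sets of its factors. So $\prod_{\mathbf a\notin V}\delta_{\mathbf a}$ vanishes at every $\mathbf z$ as soon as $V$ misses at least two points.

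\textbf{The fix.} Carving out $V$ needs an intersection. Take the $\delta_{\mathbf a}$, $\mathbf a\notin V$, as separate generators of the ideal, together with $\Gamma_q^{n+m+s}$. Equivalently, use the single polynomial $\sum_{\mathbf a\notin V}\delta_{\mathbf a}$ or $\prod_{\mathbf a\in V}(1-\delta_{\mathbf a})$.

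Your proof already rests on Proposition~\ref{prop:affqq-is-finset}, exactly as the paper's does, so this slip does not affect the main argument.
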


\begin{proof}
    It's clear that the objects in the image of $M_q$ are $q^n$, with $n \in \mathbb N$. Let $A$ be an $\mathbb N$-matrix of dimensions $q^n \times q^m$, which we interpret as a function $A: \Fq^n \times \Fq^m \to \mathbb N$. Pick a number $N \in \mathbb N$ such that 
    \begin{equation}
        q^N > \max_{\substack{\mathbf x \in \Fq^n \\ \mathbf y \in \Fq^m}} A_{\mathbf x \mathbf y}
    \end{equation}
    Then for $\mathbf x \in \Fq^n, \mathbf y \in \Fq^m$, we can find some subset $V_{\mathbf x \mathbf y} \subset \Fq^N$ such that $|V_{\mathbf x \mathbf y}| = A_{\mathbf x \mathbf y}$. Then consider the set $V \subset \Fq^n \times \Fq^m \times \Fq^N$ defined by 
    \begin{equation}
        V = \bigcup_{\substack{\mathbf x \in \Fq^n\\ \mathbf y \in \Fq^m}} \{\mathbf x\} \times \{\mathbf y\} \times V_{\mathbf x \mathbf y}
    \end{equation}
    Since every subset of affine spaces over $\Fq$ is an affine variety (this is a well-known elementary fact, see for example \cite{lidl_finite_1997,glynn_classification_1995}), $V$ is an affine variety. Then consider the structured span 
    \begin{equation}
        \Fq^n \xleftarrow{\pi_x} V \xrightarrow{\pi_y} \Fq^m 
    \end{equation}
    where $\pi_x, \pi_y$ are the restrictions of the projection maps 
    \begin{equation}
        \Fq^n \leftarrow \Fq^n \times \Fq^m \times \Fq^N \rightarrow \Fq^m.
    \end{equation}
    This is a morphism in $\FrSpq$, and  
    \begin{equation}
        M_q\left(\xleftarrow{\pi_x} \xrightarrow{\pi_y}\right)_{\mathbf x \mathbf y} = |V_{\mathbf x \mathbf y}| = A_{\mathbf x \mathbf y}
    \end{equation}
    Thus $A$ is in the image of $M_q$. 
\end{proof}

\begin{lemma}
    The functor $M_q$ is faithful. 
\end{lemma}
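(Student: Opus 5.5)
The plan is to reduce faithfulness of $M_q$ to the folklore equivalence $\Span(\mathsf{FinSet}) \simeq \Mat_{\mathbb N}$. Take two structured spans $\Fq^n \xleftarrow{\varphi} V \xrightarrow{\psi} \Fq^m$ and $\Fq^n \xleftarrow{\varphi'} V' \xrightarrow{\psi'} \Fq^m$ in $\FrSpq$ with $M_q(\xleftarrow{\varphi}\xrightarrow{\psi}) = M_q(\xleftarrow{\varphi'}\xrightarrow{\psi'})$. The goal is a span isomorphism $\sigma: V \to V'$ in $\Affqq$, meaning a polynomial bijection with polynomial inverse satisfying $\varphi' \sigma = \varphi$ and $\psi'\sigma = \psi$.

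First, partition $V$ into the fibres $V_{\mathbf x \mathbf y} = \varphi^{-1}(\mathbf x) \cap \psi^{-1}(\mathbf y)$, and partition $V'$ in the same way. Since $V$ is a finite set, these fibres are finite and pairwise disjoint. Their union is all of $V$, because every point $\mathbf v$ lies in exactly the fibre indexed by $(\varphi \mathbf v, \psi \mathbf v)$. The hypothesis says $|V_{\mathbf x \mathbf y}| = |V'_{\mathbf x \mathbf y}|$ for every $(\mathbf x, \mathbf y)$. So for each pair we choose a bijection $\sigma_{\mathbf x \mathbf y}: V_{\mathbf x \mathbf y} \to V'_{\mathbf x \mathbf y}$ and glue these together into a bijection of sets $\sigma: V \to V'$. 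This $\sigma$ commutes with both legs by construction, since it preserves the fibre index.

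Next, $\sigma$ and $\sigma^{-1}$ must be polynomial mappings. Here I invoke the fact already used in Proposition~\ref{prop:affqq-is-finset}: every function between subsets of affine spaces over $\Fq$ is polynomial. Concretely, one can interpolate using indicator polynomials $\prod_i \bigl(1 - (x_i - a_i)^{q-1}\bigr)$, which equal $1$ at $\mathbf a$ and $0$ at every other point of $\Fq^N$. So $\sigma$ is an isomorphism of varieties, hence a 2-isomorphism of spans, and the two morphisms of $\FrSpq$ coincide.

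The only point needing care is that morphisms of $\FrSpq$ are isomorphism classes of spans, so faithfulness means equal matrices force isomorphic spans, which is exactly what the construction gives. The subtle step is the polynomiality of the glued bijection and its inverse, and the interpolation fact covers it directly. If one prefers, the whole argument is just the restriction of the faithfulness of $\Span(\mathsf{FinSet}) \to \Mat_{\mathbb N}$ along the equivalence $\Affqq \simeq \mathsf{FinSet}$.
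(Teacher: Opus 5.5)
Your proposal is correct and takes essentially the same approach as the paper. Like the paper, you choose a bijection between each pair of fibres $\varphi^{-1}(\mathbf x) \cap \psi^{-1}(\mathbf y)$, glue these into $\sigma$, and use the fact that every function between $\Fq$-rational loci is polynomial to conclude that $\sigma$ is a span isomorphism; your explicit check that $\sigma^{-1}$ is polynomial and your indicator-polynomial interpolation only fill in details the paper leaves implicit.
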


\begin{proof} 
    Suppose two structured spans of rational loci over $\Fq$ which are mapped by $M_q$ to the same $\mathbb N$-matrix: 
    \begin{equation}
        M_q\left(\Fq^n \xleftarrow{\phi} V \xrightarrow{\psi} \Fq^m\right) = M_q\left(\Fq^n \xleftarrow{\phi'} W \xrightarrow{\psi'} \Fq^m\right) 
    \end{equation}
    Then for each pair $\mathbf x \in \Fq^n, \mathbf y \in \Fq^m$, there exists some set-theoretic bijection
    \begin{equation}
        \sigma_{\mathbf x \mathbf y}: \phi^{-1}(\mathbf x) \cap \psi^{-1}(\mathbf y) \to \phi'^{-1}(\mathbf x) \cap \psi'^{-1}(\mathbf y)
    \end{equation}
    These bijections can be glued into 
    \begin{equation}
        \begin{split}
            \sigma: &V \to W \\ 
            &\mathbf v \mapsto \sigma_{\phi \mathbf v, \psi \mathbf v} (\mathbf v)
        \end{split}
    \end{equation}
    Since every function between rational loci over $\Fq$ is a polynomial mapping \cite{glynn_classification_1995}, $\sigma$ is a polynomial mapping. Moreover, by construction the following commutes:
    % https://q.uiver.app/#q=WzAsNCxbMSwwLCJWIl0sWzEsMiwiVyJdLFswLDEsIlxcRnFebiJdLFsyLDEsIlxcRnFebSJdLFswLDIsIlxccGhpIiwyXSxbMCwzLCJcXHBzaSJdLFsxLDIsIlxccGhpJyJdLFsxLDMsIlxccHNpJyIsMl0sWzAsMSwiXFxzaWdtYSIsMV1d
    \[\begin{tikzcd}[ampersand replacement=\&]
        \& V \\
        {\Fq^n} \&\& {\Fq^m} \\
        \& W
        \arrow["\phi"', from=1-2, to=2-1]
        \arrow["\psi", from=1-2, to=2-3]
        \arrow["\sigma"{description}, from=1-2, to=3-2]
        \arrow["{\phi'}", from=3-2, to=2-1]
        \arrow["{\psi'}"', from=3-2, to=2-3]
    \end{tikzcd}\]
    and therefore $\sigma$ is a structured span isomorphism. 
\end{proof}

\section{Proof of GAG as CSP} 

\GAGasCSP*

\begin{proof}
    We know from proposition \ref{prop:quotient-gadget} that the \textit{cospan semantic} of the ideal gadget is 
    \begin{equation}
        \left[\input{./figures/ring-quot-gadget-numbered.tikz}\right] = \left(S_k(n) \twoheadrightarrow S_k(n)/I \twoheadleftarrow S_k(n)\right)
    \end{equation}
    Hitting the cospan of algebras with the functor $\Spec(-)_{\mathsf{qred}}: \CAlgfgq \to \Affqq$ yields equation \ref{eq:ideal-gadget-as-csp-instance}. Then equation \ref{eq:count-csp} follows immediately by composing on both left and right legs with the projection $\Fq^n \xrightarrow{!} *$ where $*$ is the singleton. 
\end{proof}
\section{Proofs about ZH calculus as an extension of GAG}
\label{append:zh}

\iotaZHSound*
\begin{proof}
    We check that the image of each generator of $\ZH$ under $\iota_{\ZH}$ is mapped to the same matrix as under $\ZHsem$.\\
    Z-spider:
    \begin{equation*}
        \semZOne[\input{./figures/Z-spider.tikz}] = \sum_{i \in \Fq} \ket{i}^{\otimes m} \bra{i}^{\otimes n} = \ZHsem[\input{./figures/Z-spider.tikz}]
    \end{equation*}
    H-spider:
    \begin{align}
        &\semZOne[\input{./figures/H-spider-in-gag.tikz}] \\
        =& \frac{1}{\sqrt{q}}\left( \sum_{\substack{j_1, ..., j_m \in \Fq \\ i_1, ..., i_n \in \Fq}} \!\!\!  \ket{j_1} ... \ket{j_m} \bra{i_1} ... \bra{i_n} \otimes \bra{j_1...j_m i_1...i_n} \right)
        \left(\sum_{k \in \Fq} \omega^{\tr(k)} \ket{k} \right)\\
        =&\ \frac{1}{\sqrt{q}}\sum_{\substack{j_1, ..., j_m \in \Fq \\ i_1, ..., i_n \in \Fq}} \!\!\! \omega^{\tr(j_1...j_m i_1...i_n)} \ket{j_1} ... \ket{j_m} \bra{i_1} ... \bra{i_n} \\
        =&\ \ZHsem[\input{./figures/H-spider.tikz}]
    \end{align}
    X-basis state:
    \begin{align}
        \semZOne[\input{./figures/x-basis-state-in-gag.tikz}]
        \!\!\!= q^{-\frac{1}{2}}q \left( \sum_{i \in \Fq} \ket{ji} \bra{i} \right) \ket{1}
        = \sqrt{q}\ket{j}
        = \ZHsem[]
    \end{align}
\end{proof}

\copyDeleteZOneSound*
\begin{proof}
    First we check the \ZOneCopy rule:
    \begin{align}
        \semZOne[L.H.S.] = & \left(\sum_{j,k \in \Fq} \ket{j}\ket{k}\bra{j+k} \right) \left( \sum_{i \in \Fq} \omega^{\tr(i)} \ket{i} \right)\\
        = & \sum_{j,k \in \Fq} \omega^{\tr(j+k)} \ket{j}\ket{k} \\
        = & \left( \sum_{j \in \Fq} \omega^{\tr(j)} \ket{j} \right) \otimes \left( \sum_{k \in \Fq} \omega^{\tr(k)} \ket{k} \right)\\
        = & \semZOne[R.H.S.]
    \end{align}
    Next we check the \ZOneDel rule:
    \begin{equation}
        \semZOne[L.H.S.]
        = \bra{0} \left( \sum_{i \in \Fq} \omega^{\tr(i)} \ket{i} \right)
        = \omega^{\tr(0)}
        = 1
        = \semZOne[R.H.S.]
    \end{equation}
\end{proof}

\pullingZOneOut*
\begin{proof}
    For all $$ in $D$, we can pull them out to the front of the diagram using the monoidal structure.
    Then we only need to consider the instances of the $$ generator.
    This holds trivially if the diagram $D$ contains a single instance of the $$.
    If there are two or more instances, we can use the \ZOneCopy rule to merge multiple instances into one.
    When there are no instances of $$, we can use the \ZOneDel rule in reverse to introduce one copy.
\end{proof}

\zhAmplitudeViaGAG*
\begin{proof}
    From Proposition~\ref{prop:iota-zh-sound}, we have that $\ZHsem[D] = \semZOne[\iota_{\ZH}(D)]$.
    \begin{align*}
        \semZOne[\ \input{./figures/zh-gag-oracle-1.tikz}\ ]
        = \semZOne[\input{./figures/zh-gag-oracle-2.tikz}]
        = q^{-\frac{k}{2}} \sum_{j \in \Fq} \semZOne[\input{./figures/zh-gag-oracle-3.tikz}]\\
        = q^{-\frac{k}{2}} \sum_{j \in \Fq} \semqMat[\input{./figures/zh-gag-oracle-3.tikz}]
    \end{align*}
    Thus, we can compute $\ZHsem[D]$ using $q$ queries to the oracle $\mathcal{O}$ that computes $\semqMat$ on scalar diagrams in $\LangSpq$.
\end{proof}

\end{document}